\newif\ifdraftversion
\def\interior{\righthalfcup}
\def\interior{\mathbin{\lrcorner}}
\DeclareSymbolFont{symbols}{OMS}{cmsy}{m}{n}
\definecolor{labelkey}{rgb}{0,0,.75}
\definecolor{MyGreen}{rgb}{0,.6,.2}
\definecolor{MyDarkBlue}{rgb}{.1,.1,.75}
\DeclareMathAlphabet\mathbfcal{OMS}{cmsy}{b}{n}
\DeclareMathOperator{\ck}{\mathbb{L}}
\renewcommand{\div}{\mathop{\rm div}\nolimits}
\DeclareMathOperator{\tr}{\rm tr}
\DeclareMathOperator{\id}{\rm id}
\def\ip<#1,#2>{\left<#1,#2\right>}
\newcommand{\Reals}{\mathbb{R}}
\newcommand{\Nats}{\mathbb{N}}
\newcommand{\Ints}{\mathbb{Z}}
\def\dc<#1,#2,#3>{\{#1;\;#2,#3\}}
\title{Initial Data for First-order Causal Viscous Conformal Fluids in General Relativity}
\date{\today}
\author{Marcelo Disconzi\thanks{marcelo.disconzi@vanderbilt.edu} \\ Vanderbilt University \and James Isenberg\thanks{isenberg@uoregon.edu} \\ University of Oregon \and David Maxwell\thanks{damaxwell@alaska.edu} \\ University of Alaska Fairbanks}
\begin{document}
\newtheorem{theorem}{Theorem}[section]
\newtheorem{meta-theorem}{Meta-Theorem}[section]
\newtheorem{conjecture}[theorem]{Conjecture}
\newtheorem{problem}[theorem]{Problem}
\newtheorem{proposition}[theorem]{Proposition}
\newtheorem{corollary}[theorem]{Corollary}
\newtheorem{remark}[theorem]{Remark}
\newtheorem{lemma}[theorem]{Lemma}
\theoremstyle{definition}
\newtheorem{definition}[theorem]{Definition}
\numberwithin{equation}{section}

\maketitle

\begin{abstract}
We solve the Einstein constraint equations for a first-order causal viscous relativistic hydrodynamic theory in the case of a conformal fluid. For such a theory, a direct application of the conformal method does not lead to a decoupling of the equations, even for constant-mean curvature initial data. We combine the conformal method applied to a background perfect fluid theory with a perturbative argument in order to obtain the result.
\end{abstract}

\section{Introduction}
This paper is concerned with construction of general-relativistic initial-data sets for the first-order causal conformal viscous hydrodynamics theory introduced in \cite{Bemfica-Disconzi-Noronha-2018}, which corresponds to a relativistic version of the Navier-Stokes equations in the case of a conformal, pure-radiation, fluid. This theory has become known in physics literature as the BDNK theory \cite{Hoult:2020eho}. Such a theory is defined by a stress-energy tensor involving up to first-order derivatives of the fluid variables, see equation \eqref{E:Stress_energy_def} below, and is thus referred to as a first-order theory\footnote{Notice that the order refers to the order of the stress-energy tensor, not of the equations of motion.}. Its local well-posedness, causality, and linear stability have been proven in \cite{Bemfica-Disconzi-Noronha-2018,Disconzi-2019,Bemfica-Disconzi-Rodriguez-Shao-2021}, both in a fixed background and upon coupling to Einstein's equations, assuming in the latter case that an initial-data set exists, i.e., assuming that one can construct initial data for the Einstein-fluid system satisfying the constraint equations. Here, we establish the existence of such initial-data in both the compact and asymptotically Euclidean cases.

The viscous fluid theory we study is defined by the stress-energy tensor
\begin{align}
T^{ab} = \left(\epsilon + B\right)\left(u^a u^b + \frac{1}{3}\Delta^{ab}\right) 
- 2 \eta \sigma^{ab} + \mathcal Q^a u^b + \mathcal Q^b u^a,
\label{E:Stress_energy_def}
\end{align}
where $\epsilon$ is the fluid's equilibrium energy density, $u^a$ is a future-pointing timelike unit vector field representing the fluid's (four-)velocity, and $\Delta^{ab}$ is the the projection onto the space orthogonal to $u^a$ (see equation \eqref{eq:def-delta}). The equation of state is that of pure radiation, namely, $p= \epsilon/3$, where $p$ is the fluid's pressure which has already been substituted for $\epsilon$ in equation \eqref{E:Stress_energy_def}. The quantities  $B$, $\eta$, $Q^b$, and $\sigma^{ab}$  are, respectively, the viscous correction to the (equilibrium) energy density, the coefficient of shear viscosity, the heat flux, and the shear stress tensor (so that the combination $-2\eta \sigma^{ab}$ corresponds to the viscous shear stress). They model viscous corrections to the perfect fluid stress-energy tensor which is recovered upon setting $B$, $\eta$, and $Q^a$ to zero. The precise form of  $B$, $\eta$, $Q^b$, and $\sigma^{ab}$ is given in Section \ref{S:Viscous_conformal_model}. Here, it suffices to point out that they are chosen so that the model \eqref{E:Stress_energy_def} defines a conformal theory, i.e., a theory 
with a trace-free stress-energy tensor
such that upon conformal changes of the metric, $\tilde g_{ab} = e^{-2 \Omega} g_{ab}$ and an appropriate transformation of the fluid variables by powers of $e^{\Omega}$ (see Section \ref{secsec:scaling-viscous}), the divergence of $T^{ab}$ transforms as
\begin{align}
\label{E:Conformal_transformation_div_T}
\widetilde{\nabla}_a \widetilde{T}^{a}_b =e^{4\Omega}\nabla_a T^{a}_b,
\end{align}
so that solutions are preserved under conformal changes.

The need for the construction of a theory of relativistic fluids that accounts for viscous and dissipative\footnote{It is common practice in the community to use the terms viscous and dissipative interchangeably, save for specific cases where a clear distinction is warranted. We will thus speak only of viscosity when referring to what would be more accurately described as viscosity and dissipation.} processes has been known for quite a long time, since out-of-equilibrium effects can occur in relativistic regimes. While theories of relativistic viscous fluids go back to Eckart \cite{Eckart:1940te} and Landau and Lifshitz \cite{Landau-Lifshitz-Book-Fluids-1987} in the '40s and '50s, it was the discovery of the quark-gluon plasma \cite{Adams:2005dq,Adcox:2004mh,Back:2004je,Arsene:2004fa,Gyulassy:2004zy} that has brought the study of relativistic viscous hydrodynamics into the mainstream. The quark-gluon plasma is an exotic state of matter that forms in collisions of heavy ions, and a combination of experimental results, numerical simulations, and theoretical investigations, established that it behaves as a relativistic liquid with viscosity \cite{Romatschke:2017ejr}.
While the quark-gluon plasma can be studied within special relativity, recent studies \cite{Alford:2017rxf,Chabanov:2023blf,Most:2022yhe} strongly suggest that viscous effects are also relevant in mergers of neutron star, thus requiring a fully general-relativistic theory of fluids with viscosity.

Incorporating viscous effects into relativistic theories, however, has proven to be a daunting task. Several relativistic viscous fluid models, including the aforementioned Eckart and Landau and Lifshitz models, have been proven to be acausal \cite{Pichon-1965,Hiscock:1985zz}, thus incompatible with the very foundations of relativity. These models have also been proven to be linearly unstable \cite{Hiscock:1985zz}, where linear stability refers to mode stability of constant states. This is at odds with a basic physical intuition that dissipation due to viscosity should have a damping effect. Over the years, many different theories have been proposed to address these problems. A discussion of these different attempts, as well as the reasons why it seems difficult to construct viscous theories compatible with relativity, is beyond the scope of this article. We refer the interested reader to \cite[Chapter 6]{Rezzolla-Zanotti-Book-2013}, \cite[Section II]{Bemfica:2020zjp}, \cite{Baier:2007ix}, \cite{Romatschke:2017ejr}, \cite[Section 6]{Disconzi-2023-arxiv}, \cite{Gavassino:2021owo} and references therein.

The theory defined by the stress-energy tensor \eqref{E:Stress_energy_def} 
is important, therefore, since the previously mentioned 
local well-posedness and related results 
of \cite{Bemfica-Disconzi-Noronha-2018,Disconzi-2019,Bemfica-Disconzi-Rodriguez-Shao-2021}
show that it is a direct generalization of the Navier-Stokes equations
(in the case of a conformal fluid)
that is fully compatible with the principles of relativity.  It has 
consequently attracted much recent attention. Numerical simulations of its 
special relativistic dynamics have been carried out in \cite{Pandya:2021ief,Pandya:2022pif,Bantilan:2022ech,Bea:2023rru}. In Minkowski background, the corresponding equations of motion have also been showed to be small-data globally well-posed \cite{Sroczinski-2024}. The results \cite{Sroczinski-2024}, in fact, are more general, being applicable also to the theories introduced in \cite{Freistuhler-Temple-2014,Freistuhler-Temple-2017,Freistuhler-Temple-2018,Freistuhler-2020}, which are especially suited for the studies of shocks in relativistic viscous fluids \cite{Freistuhler-2021,Freistuhler-Temple-2023-arxiv,Pellhammer-2023}.
See also the related results \cite{Freistuhler-2023-arxiv,Sroczinski-2019,Sroczinski-2020,Freistuhler-Sroczinski-2021,Freistuhler-Reintjes-Sroczinski-2022,Freistuhler-Sroczinski-2024-arxiv}.

Although we focus only on the case of a conformal fluid,
we remark that the model \eqref{E:Stress_energy_def} has been extended to 
general, non-conformal fluids and arbitrary equations of state in \cite{Bemfica:2020zjp,Bemfica-Disconzi-Noronha-2019-1,Kovtun:2019hdm,Hoult:2020eho} (see also the related works \cite{Hoult:2021gnb,Rocha:2022ind,Taghinavaz:2020axp}). These works establish causality and linear stability of the general theory, with local well-posedness obtained in \cite{Bemfica:2020zjp} based on techniques developed in \cite{Bemfica-Disconzi-Graber-2021} (see \cite{Disconzi-Shao-2023-arxiv} for a simpler proof). Consequently, these extended models
provide a more comprehensive generalization of the Navier-Stokes equations to relativity than the formulation studied here.
For the Einstein-fluid 
systems arising from these models, local well-posedness requires initial data 
satisfying the Einstein constraint equations and hence there remains a 
need for initial data in this broader setting.

Nevertheless, it is reasonable to focus on the conformal case as a
proof of concept for how to address the general setting.  It is the simplest realization of the general theory developed in \cite{Bemfica:2020zjp,Bemfica-Disconzi-Noronha-2019-1,Kovtun:2019hdm,Hoult:2020eho}, and
despite this simplicity it has physical application to the quark-gluon plasma
discussed above. Moreover,
because the theory defined by model \eqref{E:Stress_energy_def} is conformally invariant in spacetime, and because our initial data construction applies the conformal method,
there is the possibility of leveraging conformal structures that might not be present
in the general setting. This indeed turns out to be the case, and our analysis makes
heavy use of the Weyl derivative, a conformally covariant connection on spacetime 
discussed in Section \ref{sec:weyl}.  In fact, the relationship between the conformal method (which involves conformal transformations of objects defined on
initial data sets) and conformal transformations of spacetime itself\kern 0.1em \footnote{We use the single word `conformal' in these two different senses throughout the paper. Both meanings are standard in the literature and they are readily  distinguished in context.} is not fully
understood, and working with a non-trivial conformally invariant matter model
provides an opportunity to shed light on this relationship.  As discussed below,
we find that there is a single key sticking point related to the two different conformal approaches to the trace-free component of the second fundamental form that restricts the scope of our results to the perturbative, small viscosity setting.

An initial-data set for the Einstein-matter equations consists of 
$(h_{ab},K_{ab},\mathcal{F})$ specified on a three dimensional manifold $\Sigma$, where $h_{ab}$ is a Riemannian metric, $K_{ab}$ is a symmetric 2-tensor representing the second fundamental form of $\Sigma$ embedded in an ambient spacetime, and $\mathcal{F}$ represents the various non-gravitational matter fields. If the initial-data set is sufficiently regular and satisfies the Einstein-matter constraint equations
\begin{align}
R_{h} - |K|_{h}^2 + (\tr_{h} K)^2 &= 2\mathcal{E}
&\text{\small [Hamiltonian constraint]}
    \label{eq:ham-constraint}\\
-\nabla^a K_{ab} &= {\mathcal J}_a
&\text{\small [momentum constraint]}
\label{eq:mom-constraint}
\end{align}
then it follows from the work of Choquet-Bruhat \cite{FouresBruhat:1952wg} and from subsequent work
(see, for example \cite{MR3287025} and \cite{Choquet-Bruhat-Book-2009})
that for a wide collection of Einstein-matter field theories (including the Einstein-Maxwell, Einstein-Klein-Gordon and Einstein-perfect fluid theories), there exists a spacetime solution of the corresponding Einstein-matter field equations which is consistent with the initial-data set. Above, 
$R_h$ is the scalar curvature of $h_{ab}$ and
$\mathcal{E}$ and $\mathcal{J}_a$ are the energy and momentum densities determined by $\mathcal{F}$ and $h_{ab}$; see Section \ref{S:Energy_momentum_density} for the specific form of $\mathcal{E}$ and $\mathcal {J}_a$ for conformal fluids.

The conformal method initiated by Lichnerowicz \cite{Lichnerowicz:1944th} and further developed by
Choquet-Bruhat \cite{MR0129941}
and York \cite{york_conformally_1973}
is the most successful technique for constructing and parameterizing initial-data sets satisfying equations \eqref{eq:ham-constraint}--\eqref{eq:mom-constraint}.  
It is especially effective when generating constant mean curvature
(CMC) data sets \cite{isenberg_constant_1995}, those for which the trace of $K_{ab}$ is constant,
and has been successfully been applied to numerous matter models. 
See, e.g., \cite{isenberg_gluing_2005} for constructions involving gravity coupled to Maxwell, Yang-Mills, and Vlasov fields.  In all these cases, the key step is to find 
a conformal scaling of the matter fields such that, as discussed in Section \ref{S:Scaling_matter}, for CMC initial data 
the momentum constraint decouples from the Hamiltonian constraint as a linear PDE which can be solved independently.  

Building on earlier physical intuition from \cite{isenberg_extension_1977} and \cite{kuchar_kinematics_1976}, 
the recent work \cite{IsenbergMaxwellMatter} rigorously demonstrates how to ensure
decoupling of the momentum constraint for matter models arising from a diffeomorphism invariant matter Lagrangian weakly coupled to gravity. 
Although the conformal method has previously been applied to 
Einstein-perfect fluid initial data sets \cite{dain_initial_2002a}\cite{isenberg_gluing_2005}, the scaling
laws introduced in those earlier works are ad-hoc and motivated solely by 
the pragmatic aim of obtaining decoupling.  It turns out that 
these ad-hoc rules do not compose
well with the scaling laws needed for other matter fields and cannot be used,
for example, to generate Einstein-Maxwell-charged dust initial data sets where the 
dust has a prescribed charge-to-mass ratio. By contrast, \cite{IsenbergMaxwellMatter}
develops a completely different set of scaling rules for perfect fluids, each 
with a character depending strongly on the fluid's constitutive relation, 
and having the advantage that they do compose well with other matter models.  The specific cases of dust and stiff fluids are treated in 
some detail in \cite{IsenbergMaxwellMatter}, but an analysis for general 
constitutive relations is nontrivial and only now in progress \cite{AllenMaxwell}.
In Section \ref{secsec:scaling-perfect} we show how to apply the 
\cite{IsenbergMaxwellMatter} construction to obtain scaling laws for 
perfect conformal fluids. 

The existence of a characterizing Lagrangian is not known to be true for viscous fluids (see above references) and moreover, if one were found, it would violate
the weak coupling hypothesis of \cite{IsenbergMaxwellMatter}.  Hence we rely on 
physical intuition to extend the perfect fluid scaling relations of Section \ref{secsec:scaling-perfect} to the viscous case in Section \ref{secsec:scaling-viscous}.
Guided by spacetime conformal transformations that lead to equation \eqref{E:Conformal_transformation_div_T}, we find conformal transformations 
that nearly decouple the momentum constraint with a single term 
involving the trace-free component $\theta_{ab}$ of the second fundamental
remaining as an obstacle. As discussed in Section \ref{secsec:scaling-viscous},
this term scales favorably when viewed from the spacetime perspective, but 
its very different role in the conformal method prevents the desired
decoupling.  Notably, $\theta_{ab}$ arises in our equations via the shear tensor 
$\sigma_{ab}$, which is a standard component of viscous fluid models, 
and hence the obstacle encountered here seems fundamental.

In order to overcome these difficulties, we combine the conformal method 
with a perturbative argument based on the Implicit Function Theorem to solve the constraint equations for the viscous fluid model. We first use the fact that the 
perfect conformal fluid model has excellent solvability properties when using 
the conformal method to construct baseline inviscid solutions of the constraints,
and then perturb the viscosity parameters to obtain nearby viscous solutions. 
While limited in scope, our results are consistent with the underlying physical motivation of model \eqref{E:Stress_energy_def} in that it describes first-order viscous corrections to a perfect fluid.  Once an existence theory for the
conformal method applied to perfect fluids with more general constitutive relations
becomes available, we expect these same techniques will apply in the general setting.
We only treat the case of CMC initial data, in part to highlight the specific 
new difficulties coming from the matter model.  The interested reader could 
modify our arguments with an additional Implicit Function Theorem step as in \cite{MR1179734} to generate
near-CMC initial data as well.

This paper is organized as follows. 
A key role in our discussion of the Einstein-conformal viscous fluid field theory is played by the Weyl covariant derivative. This derivative operator is conformally covariant with conformal transformations of tensor fields, including the spacetime metric. We define the Weyl derivative in Section \ref{sec:weyl}, and derive some of its properties in that section. In Section \ref{S:Conformal_model}, we introduce conformal fluid field theories, both inviscid and those with viscosity. In addition, in that section, we discuss the representation of the spacetime fluid field variables in terms of field variables defined spatially, and derive expressions 
for $\mathcal{E}$ and $\mathcal{J}_a$ in terms of these spatial fluid field variables. Then in Section \ref{S:Scaling_matter} we describe
the conformal method and apply conformal transformations to the spatial fluid field variables, both for the inviscid and the viscous cases. 
The primary results of our work appear in Section \ref{S:Initial_data}
where we establish the existence theory for the Einstein-perfect conformal fluid constraint equations along with the application of the Implicit Function Theorem
to construct solutions of the Einstein-conformal viscous fluid constraint equations for sufficiently small values of the viscosity parameters. These constructions
are performed in both the compact and asymptotically Euclidean settings.
Section \ref{S:Initial_data} concludes with a brief discussion of how 
the coordinate-free initial data constructed here adapts to the 
coordinate-based approach of the evolution results of, e.g., \cite{Bemfica:2020zjp}.

\subsection*{Acknowledgments} MD acknowledges support from NSF grants DMS-2406870 and DMS-2107701, from DOE grant DE-SC0024711, from a Chancellor's Faculty Fellowship, and from a Vanderbilt's Seeding Success Grant. JI acknowledges support from NSF grant PHY-1707427.

\section{Notation and preliminaries}

We assume spacetime is a connected 4-manifold $M$ and write $g_{ab}$ for Lorenztian 
metrics on $M$ with signature $(-,+,+,+$). Note that we are using abstract index notation \cite{Wald:1984rg}
here and elsewhere, except in contexts where it clutters the notation,
e.g. for the volume element $dV_g$ or for certain norms $|T|_g$.  As needed we assume that $M$ is
time-oriented and oriented.  

Spacelike hypersurfaces of $M$ are denoted by $\Sigma$.  These are connected,
oriented 3-manifolds, and many of the constants appearing in our equations
(e.g. the LCBY equations \eqref{eq:LCBY-ham-matter}--\eqref{eq:LCBY-mom-matter}
of the conformal method) are specific to the dimension 3.  Riemannian 
metrics on $\Sigma$ are $h_{ab} = g_{ab}+N_a N_b$, where $N^a$ is the future-pointing normal to $\Sigma$.  Note that $N^a h_{ab}=0$ and we
have arranged so that all tensors intrinsic to $\Sigma$ annihilate
the normal direction when extended to $M$.  In particular, 
at times we require an extension of the exterior derivative
on $\Sigma$ to $M$, and we write it as $d^\Sigma$ to 
emphasize our convention that $N^a(d^\Sigma f)_a=0$.

The second fundamental form $K_{ab}$ of $\Sigma$ is defined 
so that the Lie derivative of $h_{ab}$ in the normal direction is $+2K_{ab}$.
As a consequence, if $X^a$ and $Y^a$ are vector fields on $\Sigma$, then
$X^a \nabla^{(g)}_a Y^b = X^a \nabla^{(h)}_a Y^b + X^a Y^c K_{ac} N^b$.
We drop the notation $(g)$ and $(h)$ on the covariant derivative when it is clear
from context which is meant.

The metric $h_{ab}$ induces the scalar curvature $R_h$, the Laplacian 
$\Delta = h^{ab}\nabla_a \nabla_b$ and the conformal Killing operator 
$\ck$ defined by
\[
(\ck W)_{ab} = \nabla_a W_b + \nabla_b W_a - \frac{2}{3} \nabla_b W^b h_{ab}.
\]

We frequently work with metrics $\tilde g_{ab}$ on $M$ that are conformally related
to $g_{ab}$ via $\tilde g_{ab} = e^{-2\Omega} g_{ab}$ for some function $\Omega$;
the same notation is used for metrics $h_{ab}$ on $\Sigma$.
Operators $\Delta$ and $\ck$ are decorated with tildes if they are associated
with $\tilde h_{ab}$, and index raising or lowering of tensors decorated with 
tildes is performed with metrics with tildes.

\section{Weyl derivative and second fundamental form}\label{sec:weyl}

The stress-energy tensor for the viscous conformal fluid model
is efficiently expressed in terms of a Weyl derivative,
a conformally
covariant differential operator that we briefly review here.  
The general construction is due to \cite{weyl_gravitation_1923} in the context of conformal field theory; 
see also \cite{folland_weyl_1970}\cite{hall_weyl_1992}.
Its specific application to conformal hydrodynamics 
appears in \cite{loganayagam_entropy_2008}. Thus, before presenting the definition of the quantities $B$, $\eta$, $Q^b$, and $\sigma^{ab}$ in Section \ref{S:Viscous_conformal_model}, we introduce the Weyl derivative now and establish some properties that will be proven useful later.

Consider a conformal class $\mathbf g$ of Lorentzian metrics
on $M$.  A conformally weighted tensor field $\mathbf T^{a\cdots}_{b\cdots}$ 
is a map on $\mathbf g$ taking a representative $g_{ab}$ to a tensor $T^{a\cdots}_{b\cdots}$
such that $\tilde g_{ab} = e^{-2 \Omega} g_{ab}$ is taken to $e^{w\Omega}T^{a\cdots}_{b\cdots}$
for some number $w$ called the weight of the tensor field.  For example,
the conformal metric itself has weight $-2$ and its 
spacetime volume form $dV_g$ has weight $-4$ in $4$ spacetime dimensions.
As another example, suppose $X^a$ is a timelike vector field on $M$.
The unit vector field $u^a$ parallel to $X^a$ has weight one since the condition
$\tilde g_{ab}\tilde u^a \tilde u^b = -1$ requires $\tilde u^a = e^{\Omega} u^a$.
Although in each of these examples 
the weight happens to be the number of contravariant indices minus
the number of covariant indices, we encounter below other possibilities 
such as scalar fields with a non-zero weight.

In the context of hydrodynamics a Weyl derivative 
is a differential operator $\mathbfcal{D}_a$ 
determined\footnote{More invariantly, the Weyl derivative is 
determined by $\mathbf g$ and a foliation of timelike curves; the vector
field $X^a$ can be taken to be any future pointing vector field which is tangent to the
foliation.} by a conformal class $\mathbf g$ and a 
time-like vector field $X^a$, and it acts on conformally weighted tensor fields.
To describe this action, we select a representative $g_{ab}$ of $\mathbf g$
and write down a differential operator $\mathcal D_a$ that represents
$\mathbfcal D_a$ with respect to $g_{ab}$ as follows.

Let $u^a$ be the future-pointing unit vector field 
parallel to $X^a$ and define
\begin{equation}\label{eq:A-def}
    \mathcal A_a = u^b \nabla_b u_a - \frac{1}{3} \nabla_b u^b u_a
\end{equation}
using the Levi-Civita connection $\nabla_a$ of $g_{ab}$.
In applications, $u^a$ is the velocity field of a fluid and $\mathcal A_a$
is a measure of its acceleration. Now let $\mathbf T^{a\cdots}_{b\cdots}$ be a conformally weighted tensor field
with weight $w$ and let $T^{a\cdots}_{b\cdots}$ be its representative with
respect to $g_{ab}$.  Then
\begin{equation}\label{eq:weyl-def}
\mathcal D_c T^{a\cdots}_{b\cdots} = \nabla_c T^{a\cdots }_{b\cdots} + w \mathcal A_c T^{a\cdots}_{b\cdots}
+C^a_{cd} T^{d\cdots}_{b\cdots}+\cdots-
C^d_{cb} T^{a\cdots}_{d\cdots}-\cdots
\end{equation}
where
\[
C_{bc}^a = g_{cb}g^{ad}\mathcal A_d - \delta_c^a \mathcal A_b - \delta^a_b \mathcal A_c.
\]
One readily verifies
\begin{equation}\label{eq:A-conf-change}
    \widetilde{\mathcal  A}_a = \mathcal A_a - (d\Omega)_a
\end{equation} 
under the conformal transformation $\tilde g_{ab} = e^{-2\Omega} g_{ab}$
and that consequently
\begin{equation}\label{eq:Weyl-conf}
\widetilde {\mathcal D}_c \widetilde T^{a\cdots}_{b\cdots} = e^{w\Omega} 
{\mathcal D}_c T^{a\cdots}_{b\cdots}.
\end{equation}
That is, the Weyl derivative determines a conformally weighted tensor 
$\mathbfcal{D}_c \mathbf T^{a\cdots}_{b\cdots}$ with the same weight
as $\mathbf T^{a\cdots}_{b\cdots}$.

The Weyl derivative is compatible with the conformal class in the sense
that $\mathcal D_c g_{ab} = 0$.  Although it is not true in general that 
$\mathcal D_c u^a$ vanishes, this quantity nevertheless enjoys
a number of useful properties. To describe these, we introduce the projection
\begin{equation}\label{eq:def-delta}
\Delta_a^b = \delta_a^b + u_au^b    
\end{equation}
onto the subspace orthogonal to $u^a$, so $u^a \Delta_a^b=0$ and
$u_b \Delta_a^b=0$. Direct computation shows
\begin{equation}\label{eq:D-u}
\mathcal D_a u^b = \Delta_a^c \nabla_c u^b - \frac{1}{3}\nabla_c u^c \Delta_a^b
\end{equation}
from which we obtain the useful identities
\begin{align}
    u^a \mathcal D_a u^b &= 0\\
    u_b \mathcal D_a u^b &= 0\\
    \mathcal D_a u^a &= 0;\label{eq:shear-trace-free}
\end{align}
and deriving this last equation requires the observation $(\nabla_a u^b)u_b=0$.

The shear tensor of $u^a$ defined by
\begin{equation}\label{eq:shear-v1}
\sigma_{ab} = \frac12 \left( \Delta_a^c \nabla_c u_b + \Delta_b^c \nabla_c u_a  \right)
- \frac{1}{3}\nabla_c u^c \Delta_{ab}
\end{equation}
plays an important role in viscous fluid models generally. Comparing
equation \eqref{eq:shear-v1} with equation \eqref{eq:D-u} we find
that the shear is easily written in terms of the Weyl derivative:
\begin{equation}\label{eq:shear-v2}
\sigma_{ab} = \frac{1}{2}\left( \mathcal D_a u_b + \mathcal D_b u_a\right).
\end{equation}

The vector field $u^a$ defining the Weyl derivative 
is implicit in the notation and we introduce the 
following additional (implicit) notation for derivatives
parallel to and orthogonal to $u^a$:
\begin{equation}\label{eq:D-angle-D}
\begin{aligned}
\mathcal D &= u^a \mathcal D_a\\
\mathcal D_{\left<a\right>} &= \Delta^b_a \mathcal D_b
\end{aligned}
\end{equation}
where $\Delta^b_a$ is the projection introduced in equation \eqref{eq:def-delta}.

Although the special case of the Weyl derivative given above is well-adapted for
use in hydrodynamics, the Weyl derivative admits a more general definition in terms of an arbitrary
one form $\mathcal A_a$ that conformally transforms according to the rule
\eqref{eq:A-conf-change}.  Equation \eqref{eq:weyl-def} applies equally in
this case to define the Weyl derivative, which again determines a well-defined
map on conformally weighted tensor fields and for which $\mathbfcal D_a \mathbf g = 0$.
A (hydrodynamic) Weyl derivative on $M$ induces a (general) Weyl derivative 
on hypersurfaces of $M$, which we examine next.

Let $\Sigma$ be a spacelike hypersurface of $M$ with future-pointing timelike normal $N^a$.  
The metric $g_{ab}$
and the 1-form $\mathcal A_a$ induce a metric $h_{ab}$ and a 1-form $\mathcal A_a^\Sigma$
on $\Sigma$ by pulling back via the natural embedding, and we extend 
these tensors in non-tangential directions by declaring that they annihilate $N^a$.
The conformal transformation 
$g_{ab}\mapsto \tilde g_{ab} = e^{-2\Omega}g_{ab}$ leads to 
$\tilde h_{ab} = e^{-2\Omega}h_{ab}$ and 
$\widetilde {\mathcal A}_a^{\,\Sigma} = \mathcal A_a^\Sigma - (d^\Sigma \Omega)_a$
where $(d^\Sigma\Omega)_a := h^b_a (d\Omega)_b$.
Hence, when restricted to $\Sigma$, $\mathcal A_a^\Sigma$ has the conformal transformation rule
\eqref{eq:Weyl-conf} and
$\mathbfcal D_a$ restricts to a generalized Weyl derivative $\mathbfcal D^\Sigma_a$
on $\Sigma$.

We define the Weyl second fundamental form $\mathcal K_{ab}$ 
analogously to the usual metric construction.
If $X^a$ and $Y^b$ are vector fields tangent to $\Sigma$ we set
\[
\mathcal K_{ab} X^a Y^b = - g_{cb} N^c X^a \mathcal D_a Y^b
\]
and we extend it by declaring that it annihilates $N^a$.
\begin{lemma}\label{lem:weyl-K} The Weyl second fundamental form satisfies
\begin{equation}\label{eq:weyl-K}
\mathcal K_{ab} = K_{ab} - h_{ab} \mathcal A_c N^c.
\end{equation}
where $K_{ab}$ is the second fundamental form of $\,\Sigma$. 
In particular it is a symmetric tensor field of weight $-1$.
Moreover, for a vector $X^a$ on $\Sigma$ and a vector field $Y^b$
on $\Sigma$ of any weight,
\begin{equation}\label{eq:K-split-normal}
X^a \mathcal D_a Y^b = X^a \mathcal D_a^\Sigma Y^b + \mathcal{K}_{ac}X^aY^c N^b
\end{equation}
and
\begin{equation}\label{eq:deriv-N}
 X^a g_{bc} \mathcal D_a N^b Y^c = \mathcal K_{ab} X^a Y^b.
\end{equation}
\end{lemma}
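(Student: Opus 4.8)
The plan is to derive all three identities by directly unwinding the definition \eqref{eq:weyl-def} of the Weyl derivative, using only the formula for $C^a_{bc}$, the Gauss-type splitting $X^a\nabla^{(g)}_a Y^b = X^a\nabla^{(h)}_a Y^b + X^aY^cK_{ac}N^b$ recorded in the preliminaries, and the fact that $N_aN^a=-1$. First I would prove \eqref{eq:weyl-K}. Fix $X^a,Y^b$ tangent to $\Sigma$, regard $Y^b$ as conformally weighted of some weight $w$, expand $\mathcal D_aY^b$ via \eqref{eq:weyl-def}, and contract with $-g_{cb}N^cX^a$. The weight term $w\mathcal A_aY^b$ drops out because $g_{cb}N^cY^b = N_bY^b = 0$; in particular $\mathcal K_{ab}$ is independent of the weight assigned to $Y$. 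The Levi-Civita term, after the Gauss splitting and $N_bN^b=-1$, contributes exactly $K_{ac}X^aY^c$. The remaining $C^b_{ad}$ contribution, contracted against $-g_{cb}N^cX^aY^d$ and with everything killed by $N_aX^a=N_aY^a=0$ discarded, collapses to $-h_{ac}X^aY^c\,\mathcal A_eN^e$. Adding these gives \eqref{eq:weyl-K}. Symmetry is then immediate since $K_{ab}$ and $h_{ab}$ are symmetric, and the weight $-1$ is transparent from the intrinsic definition (the right-hand side of the defining relation scales by $e^{-2\Omega}\cdot e^{\Omega}=e^{-\Omega}$ when $X,Y$ have weight zero), or can be read off from \eqref{eq:weyl-K} together with the standard conformal transformation law for the second fundamental form.

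For the splitting identity \eqref{eq:K-split-normal} I would compare the two Weyl derivatives term by term: $\mathcal D_a$ is built from the Levi-Civita connection of $g$, the one-form $\mathcal A_a$, and the tensor $C^b_{ad}$, while $\mathcal D^\Sigma_a$ is built from the Levi-Civita connection of $h$, the pulled-back one-form $\mathcal A^\Sigma_a$, and the analogous tensor $(C^\Sigma)^b_{ad}$. Contracting the difference with a tangent $X^a$: the connection terms differ by the Gauss term $K_{ac}X^aY^cN^b$; the weight terms $w\mathcal A_aY^b$ and $w\mathcal A^\Sigma_aY^b$ cancel because $\mathcal A^\Sigma_a$ agrees with $\mathcal A_a$ on vectors tangent to $\Sigma$; and the difference of the $C$-terms, contracted also with a tangent $Y^d$, reduces — using $h_{ab}=g_{ab}+N_aN_b$, the dual relation $h^{ab}=g^{ab}+N^aN^b$, and tangency — to $-h_{ac}X^aY^c\,\mathcal A_eN^e\,N^b$. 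Summing and invoking \eqref{eq:weyl-K} yields $\mathcal K_{ac}X^aY^cN^b$. This is the step that demands the most care, since one must keep straight which metric and which one-form enters each piece and check that the weight-dependent contributions really do cancel; it is also the only place where the intrinsic derivative $\mathcal D^\Sigma$ is genuinely used, though even here there is no conceptual obstacle, only bookkeeping.

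For \eqref{eq:deriv-N} I would apply \eqref{eq:weyl-def} to $N^b$, which has weight one, and contract with $g_{bc}Y^cX^a$ for tangent $X^a,Y^c$. The weight term $\mathcal A_aN^b$ dies against $g_{bc}Y^c=Y_b$ since $N^bY_b=0$. The Levi-Civita term gives $X^aY^c\nabla_aN_c$, which equals $K_{ac}X^aY^c$ — this follows either from the Gauss splitting or, more directly, by differentiating the identity $N_bY^b=0$ in the direction $X^a$. The $C$-term once more contributes $-h_{ac}X^aY^c\,\mathcal A_eN^e$. Adding and using \eqref{eq:weyl-K} a final time gives $\mathcal K_{ab}X^aY^b$. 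None of the three computations presents a real difficulty; the work is entirely in tracking projections onto and off of $\Sigma$, with the comparison of $\mathcal D$ and $\mathcal D^\Sigma$ in the middle identity being the most delicate.
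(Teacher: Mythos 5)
Your proposal is correct and follows essentially the same route as the paper: expand the Weyl derivative via its definition, use the Gauss splitting of the Levi--Civita connection together with the decomposition of $\mathcal A_a$ into $\mathcal A^\Sigma_a$ and its normal part, and read off \eqref{eq:weyl-K} and \eqref{eq:K-split-normal} (the paper merely reverses your order, obtaining the full splitting first and then contracting with $N^b$ to get \eqref{eq:weyl-K}). The only minor divergence is in \eqref{eq:deriv-N}, where the paper deduces it in one line from $\mathcal D_a g_{bc}=0$ applied to $g_{bc}Y^bN^c=0$, whereas you re-expand the definition on $N^b$ directly; both computations are valid.
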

\begin{proof}
To establish equation \eqref{eq:weyl-K} it suffices to show that it holds for vectors
tangential to $\Sigma$ since both sides vanish when applied in the normal direction.
From equation \eqref{eq:weyl-def}, if $X^a$ and $Y^b$ are vector fields along $\Sigma$,
and if $Y^b$ has weight $w$,
\begin{equation}
X^a \mathcal D_a Y^b = X^a \nabla_a Y^b + w X^a \mathcal A_a Y^b 
+ g_{ac} X^a Y^c g^{bd}\mathcal A_d - X^b \mathcal A_a Y^a - Y^b \mathcal A_a X^a.
\end{equation}
From the above equation and the relations
\begin{align*}
    X^a \nabla_a Y^b &= X^a \nabla^\Sigma_a Y^b + K_{ac} X^a Y^c N^b\\
    g^{bd}\mathcal A_d &= -N^b \mathcal A_c N^c + h^{bd}\mathcal A^\Sigma_d    
\end{align*}
along with the trivial identities $\mathcal{A}_a X^a = \mathcal{A}^\Sigma_a X^a$,
$\mathcal{A}_a Y^a = \mathcal{A}^\Sigma_a Y^a$ and $g_{ab} X^a Y^b = h_{ab}X^a Y^b$ 
we find
\begin{align*}
X^a \mathcal D_a Y^b &= \left[  
X^a \nabla^\Sigma_a Y^b + w X^a \mathcal A_a^\Sigma Y^b
+h_{ac} X^a Y^c h^{bd}\mathcal A^\Sigma_d - X^b \mathcal A^\Sigma_a Y^a - Y^b \mathcal A_a^\Sigma X^a
\right]
 + \left(K_{ad} - h_{ad} \mathcal A_c N^c\right) X^a Y^d N^b\\
 &=  X^a \mathcal D^\Sigma_a Y^b + (K_{ad} - h_{ad} \mathcal A_c N^c) X^a Y^d N^b.
\end{align*}
Equation \eqref{eq:weyl-K} follows from taking a dot product with $N^b$
and equation \eqref{eq:K-split-normal} is an immediate consequence.
To demonstrate equation \eqref{eq:deriv-N} we note that since $Y_c N^c= 0$
and since $\mathcal D_a g_{bc} =0$, 
\[
0 = X^a \mathcal D_a (g_{bc} Y^b N^c) = X^a g_{bc} (\mathcal D_a Y^b) N^c
+ X^a g_{bc} Y^b (\mathcal D_a N^c).
\]
Equation \eqref{eq:deriv-N} follows from the above and the
definition of $\mathcal K_{ab}$.
\end{proof}

\section{Conformal fluid model}\label{S:Conformal_model}

Our method for constructing initial-data sets for the Einstein-viscous fluid model involves perturbing off data for the Einstein-perfect fluid model. Thus, we begin by discussing the perfect fluid case.

\subsection{Perfect conformal fluids}
The dynamical variables of a perfect fluid can be taken to be
a timelike unit vector $u^a$ 
and a scalar rest energy density
$\epsilon$.  In terms of these variables, the associated
stress-energy tensor is 
\[
T^{ab} = \epsilon u^a u^b + p (g^{ab}+u^au^b)
\]
where the pressure $p$ is determined
by a constitutive relation $p = p(\epsilon)$ that characterizes 
the fluid. The direction $u^a$ determines the fluid's rest frame in which 
an observer sees no momentum flux and is therefore co-moving with the fluid
energy. 

The fluid is imagined to be consisting of identical particles with
number density $n$ measured in the fluid rest fame. 
The rest energy density and rest number density
are not independent and, as discussed in
Section \ref{secsec:scaling-perfect} below, 
the constitutive relation for the fluid can 
alternatively be specified in the form $\epsilon = \epsilon(n)$,
in which case
\begin{equation}\label{eq:pressure-from-n}
    p(n) = n \epsilon'(n) - \epsilon(n).        
\end{equation}
Note that this is the first law of thermodynamics for
an isentropic fluid (see \cite{misner_gravitation_2017} equations (22.6) and (22.7a)).  
The number flux of the fluid is
$j^a = n u^a$ and we can hence interpret $u^a$
as the velocity of the fluid.

The fluid temperature $T$ is related to the pressure $p$ and 
number density $n$ via
\begin{equation}\label{eq:ideal-gas-law}
p = n T    
\end{equation}
in units where Boltzmann's constant is 1. By contrast, the rest entropy 
density of the fluid is dynamically uninteresting and we can
take it to be constant.

The stress-energy tensor associated with a conformally invariant
Lagrangian is trace free, $g_{ab} T^{ab}=0$, which can
be derived by varying the Lagrangian with respect to a conformal factor.  
As a consequence, a conformal perfect fluid satisfies $\epsilon=3p$,
which is its constitutive relation, and its stress-energy
tensor simplifies to
\[
T^{ab} = \epsilon \left(u^a u^b + \frac{1}{3}\Delta^{ab}\right).
\]

Equation \eqref{eq:pressure-from-n} 
implies $n\epsilon'(n) = \frac{4}{3}\epsilon(n)$ if $\epsilon=3p$
and hence
\[
\epsilon(n) = 3 k_\epsilon n^{\frac{4}3}
\]
for some constant $k_\epsilon$. Equations \eqref{eq:pressure-from-n}  and 
\eqref{eq:ideal-gas-law} then imply
\begin{align*}
p &= k_\epsilon n^\frac43\\
T &= k_\epsilon n^\frac13.
\end{align*}
As an important consequence, 
the rest energy density is related to the temperature via
$\epsilon = c_\epsilon T^4$ with $c_\epsilon = (3/k_\epsilon^3)$.

\subsection{Viscous conformal fluid}\label{S:Viscous_conformal_model}

We are now ready to discuss details of the stress-energy tensor \eqref{E:Stress_energy_def}, which we state here again for the reader's convenience:
\begin{align}
T^{ab} = \left(\epsilon + B\right)\left(u^a u^b + \frac{1}{3}\Delta^{ab}\right) 
- 2 \eta \sigma^{ab} + \mathcal Q^a u^b + \mathcal Q^b u^a,
\nonumber
\end{align}
where $\Delta^{ab}$ is the projection from equation \eqref{eq:def-delta} and 
$\sigma^{ab}$ is the shear tensor from equation \eqref{eq:shear-v2}.
We recall from the introduction that $B$, $\eta$, and $Q^b$, are, respectively, the viscous correction to the equilibrium energy density, the coefficient of shear viscosity, and the heat flux, and that the combination $-2\eta \sigma^{ab}$ corresponds to the viscous shear stress. As discussed in \cite{Bemfica-Disconzi-Noronha-2018}, physical considerations and the requirement that one can enforce the transformation law \eqref{E:Conformal_transformation_div_T} imply that
 $B$ and $Q^a$ are given by
\begin{align*}
B = 3\chi \frac{u^a \mathcal D_a T}{T},
\\
\mathcal Q^a = \lambda \frac{\Delta^{ac} \mathcal D_c T}{T},
\end{align*}
where $T$ is the equilibrium temperature, and $\chi$ and $\lambda$ are two further (i.e., in addition to $\eta$) viscosity coefficients; $\lambda$, $\eta$ and $\chi$ each have the form
$c T^3$ for some constant $c=c_{\chi/\eta/\lambda}$ specific to the coefficient.
Recalling that
$\epsilon=c_\epsilon T^4$ for a perfect conformal fluid 
we can write the stress-energy tensor in terms of $\epsilon$ as
\begin{equation}\label{eq:T-vcf}
    T^{ab} = \left(\epsilon + \frac{3\chi}{4\epsilon} D \epsilon\right) \left(u^a u^b + \frac13 \Delta^{ab}\right) -2\eta \sigma^{ab} 
    + \frac{\lambda}{4\epsilon}\left(u^a \mathcal D^{\left<b\right>} \epsilon +  u^b \mathcal D^{\left<a\right>}\epsilon \right)        
\end{equation}
where $\chi = c_\chi (\epsilon/c_\epsilon)^{3/4}$, $\eta = c_\eta (\epsilon/c_\epsilon)^{3/4}$ 
and $\lambda = c_\lambda (\epsilon/c_\epsilon)^{3/4}$
and where we recall the notation \eqref{eq:D-angle-D}
for $\mathcal D_{\left<a\right>}\epsilon$ and $\mathcal D \epsilon$.
More explicitly, in terms of temperature,
\begin{equation}\label{eq:T-vcf-temp}
    T^{ab} = \left(c_\epsilon T^4 + 3c_\chi T^2\, \mathcal D T\right) \left(u^a u^b + \frac13 \Delta^{ab}\right) -2c_\eta T^3 \sigma^{ab} 
    + c_\lambda T^2 \left(u^a \mathcal D^{\left<b\right>} T +  u^b \mathcal D^{\left<a\right>} T \right).   
\end{equation}
Although equations \eqref{eq:T-vcf} and \eqref{eq:T-vcf-temp} are physically equivalent descriptions in view of the relation $\epsilon=c_\epsilon T^4$, in our case it will frequently be more convenient to work with \eqref{eq:T-vcf-temp}, in which case the fluid's dynamical variables consist of  $T$ and $u^a$.  Nevertheless,
we freely transfer between $\epsilon$ and $T$ as desired.

\subsection{Energy and momentum density}\label{S:Energy_momentum_density}

The right-hand sides of the constraint equations require 
the energy and momentum densities 
\begin{align*}
\mathcal E &= N^a N^b T_{ab}\\
\mathcal J_b &= -N^a h_b^c T_{ac}
\end{align*}
and we wish to express these quantities in terms of data
associated with the slice $\Sigma$.

To begin, the spacetime metric $g_{ab}$
induces the Riemannian metric $h_{ab}$ and the second fundamental
form $K_{ab}$. The functions $\epsilon$ and $T$ restrict naturally
to $\Sigma$, but $T_{ab}$ also involves their first derivatives. 
The tangential derivatives can be computed from the 
restrictions to $\Sigma$, but we require transverse derivatives,
and we find it useful to encode these by working with 
$\mathcal D\epsilon = u^a\mathcal D_a \epsilon$ and $\mathcal DT = u^a \mathcal D_a T$.  
This leaves us with the the fluid velocity $u^a$ which can be decomposed
\begin{equation}
    u^a = \gamma N^a + v^a
\end{equation}
where $N^a$ is the normal to $\Sigma$, $v^a$ is tangential to $\Sigma$,
and where $\gamma^2 = 1+|v|_h^2$ ensures that $u^a$ has unit length.
Note that $w^a = v^a/\gamma$ is the velocity of the fluid seen by an observer moving
orthogonal to $\Sigma$ and that $\gamma$ can be rewritten 
$\gamma = 1/\sqrt{1-|w|_h^2}$, which is the usual 
relativistic expansion factor. We require transverse derivatives of 
$u^a$ as well, and these are captured by the 1-form $\mathcal A_a$
from equation \eqref{eq:A-def} associated 
with the Weyl derivative.  Indeed, $\mathcal A_a$ is a measure
of the fluid acceleration and it induces $\mathcal A^\Sigma_a$ 
on $\Sigma$ and therefore also a Weyl derivative $\mathcal D_a^\Sigma$
as discussed at the end of Section \ref{sec:weyl}.
In this section we show how the energy and momentum densities
for the conformal fluid
can be written compactly in terms of the slice intrinsic fields
$h_{ab}$, $v^a$ and $\epsilon$ along with the quantities
$K_{ab}$, $\mathcal A^\Sigma_a$ and $\mathcal D\epsilon$, which
serve as proxies for their time derivatives. See also Section 
\ref{S:Cauchy} where we explicitly show how to extract 
from $(h_{ab}, K_{ab}, \epsilon, \mathcal D\epsilon, v^a, \mathcal A^\Sigma_a)$
the values and time derivatives of the quantities needed for solving the 
associated Cauchy problem.

To compute $\mathcal E$ and $\mathcal J_a$ we require
suitable normal and tangential contractions 
with each of the tensors
\begin{equation}
    \label{eq:three-T-pieces}
    \left(u_a u_b +\frac{1}{3}\Delta_{ab}\right),\quad \sigma_{ab},\quad \text{and}\quad 
    \left(u_a \mathcal D_{\left<b\right>}\epsilon 
    + u_b \mathcal D_{\left<a\right>}\epsilon
    \right)
\end{equation}
appearing in $T_{ab}$ from equation \eqref{eq:T-vcf}.
Of these, the first and last are easiest and are handled by the following lemma.
\begin{lemma}\label{lem:bulk-heat}
    The following relations hold:
    \begin{align}
    N^aN^b\left(u_a u_b+\frac13\Delta_{ab}\right) &= 1+\frac43 |v|_h^2,\label{eq:E-bulk}\\
    -N^a h_b^c \left(u_a u_c+\frac13\Delta_{ac}\right) &= \frac43\gamma v_b,\label{eq:J-bulk}
    \end{align}
and
    \begin{align}
        N^a N^b( u_a \mathcal D_{\left<b\right>}\epsilon + u_b \mathcal D_{\left<a\right>}\epsilon) &= 2 v^a(  \mathcal D^\Sigma_a \epsilon + v_a \mathcal D\epsilon),\label{eq:E-heat}\\
       -N^a h_b^c( u_a \mathcal D_{\left<c\right>}\epsilon + u_c \mathcal D_{\left<a\right>}\epsilon) &= \gamma\left(\delta^c_b + \frac{v_b v^c}{\gamma^2}\right)\left(\mathcal D^\Sigma_c\epsilon + v_c \mathcal D\epsilon\right) \label{eq:J-heat}.
    \end{align}    
\end{lemma}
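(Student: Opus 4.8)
The plan is to reduce every identity to the orthogonal decomposition of the fluid velocity relative to $\Sigma$. Write $u^a = \gamma N^a + v^a$ with $v^a$ tangent to $\Sigma$, so that $N_a v^a = 0$, $N^a N_a = -1$, and $\gamma^2 = 1 + |v|_h^2$; equivalently $N^a = \gamma^{-1}(u^a - v^a)$. First I would record the elementary contractions $N^a u_a = -\gamma$ and $h_b^c u_c = v_b$ (the latter since $v^a$ is tangential and $h_b^c N_c = 0$). Together with $\Delta_{ab} = g_{ab} + u_a u_b$ these settle the two ``bulk'' identities at once, with no reference to the Weyl derivative: for \eqref{eq:E-bulk} one has $N^a N^b\Delta_{ab} = -1 + \gamma^2 = |v|_h^2$, which added to $(N^a u_a)^2 = \gamma^2$ gives $1 + \tfrac43|v|_h^2$; for \eqref{eq:J-bulk} one uses $N^a h_b^c g_{ac} = h_b^c N_c = 0$ and $(N^a u_a)(h_b^c u_c) = -\gamma v_b$ to get $-N^a h_b^c(u_a u_c + \tfrac13\Delta_{ac}) = \tfrac43\gamma v_b$.

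For the heat-flux identities \eqref{eq:E-heat}--\eqref{eq:J-heat} the first step is to unpack the angle-bracket notation of \eqref{eq:D-angle-D}: since $\Delta_a^b = \delta_a^b + u_a u^b$, one has $\mathcal D_{\langle a\rangle}\epsilon = \mathcal D_a\epsilon + u_a\,\mathcal D\epsilon$. The only mildly substantive point is that, acting on a scalar field, the tangential part of the spacetime Weyl derivative is the intrinsic one, $h_b^c\mathcal D_c\epsilon = \mathcal D^\Sigma_b\epsilon$: this follows from $\mathcal D_c\epsilon = \nabla_c\epsilon + w\mathcal A_c\epsilon$ (definition \eqref{eq:weyl-def} on weight-$w$ scalars) together with $h_b^c\nabla_c\epsilon = (d^\Sigma\epsilon)_b$ for functions and the fact that $h_b^c\mathcal A_c = \mathcal A^\Sigma_b$ is the defining pullback of $\mathcal A^\Sigma$ from Section~\ref{sec:weyl}; the weight-$w$ term survives but is precisely the corresponding term of $\mathcal D^\Sigma_b\epsilon$, so the final formulas turn out weight independent. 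Consequently $v^a\mathcal D_a\epsilon = v^a\mathcal D^\Sigma_a\epsilon$, and contracting with $N^a = \gamma^{-1}(u^a - v^a)$ gives $N^a\mathcal D_a\epsilon = \gamma^{-1}\bigl(\mathcal D\epsilon - v^a\mathcal D^\Sigma_a\epsilon\bigr)$.

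Combining these, I would compute the two building blocks $h_b^c\mathcal D_{\langle c\rangle}\epsilon = \mathcal D^\Sigma_b\epsilon + v_b\,\mathcal D\epsilon$ and $N^a\mathcal D_{\langle a\rangle}\epsilon = N^a\mathcal D_a\epsilon - \gamma\,\mathcal D\epsilon = -\gamma^{-1}\bigl(|v|_h^2\,\mathcal D\epsilon + v^a\mathcal D^\Sigma_a\epsilon\bigr)$, the last equality using $\gamma^{-1} - \gamma = -\gamma^{-1}|v|_h^2$. Since $u_a\mathcal D_{\langle b\rangle}\epsilon + u_b\mathcal D_{\langle a\rangle}\epsilon$ is symmetric in $a,b$, contracting it with $N^aN^b$ yields $2(N^a u_a)(N^b\mathcal D_{\langle b\rangle}\epsilon) = 2\bigl(|v|_h^2\mathcal D\epsilon + v^a\mathcal D^\Sigma_a\epsilon\bigr) = 2v^a(\mathcal D^\Sigma_a\epsilon + v_a\mathcal D\epsilon)$, which is \eqref{eq:E-heat}. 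For \eqref{eq:J-heat} I would expand $-N^a h_b^c(u_a\mathcal D_{\langle c\rangle}\epsilon + u_c\mathcal D_{\langle a\rangle}\epsilon)$ into $\gamma\,h_b^c\mathcal D_{\langle c\rangle}\epsilon$ minus $v_b\,N^a\mathcal D_{\langle a\rangle}\epsilon$, substitute the two building blocks, and check that the result equals $\gamma(\delta_b^c + \gamma^{-2}v_b v^c)(\mathcal D^\Sigma_c\epsilon + v_c\mathcal D\epsilon)$ by expanding the right-hand side.

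I do not anticipate a genuine obstacle: the computation is linear in $\mathcal D\epsilon$, $\mathcal D^\Sigma_a\epsilon$ and builds on the single splitting $u^a = \gamma N^a + v^a$ and $\Delta_{ab} = g_{ab} + u_a u_b$. The one place that must be handled with care — and hence the ``main obstacle'' in spirit — is the normal/tangential bookkeeping for \eqref{eq:E-heat}--\eqref{eq:J-heat}, in particular verifying that the weight-dependent acceleration contributions to $\mathcal D_c\epsilon$ reassemble into the intrinsic operators $\mathcal D^\Sigma_a$ and $\mathcal D = u^a\mathcal D_a$, so that no explicit weight $w$ appears in the stated formulas; everything else is direct contraction.
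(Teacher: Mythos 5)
Your proposal is correct and follows essentially the same route as the paper: decompose $u^a=\gamma N^a+v^a$, compute the elementary contractions with $N^a$ and $h_b^c$, and use $h_b^c\mathcal D_c\epsilon=\mathcal D^\Sigma_b\epsilon$ together with $N^a\mathcal D_{\left<a\right>}\epsilon=-\gamma^{-1}\left(v^a\mathcal D^\Sigma_a\epsilon+|v|_h^2\,\mathcal D\epsilon\right)$, which is exactly the paper's key intermediate identity. Your explicit justification that the weight-$w$ acceleration term reassembles into the intrinsic Weyl operator is a point the paper leaves implicit, but it does not change the argument.
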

\begin{proof}
Since $u^a \Delta_{ab} = 0$ and since $u^a = \gamma N^a + v^a$,
\begin{equation}\label{eq:N-Delta}
    N^a \Delta_{ab} =     
     -\frac{v^a}{\gamma} \Delta_{ab} =
     -\frac{v^a}{\gamma}(g_{ab} + u_a u_b) = -\frac{1}{\gamma}(v_b +|v|_h^2 u_b).
\end{equation}
Consequently 
\begin{equation}\label{eq:NN-Delta}
    N^a N^b \Delta_{ab} = |v|_h^2
\end{equation}
and
\begin{equation}\label{eq:Nh-Delta}
    -N^a h^c_b \Delta_{ac} = \frac{1}{\gamma} h^c_b( v_c +|v|_h^2 u^b) 
    = \frac{1}{\gamma}( 1+|v|^2_h)v_b = \gamma v_b.
\end{equation}
Equations \eqref{eq:E-bulk} and \eqref{eq:J-bulk} follow from
equations \eqref{eq:NN-Delta} and \eqref{eq:Nh-Delta} along with the
observations $N^a N^b u_a u_b =  \gamma^2$ and $-N^a h_b^c u_a u_c = \gamma v_b$.

Turning to the heat flux terms (i.e., the third tensor in \eqref{eq:three-T-pieces}), equation \eqref{eq:N-Delta} implies
\begin{equation}\label{eq:N-in-De}
N^a \mathcal D_{\left<a\right>}\epsilon = 
N^a \Delta_a^b\mathcal D_{b}\epsilon = -\frac{1}{\gamma}(v^a + |v|_h^2 u^a) \mathcal D_a\epsilon 
= -\frac{1}{\gamma}(v^a \mathcal D_a^\Sigma\epsilon + |v|_h^2 \mathcal D\epsilon).
\end{equation}
Hence
\begin{equation}\label{eq:Nh-heat2}
    -N^a h^c_b \left(u_c \mathcal D_{\left<a\right>}\epsilon \right) = \frac{1}{\gamma} v_b
    (v^a \mathcal D_a^\Sigma\epsilon + |v|_h^2 \mathcal D\epsilon)
    = \frac{1}{\gamma} v_b v^a (\mathcal D_a^\Sigma\epsilon + v_a \mathcal D\epsilon).        
\end{equation}
On the other hand,
\[
    h_b^c\mathcal D_{\left<c\right>} \epsilon = h_b^c (\delta_c^d + u_c u^d)\mathcal D_{d}\epsilon =
    \mathcal D_b^\Sigma\epsilon + v_b \mathcal D\epsilon
\]
and therefore
\begin{equation}\label{eq:Nh-heat1}
    -N^a h_b^c \left(u_a\mathcal D_{\left<c\right>}\epsilon\right) = \gamma \mathcal (\mathcal D_b^\Sigma\epsilon + v_b \mathcal D\epsilon).
\end{equation}
Equations  \eqref{eq:Nh-heat2} and \eqref{eq:Nh-heat1} then imply
\[
    -N^a h_b^c (u_a\mathcal D_{\left<c\right>}\epsilon + u_c\mathcal D_{\left<a\right>}\epsilon)
    = \gamma\left( \delta_b^a + \frac{v^a v_b}{\gamma^2}\right)
    (D_a^\Sigma\epsilon + v_a \mathcal D\epsilon)
\]
which is equation \eqref{eq:J-heat}. Moreover,
\[
N^aN^b \left(u_a\mathcal D_{\left<b\right>}\epsilon + u_b\mathcal D_{\left<a\right>}\epsilon\right)
= 2 N^aN^b u_a\mathcal D_{\left<b\right>}\epsilon = -2\gamma N^b \mathcal D_{\left<b\right>}\epsilon
\]
so equation \eqref{eq:N-in-De} implies
\[
N^aN^b \left(u_a\mathcal D_{\left<b\right>}\epsilon + u_b\mathcal D_{\left<a\right>}\epsilon\right)=
2 (v^a \mathcal D^\Sigma_a \epsilon + |v|_h^2 \mathcal{D} \epsilon)
\]
which is equivalent to equation \eqref{eq:E-heat}.
\end{proof}

It remains to analyze contributions to the energy and momentum densities
arising from the shear tensor $\sigma_{ab}$.  
Since $u^a \sigma_{ab} = 0$ and since $N^a = (u^a-v^a)/\gamma$,
it follows that 
\begin{equation}\label{eq:v-for-N}
    N^a \sigma_{ab} = -(v^a/\gamma)\sigma_{ab}   
\end{equation}
and hence $\sigma_{ab}$ is completely determined by its restriction to $\Sigma$.
The following lemma is the first step to computing this restriction.
\begin{lemma}\label{lem:sigma-eval-v1} 
If $X^a$ and $Y^b$ are tangent to $\Sigma$ then
\begin{equation}\label{eq:sigma-eval-v1}
\sigma_{ab} X^a Y^b = \left[ \gamma \mathcal K_{ab}  + \frac12\left(\mathcal D_a^\Sigma v_b + \mathcal D_b^\Sigma v_a\right)\right] X^a Y^b.
\end{equation}
\end{lemma}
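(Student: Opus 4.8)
The plan is to work from the Weyl-derivative form of the shear, equation~\eqref{eq:shear-v2}, namely $\sigma_{ab} = \frac12(\mathcal D_a u_b + \mathcal D_b u_a)$, and to feed in the decomposition $u_b = \gamma N_b + v_b$, with $v_b$ tangent to $\Sigma$ and $\gamma$ of conformal weight zero (so $\mathcal D_a\gamma = \nabla_a\gamma$ is just a covector on $\Sigma$). Since $\mathcal D_a$ differs from $\nabla_a$ only by weight and connection-type terms, each of which obeys the Leibniz rule, $\mathcal D_a$ is a derivation compatible with tensor products, and contracting with tangent vectors $X^a,Y^b$ one finds
\[
X^a(\mathcal D_a u_b)Y^b = (X^a\mathcal D_a\gamma)(N_bY^b) + \gamma\,X^a(\mathcal D_a N_b)Y^b + X^a(\mathcal D_a v_b)Y^b .
\]
The first term drops because $N_bY^b = 0$ for tangential $Y^b$.

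The next step is to evaluate the two surviving terms using Lemma~\ref{lem:weyl-K}. For the $N$-term, compatibility of $\mathcal D_a$ with $g_{ab}$ gives $\mathcal D_a N_b = g_{bc}\mathcal D_a N^c$, and then equation~\eqref{eq:deriv-N} identifies $X^a g_{bc}(\mathcal D_a N^c)Y^b$ with $\mathcal K_{ab}X^aY^b$. For the $v$-term, write $\mathcal D_a v_b = g_{bc}\mathcal D_a v^c$ and apply the normal/tangential split \eqref{eq:K-split-normal} to the tangential vector field $v^c$ (the formula allows any weight), obtaining $X^a\mathcal D_a v^c = X^a\mathcal D_a^\Sigma v^c + \mathcal K_{ad}X^av^dN^c$. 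The normal piece is killed by the contraction with tangential $Y^b$, and because $\mathcal D_a^\Sigma$ is the induced generalized Weyl derivative on $\Sigma$ — in particular compatible with $h_{ab}$, see Section~\ref{sec:weyl} — and $\mathcal D_a^\Sigma v^c$ is intrinsic to $\Sigma$, the remaining term is simply $X^aY^b\mathcal D_a^\Sigma v_b$. Assembling,
\[
X^a(\mathcal D_a u_b)Y^b = \gamma\,\mathcal K_{ab}X^aY^b + X^aY^b\,\mathcal D_a^\Sigma v_b .
\]

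Finally, $\sigma_{ab}X^aY^b = \frac12\big[(\mathcal D_a u_b)X^aY^b + (\mathcal D_b u_a)X^aY^b\big]$; the second summand is the first with the roles of $X$ and $Y$ interchanged, and since $\mathcal K_{ab}$ is symmetric by \eqref{eq:weyl-K} its two contributions coincide and sum to $\gamma\,\mathcal K_{ab}X^aY^b$, while the two $v$-contributions combine into $\frac12(\mathcal D_a^\Sigma v_b + \mathcal D_b^\Sigma v_a)X^aY^b$, which is exactly \eqref{eq:sigma-eval-v1}. The argument is essentially bookkeeping; the only points that call for a little care are tracking the conformal weights, justifying the Leibniz rule for $\mathcal D_a$ on the mixed product $\gamma N_b$, and observing that for tangential objects it is immaterial whether an index is lowered with $g_{ab}$ or with $h_{ab}$. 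I expect no genuine analytic obstacle, only the need to keep the normal/tangential and weight decompositions straight throughout.
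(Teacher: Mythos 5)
Your proposal is correct and follows essentially the same route as the paper: decompose $u_b = \gamma N_b + v_b$, expand $\mathcal D_a u_b$ by the Leibniz rule, drop the $(\mathcal D_a\gamma)N_b$ term against tangential $Y^b$, identify the remaining two terms via equations \eqref{eq:deriv-N} and \eqref{eq:K-split-normal} of Lemma \ref{lem:weyl-K}, and symmetrize. The extra bookkeeping you flag (weights, Leibniz on $\gamma N_b$, lowering indices with $g_{ab}$ versus $h_{ab}$ on tangential tensors) is handled implicitly in the paper and is fine as you state it.
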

\begin{proof}
Starting from $u_b = \gamma N_b + v_b$ we have
\[
\mathcal D_a u_b = (\mathcal D_a \gamma) N_b + \gamma (\mathcal D_a N_b) 
+ \mathcal D_a v_b.
\]
The fact that $N^a$ is orthogonal to $\Sigma$
and equations \eqref{eq:deriv-N} and \eqref{eq:K-split-normal} 
from Lemma \ref{lem:weyl-K} imply
\[
X^a Y^b \mathcal D_a u_b =  \gamma \mathcal K_{ab} X^a Y^b + 
X^a Y^b \mathcal D_a^\Sigma v_b
\]
and equation \eqref{eq:sigma-eval-v1} follows from symmetrizing.
\end{proof}

Equation \eqref{eq:sigma-eval-v1} and the definition 
\eqref{eq:weyl-K} of the Weyl second fundamental form
show that $\sigma_{ab}$ can be computed from the 
slice intrinsic
data $h_{ab}$, $v^a$, $\epsilon$, $K_{ab}$, $\mathcal A^\Sigma_a$, 
and $\mathcal D\epsilon$ along with one additional term 
arising in the conformal second fundamental form: $\mathcal A_c N^c$. 
The following result shows how this key term can, in fact, also be computed in
terms of the slice-intrinsic data on $\Sigma$. For convenience we introduce
the following notation: 
\begin{equation}\label{eq:shear-etc}
\begin{aligned}
\tau = h^{ab} K_{ab}&:\quad\text{mean curvature},\\
\theta_{ab} = K_{ab} - \frac{\tau}{3} h_{ab}&:\quad\text{trace-free component of the second fundamental form},\\
\sigma^{\Sigma}_{ab} = \frac12\left(\mathcal D^\Sigma_a v_b + \mathcal D^\Sigma_b v_a\right)-\frac13 \mathcal D_c^\Sigma v^c h_{ab}&:\quad\text{induced shear tensor}.
\end{aligned}
\end{equation}
\begin{lemma}\label{lem:N-on-A}
    On $\Sigma$,
\begin{equation}\label{eq:N-in-A}
\mathcal A_c N^c = -\frac{1}{1 + 2\gamma} \left(\theta_{ab}v^a v^b + \frac{1}{\gamma}
\sigma_{ab}^\Sigma v^a v^b\right) 
+ \frac{1}{3} \left(\tau + \frac{1}{\gamma} \mathcal D_a^\Sigma v^a\right).
\end{equation}
\end{lemma}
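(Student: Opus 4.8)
The plan is to determine the single scalar $\mathcal A_c N^c$ by feeding one new identity into Lemma~\ref{lem:sigma-eval-v1}, namely the trace‑freeness of the spacetime shear. From $\sigma_{ab}=\tfrac12(\mathcal D_a u_b+\mathcal D_b u_a)$ (equation~\eqref{eq:shear-v2}) and compatibility $\mathcal D_c g_{ab}=0$ of the Weyl derivative with the conformal class, the $g$-trace is $g^{ab}\sigma_{ab}=\mathcal D_a u^a$, which vanishes by~\eqref{eq:shear-trace-free}. Since $g^{ab}=h^{ab}-N^aN^b$, this is equivalent to the scalar identity $h^{ab}\sigma_{ab}=N^aN^b\sigma_{ab}$ on $\Sigma$, and the idea is to evaluate both sides in slice‑intrinsic terms, obtaining a linear equation for $\mathcal A_c N^c$.

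For the left‑hand side I would contract the identity of Lemma~\ref{lem:sigma-eval-v1} with $h^{ab}$; since $h^{ab}$ is tangential it sees only the restriction of $\sigma_{ab}$ to $T\Sigma$, and using compatibility of $\mathcal D^\Sigma$ with $h$ (so $h^{ab}\mathcal D^\Sigma_a v_b=\mathcal D^\Sigma_a v^a$) together with the trace $h^{ab}\mathcal K_{ab}=\tau-3\,\mathcal A_c N^c$ from~\eqref{eq:weyl-K} gives $h^{ab}\sigma_{ab}=\gamma\bigl(\tau-3\,\mathcal A_c N^c\bigr)+\mathcal D^\Sigma_a v^a$. For the right‑hand side I would first invoke~\eqref{eq:v-for-N}, which, using the symmetry of $\sigma_{ab}$, yields $N^aN^b\sigma_{ab}=\gamma^{-2}\,v^a v^b\sigma_{ab}$; since $v^a$ is tangent to $\Sigma$, Lemma~\ref{lem:sigma-eval-v1} applies again with $X=Y=v$, and combining with $\mathcal K_{ab}v^a v^b=K_{ab}v^a v^b-|v|_h^2\,\mathcal A_c N^c$ (again from~\eqref{eq:weyl-K}) expresses $N^aN^b\sigma_{ab}$ through $K_{ab}v^a v^b$, the quantity $\mathcal D^\Sigma_a v_b\,v^a v^b$, $\gamma$, and $\mathcal A_c N^c$.

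Equating the two expressions and substituting $|v|_h^2=\gamma^2-1$, the terms proportional to $\mathcal A_c N^c$ collect with a nonzero coefficient (a rational function of $\gamma$), so one solves for $\mathcal A_c N^c$; rewriting $K_{ab}v^a v^b$ in terms of $\theta_{ab}v^a v^b$ and $\mathcal D^\Sigma_a v_b\,v^a v^b$ in terms of $\sigma^\Sigma_{ab}v^a v^b$ via the definitions~\eqref{eq:shear-etc} then reorganizes the remaining trace pieces into $\tfrac13\bigl(\tau+\gamma^{-1}\mathcal D^\Sigma_a v^a\bigr)$ and puts the result in the form~\eqref{eq:N-in-A}. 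The only point requiring care is the conformal‑weight bookkeeping — checking that $\gamma$ has weight zero and $v^a$ has weight one as a tensor on $\Sigma$, so that the $C$‑term of $\mathcal D^\Sigma_a$ and hence the objects $\mathcal D^\Sigma_a v^a$ and $\sigma^\Sigma_{ab}$ are the intended ones — but beyond that the argument is a direct computation using only Section~\ref{sec:weyl}, Lemma~\ref{lem:weyl-K}, and Lemma~\ref{lem:sigma-eval-v1}. I would expect the only genuine ``obstacle'' to be the conceptual step of recognizing that the vanishing $g$‑trace of $\sigma$ is exactly the missing scalar relation; once that is in hand, the rest is routine algebra.
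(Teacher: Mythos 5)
Your proposal is correct and takes essentially the same route as the paper: the paper's proof likewise exploits $g^{ab}\sigma_{ab}=0$ from \eqref{eq:shear-trace-free}, i.e.\ $(h^{ab}-N^aN^b)\sigma_{ab}=0$, evaluates both traces slice-intrinsically using Lemma \ref{lem:sigma-eval-v1}, equation \eqref{eq:weyl-K}, and \eqref{eq:v-for-N} with $X^a=Y^a=v^a$, and then solves the resulting linear relation for $\mathcal A_c N^c$. One remark: the algebra produces the coefficient $1/(1+2\gamma^2)$ (consistent with \eqref{eq:shear-final} and the subsequent density formulas), so the factor $1/(1+2\gamma)$ in the displayed statement is a typographical slip rather than a discrepancy with your argument.
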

\begin{proof}
Equation \eqref{eq:sigma-eval-v1} and the definition \eqref{eq:weyl-K}
of the Weyl second fundamental form imply
\begin{equation}\label{eq:h-in-sigma}
\gamma h^{ab}\sigma_{ab} = \gamma^2\tau - 3 \gamma^2 \mathcal A_c N^c +\gamma \mathcal D_c^\Sigma v^c.
\end{equation}
On the other hand, equation \eqref{eq:v-for-N} and Lemma \ref{lem:sigma-eval-v1}
imply
\begin{equation}\label{eq:N-in-sigma}
\begin{aligned}
-\gamma N^a N^b \sigma_{ab} &= -\frac{1}{\gamma} v^a v^b \sigma_{ab}\\
& = - K_{ab} v^a v^b + |v|_h^2 \mathcal A_c N^c - v^a v^b \frac1\gamma \mathcal D^\Sigma_a v_b\\
& = -\theta_{ab}v^a v^b - \frac{1}{3} |v|_h^2 \tau 
+|v|_h^2 \mathcal A_c N^c - \frac1\gamma \sigma^{\Sigma}_{ab} v^a v^b - \frac{1}{3\gamma}|v|^2 \mathcal D^\Sigma_a v^a.
\end{aligned}
\end{equation}
Adding equations \eqref{eq:h-in-sigma} and \eqref{eq:N-in-sigma}
and observing that $\gamma^2 - (1/3)|v|^2 = (1+2\gamma^2)/3$ we find
\begin{equation}\label{eq:g-in-sigma}
\gamma (-N^aN^b + h^{ab})\sigma_{ab} =
-\theta_{ab}v^a v^b 
+\frac{1+2\gamma^2}{3}\tau
-\frac1\gamma \sigma^\Sigma_{ab} v^a v^b  
+\frac{1+2\gamma^2}{3\gamma}\mathcal D_c^\Sigma v^c
-(1+2\gamma^2) \mathcal A_c N^c.
\end{equation}
But equation \eqref{eq:shear-trace-free} implies $g^{ab}\sigma_{ab}=0$
and hence $(-N^a N^b + h^{ab})\sigma_{ab} = 0$ as well.
That is, the right-hand side of equation \eqref{eq:g-in-sigma}
vanishes and we can therefore solve for $\mathcal A_c N^c$
to obtain equation \eqref{eq:N-in-A}.
\end{proof}
A straightforward computation from equations \eqref{eq:sigma-eval-v1}
and \eqref{eq:N-in-A} now leads to our final expression for the 
shear tensor in terms of intrinsic data on $\Sigma$.
\begin{lemma}
If $X^a$ and $Y^b$ are tangent to $\Sigma$ then
\begin{equation}\label{eq:shear-final}
    \sigma_{ab} X^a Y^b = \left[\left(\gamma \theta_{ab} + \sigma^\Sigma_{ab}\right)
    + \frac{1}{1+2\gamma^2}\left(\gamma \theta_{cd}v^cv^d +\sigma^\Sigma_{cd}v^c v^d\right) h_{ab}\right]X^a Y^b.        
\end{equation}
\end{lemma}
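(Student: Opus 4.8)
The plan is to expand the bracket in Lemma~\ref{lem:sigma-eval-v1} by decomposing its two ingredients into pure-trace and trace-free parts with respect to $h_{ab}$, and then to eliminate the sole remaining unknown term $\mathcal A_c N^c$ using Lemma~\ref{lem:N-on-A}. Concretely, I would first substitute the formula \eqref{eq:weyl-K} for the Weyl second fundamental form into \eqref{eq:sigma-eval-v1}, obtaining
\[
\sigma_{ab}X^aY^b = \Bigl[\gamma K_{ab} - \gamma\,(\mathcal A_c N^c)\, h_{ab} + \tfrac12\bigl(\mathcal D^\Sigma_a v_b + \mathcal D^\Sigma_b v_a\bigr)\Bigr]X^aY^b .
\]

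Next I would split $K_{ab} = \theta_{ab} + \tfrac{\tau}{3}h_{ab}$ and $\tfrac12(\mathcal D^\Sigma_a v_b + \mathcal D^\Sigma_b v_a) = \sigma^\Sigma_{ab} + \tfrac13(\mathcal D^\Sigma_c v^c)\,h_{ab}$, using the notation introduced in \eqref{eq:shear-etc}. Collecting the trace-free pieces and the coefficients of $h_{ab}$ separately, the right-hand side becomes
\[
\Bigl[\gamma\theta_{ab} + \sigma^\Sigma_{ab} + \Bigl(\tfrac{\gamma\tau}{3} + \tfrac13\mathcal D^\Sigma_c v^c - \gamma\,\mathcal A_c N^c\Bigr)h_{ab}\Bigr]X^aY^b ,
\]
so that the whole statement reduces to showing that the coefficient of $h_{ab}$ equals $\tfrac{1}{1+2\gamma^2}\bigl(\gamma\theta_{cd}v^cv^d + \sigma^\Sigma_{cd}v^cv^d\bigr)$.

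The key step is then to substitute Lemma~\ref{lem:N-on-A} for $\mathcal A_c N^c$. Its two pure-trace contributions, proportional to $\tau$ and to $\mathcal D^\Sigma_a v^a$, get multiplied by $-\gamma$ and cancel exactly against the $\tfrac{\gamma\tau}{3}$ and $\tfrac13\mathcal D^\Sigma_c v^c$ already present; what survives is $-\gamma$ times the $v^cv^d$-contraction part of \eqref{eq:N-in-A}, which rearranges directly into the displayed $h_{ab}$ coefficient of \eqref{eq:shear-final}. I do not expect any genuine obstacle here — the argument is pure bookkeeping — but it is worth being careful that $\gamma$ multiplies the entire Weyl second fundamental form in \eqref{eq:sigma-eval-v1}, so it is $\gamma\,\mathcal A_c N^c$, not $\mathcal A_c N^c$, that must be eliminated, and that the trace terms must be tracked with the correct powers of $\gamma$. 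As a sanity check, one can note that \eqref{eq:N-in-A} was itself obtained by imposing $g^{ab}\sigma_{ab} = 0$ on the expression from Lemma~\ref{lem:sigma-eval-v1}, so this computation is essentially that derivation run in reverse.
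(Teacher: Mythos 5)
Your proposal is correct and follows essentially the same route as the paper, which offers no separate proof but states that the lemma follows by a straightforward computation from equations \eqref{eq:sigma-eval-v1} and \eqref{eq:N-in-A}; your bookkeeping (substituting \eqref{eq:weyl-K}, splitting $K_{ab}$ and the symmetrized $\mathcal D^\Sigma v$ into trace-free and pure-trace parts, and eliminating $\gamma\,\mathcal A_c N^c$ so the $\tau$ and $\mathcal D^\Sigma_c v^c$ terms cancel) is exactly that computation. One small remark: the cancellation produces the factor $1/(1+2\gamma^2)$ only if one uses the denominator $1+2\gamma^2$ coming from \eqref{eq:g-in-sigma}, so the $1+2\gamma$ printed in the statement of \eqref{eq:N-in-A} should be read as a typo for $1+2\gamma^2$.
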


\begin{corollary}\label{cor:shear}
The following relations hold for the shear tensor:
    \begin{align}
    N^a N^b \sigma_{ab} &= \frac{3}{1+2\gamma^2}\left(\gamma\theta_{ab} + \sigma^\Sigma_{ab}\right)v^av^b,\label{eq:shear-E}\\
    -N^a h_b^c \sigma_{ac} &= 
    \frac{v^c}{\gamma} \left(\gamma \theta_{cd}+\sigma_{cd}^\Sigma\right)\left(\delta_b^d + \frac{v^dv_b}{1+2\gamma^2}\right).\label{eq:shear-J}
    \end{align}
\end{corollary}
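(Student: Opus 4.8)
The plan is to get both identities directly out of the formula \eqref{eq:shear-final} for $\sigma_{ab}X^aY^b$ together with equation \eqref{eq:v-for-N}, which says $N^a\sigma_{ab} = -(v^a/\gamma)\sigma_{ab}$. The point of \eqref{eq:v-for-N} is that it converts every normal contraction of $\sigma_{ab}$ into a tangential contraction against $v^a/\gamma$, and once all slots are filled with vectors tangent to $\Sigma$ the formula \eqref{eq:shear-final} applies verbatim. Throughout I would abbreviate $S_{cd} := \gamma\theta_{cd} + \sigma^\Sigma_{cd}$, use the algebraic relation $|v|_h^2 = \gamma^2-1$ coming from $\gamma^2 = 1+|v|_h^2$, and use freely that $\theta_{ab}$, $\sigma^\Sigma_{ab}$, $v^a$, and $h_{ab}$ are all intrinsic to $\Sigma$, so that contraction with $h^c_b$ acts as the identity on their tangential indices.

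For the Hamiltonian-side identity \eqref{eq:shear-E}, I would first write $N^aN^b\sigma_{ab} = \gamma^{-2}\,v^av^b\sigma_{ab}$ using \eqref{eq:v-for-N} twice, then apply \eqref{eq:shear-final} with $X^a = Y^a = v^a$ to get $v^av^b\sigma_{ab} = S_{cd}v^cv^d + (1+2\gamma^2)^{-1}(S_{cd}v^cv^d)\,|v|_h^2$. Factoring $S_{cd}v^cv^d$ and simplifying $1+2\gamma^2+|v|_h^2 = 3\gamma^2$ collapses this to $v^av^b\sigma_{ab} = \tfrac{3\gamma^2}{1+2\gamma^2}S_{cd}v^cv^d$, and dividing by $\gamma^2$ gives \eqref{eq:shear-E}.

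For the momentum-side identity \eqref{eq:shear-J}, I would note that \eqref{eq:v-for-N} gives $-N^ah^c_b\sigma_{ac} = \gamma^{-1}v^ah^c_b\sigma_{ac}$. Contracting \eqref{eq:shear-final} in its first slot with the tangent vector $v^a$ and testing the second slot against an arbitrary tangent vector (equivalently, against $h^c_b Z^b$ for arbitrary $Z^b$, which is tangent to $\Sigma$) yields the covector identity $\sigma_{ac}v^a = S_{ac}v^a + (1+2\gamma^2)^{-1}(S_{de}v^dv^e)\,v_c$ on $\Sigma$. Applying the projector $h^c_b$, which fixes $S_{ab}$ and $v_b$, produces $v^ah^c_b\sigma_{ac} = S_{cd}v^c\bigl(\delta^d_b + \tfrac{v^dv_b}{1+2\gamma^2}\bigr)$, and dividing by $\gamma$ gives \eqref{eq:shear-J}.

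There is no substantive obstacle here: the content is already packaged in \eqref{eq:shear-final}, \eqref{eq:v-for-N}, and the lemmas preceding them. The only place demanding a little care is the index bookkeeping — making sure that every object appearing in \eqref{eq:shear-final} is tangent to $\Sigma$ so that the projector $h^c_b$ may be moved in and out of contractions, and that $h_{ac}v^a\tilde Z^c = v_c\tilde Z^c$ for tangent $\tilde Z^c$ — together with the repeated simplifications using $|v|_h^2 = \gamma^2-1$ to absorb the various $\gamma$-dependent coefficients into the stated closed forms.
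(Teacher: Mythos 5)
Your proposal is correct and follows essentially the same route as the paper: replace normal contractions by $-v^a/\gamma$ via equation \eqref{eq:v-for-N}, apply equation \eqref{eq:shear-final} with tangential arguments, and simplify using $1+2\gamma^2+|v|_h^2 = 3\gamma^2$ (the paper's observation $1 + |v|_h^2/(1+2\gamma^2) = 3\gamma^2/(1+2\gamma^2)$). Your write-up merely spells out the ``analogous direct computation'' for \eqref{eq:shear-J} that the paper leaves to the reader.
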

\begin{proof}
Since $u^a \sigma_{ab} = 0 = u^b \sigma_{ab}$, we can replace $N^a$ with $-v^a/\gamma$
in the contractions on the left-hand sides of equations \eqref{eq:shear-E} and \eqref{eq:shear-J}.
Equation \eqref{eq:shear-E} is then an easy consequence of equation \eqref{eq:shear-final}
and the observation 
\[
1 + \frac{|v|_h^2}{1+2\gamma^2} = \frac{3\gamma^2}{1+2\gamma^2}.
\]
Equation \eqref{eq:shear-J} follows from an analogous direct computation.
\end{proof}

Combining the equations of Lemma \ref{lem:bulk-heat} and Corollary \ref{cor:shear}
we obtain our desired expressions for the energy and momentum densities.
\begin{proposition}
For the viscous conformal fluid stress-energy tensor \eqref{eq:T-vcf},
\begin{align}
\mathcal E &= 
\left(\epsilon+\chi\frac{3\mathcal D\epsilon}{4\epsilon}\right)\left(1+\frac{4}3|v|_h^2\right)
-\eta\frac{6}{1+2\gamma^2}\left[\gamma \theta_{ab}+ \sigma^\Sigma_{ab}\right]v^av^b
+\lambda\frac{v^a}{2\epsilon} \left(\mathcal D_a^\Sigma\epsilon + v_a \mathcal D\epsilon \right)\label{eq:viscous-E}\\
\mathcal J_b &=
\left(\epsilon+\chi\frac34\frac{\mathcal D\epsilon}{\epsilon} \right)\frac{4\gamma}{3}v_b
-\eta\frac{2v^c}{\gamma}\left(\delta_b^d +\frac{v_bv^d}{1+2\gamma^2}\right)\left(\gamma\theta_{cd}+\sigma^\Sigma_{cd}\right)
+\lambda\frac{\gamma}{4\epsilon}\left(\delta^c_b+\frac{v^cv_b}{\gamma^2}\right)
(\mathcal D^\Sigma_c \epsilon + v_c \mathcal D\epsilon).\label{eq:viscous-J}
\end{align}
Alternatively, in terms of temperature,
\begin{align}
    \mathcal E &= 
    \left(c_\epsilon T^4+3c_\chi T^2\, \mathcal DT\right)\left(1+\frac{4}3|v|_h^2\right)
    -c_\eta T^3 \frac{6}{1+2\gamma^2}\left[\gamma \theta_{ab}+ \sigma^\Sigma_{ab}\right]v^av^b
    +2 c_\lambda T^2  v^a \left(\mathcal D_a^\Sigma T + v_a \mathcal DT \right)\label{eq:viscous-E-temp}\\
    \mathcal J_b &=
    \left(c_\epsilon T^4 +3 c_\chi T^2\, {\mathcal DT} \right)\frac{4\gamma}{3}v_b
    -2c_\eta T^3\frac{v^c}{\gamma}\left(\delta_b^d +\frac{v_bv^d}{1+2\gamma^2}\right)\left(\gamma\theta_{cd}+\sigma^\Sigma_{cd}\right)
    +c_\lambda T^2\gamma\left(\delta^c_b+\frac{v^cv_b}{\gamma^2}\right)
    (\mathcal D^\Sigma_c T + v_c \mathcal DT).\label{eq:viscous-J-temp}
    \end{align}
\end{proposition}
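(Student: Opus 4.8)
The Proposition is obtained by assembling the contraction identities of Lemma~\ref{lem:bulk-heat} and Corollary~\ref{cor:shear}, applied term-by-term to the stress-energy tensor \eqref{eq:T-vcf}. The plan is as follows. First I would lower indices in \eqref{eq:T-vcf} and split
\[
T_{ab} = \left(\epsilon + \frac{3\chi}{4\epsilon}\mathcal D\epsilon\right)\left(u_a u_b + \tfrac13\Delta_{ab}\right) - 2\eta\,\sigma_{ab} + \frac{\lambda}{4\epsilon}\left(u_a\mathcal D_{\left<b\right>}\epsilon + u_b\mathcal D_{\left<a\right>}\epsilon\right),
\]
observing that the scalar prefactors $\epsilon + \tfrac{3\chi}{4\epsilon}\mathcal D\epsilon$, $\eta$, and $\tfrac{\lambda}{4\epsilon}$ pull out of the contractions $N^aN^b(\,\cdot\,)$ and $-N^ah_b^c(\,\cdot\,)$ defining $\mathcal E$ and $\mathcal J_b$. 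Hence it suffices to contract each of the three tensors in \eqref{eq:three-T-pieces} separately and then recombine.

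For $\mathcal E = N^aN^bT_{ab}$, the bulk term contributes $(\epsilon + \tfrac{3\chi}{4\epsilon}\mathcal D\epsilon)\bigl(1+\tfrac43|v|_h^2\bigr)$ by \eqref{eq:E-bulk}, the shear term contributes $-2\eta\cdot\tfrac{3}{1+2\gamma^2}(\gamma\theta_{ab}+\sigma^\Sigma_{ab})v^av^b$ by \eqref{eq:shear-E}, and the heat term contributes $\tfrac{\lambda}{4\epsilon}\cdot 2v^a(\mathcal D^\Sigma_a\epsilon + v_a\mathcal D\epsilon)$ by \eqref{eq:E-heat}; summing yields \eqref{eq:viscous-E}. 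For $\mathcal J_b = -N^ah_b^cT_{ac}$, the bulk term contributes $(\epsilon+\tfrac{3\chi}{4\epsilon}\mathcal D\epsilon)\tfrac{4\gamma}{3}v_b$ by \eqref{eq:J-bulk}, the shear term contributes $-2\eta\cdot\tfrac{v^c}{\gamma}(\gamma\theta_{cd}+\sigma^\Sigma_{cd})\bigl(\delta_b^d + \tfrac{v^dv_b}{1+2\gamma^2}\bigr)$ by \eqref{eq:shear-J}, and the heat term contributes $\tfrac{\lambda\gamma}{4\epsilon}\bigl(\delta_b^c+\tfrac{v^cv_b}{\gamma^2}\bigr)(\mathcal D^\Sigma_c\epsilon + v_c\mathcal D\epsilon)$ by \eqref{eq:J-heat}; summing yields \eqref{eq:viscous-J}. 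The temperature form \eqref{eq:viscous-E-temp}--\eqref{eq:viscous-J-temp} then follows by substituting $\epsilon = c_\epsilon T^4$ and $\chi = c_\chi T^3$, $\eta = c_\eta T^3$, $\lambda = c_\lambda T^3$, and using that, because $\epsilon$ and $T$ carry proportional conformal weights, the Weyl derivative obeys the chain rule $\mathcal D_c\epsilon = 4c_\epsilon T^3\,\mathcal D_c T$, and likewise $\mathcal D^\Sigma_c\epsilon = 4c_\epsilon T^3\,\mathcal D^\Sigma_c T$ and $\mathcal D\epsilon = 4c_\epsilon T^3\,\mathcal DT$; the powers of $T$ then simplify to the stated expressions.

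There is no substantive obstacle: the genuine work has already been done in Lemmas~\ref{lem:weyl-K}, \ref{lem:bulk-heat}, \ref{lem:sigma-eval-v1}, and \ref{lem:N-on-A} and in the shear identity \eqref{eq:shear-final} underlying Corollary~\ref{cor:shear}. The only points requiring care in this last step are bookkeeping ones — tracking the numerical constants through the reassembly, in particular the interplay of the factor $2$ from $-2\eta$ with the $\tfrac{3}{1+2\gamma^2}$ and $\tfrac1\gamma$ factors produced by Corollary~\ref{cor:shear}, and checking the weight-compatibility that legitimizes the $\epsilon \leftrightarrow T$ conversion of the Weyl derivatives.
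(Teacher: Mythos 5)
Your proposal is correct and is exactly the paper's route: the paper obtains the Proposition by combining Lemma~\ref{lem:bulk-heat} and Corollary~\ref{cor:shear} term-by-term with the decomposition \eqref{eq:three-T-pieces}, which is precisely your assembly, and your constant-tracking (including the factor $-2\eta$ against the $\tfrac{3}{1+2\gamma^2}$ and $\tfrac1\gamma$ factors) and the weight-compatible chain rule $\mathcal D_c\epsilon = 4c_\epsilon T^3\,\mathcal D_c T$ for the temperature form both check out.
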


\section{Conformal method and matter field scaling}\label{S:Scaling_matter}

The conformal method, summarized briefly below, is a technique for generating solutions
of the constraint equations \eqref{eq:ham-constraint}--\eqref{eq:mom-constraint}.
When applied to a particular matter model, it requires model-specific 
conformal scaling rules, and the bulk of this section is devoted to 
determining appropriate scaling rules for both perfect and viscous conformal fluids.

First consider the vacuum setting. Following the perspective of \cite{maxwell_conformal_2014}, 
the ``seed data'' for the conformal method
consist of a conformal class $\mathbf h$, a momentum $\mathbf p$ 
conjugate to the conformal class (modulo diffeomorphisms),
a mean curvature $\tau$, and a positive volume form $\alpha$.
This description of the seed data is somewhat abstract, and in practice we 
select a representative $h_{ab}$ of $\mathbf h$, in which case $\mathbf p$
can be represented by a trace-free, divergence-free symmetric tensor $p_{ab}$.
Under the conformal transformation $\tilde h_{ab} = h_{ab}$, the 
conjugate momentum transforms in three dimensions as 
$\tilde p_{ab} = e^\phi p_{ab}$.  The mean curvature is a specified function,
and the fixed volume form $\alpha$ is used to define a conformally 
transforming lapse\footnote{We use the notation $N$ for the lapse and $N^a$ for the unit normal to $\Sigma$. Both notations are standard in the literature and the distinction between the two is clear from context.} according to $N = \alpha/ dV_h$ and hence $\widetilde N = e^{3\Omega} N$.  For more on the conformally transforming lapse, 
see \cite{york_conformal_1999}\cite{Pfeiffer:2003ka} in the context of the conformal thin-sandwich formulation
of the conformal method and \cite{maxwell_conformal_2014} for the equivalence 
of the conformal thin sandwich method and the standard conformal method. 

Continuing to work in the vacuum setting, we seek a solution of the constraint equations of the form 
$\tilde h_{ab}= e^{-2\Omega}h_{ab}$ and 
\begin{equation}\label{eq:K-conf}
\begin{aligned}
\widetilde K_{ab} &= \left(\tilde p_{ab} + \frac{1}{2\widetilde N} (\widetilde {\ck} W)_{ab}\right) + \frac{\tau}{3}\tilde h_{ab}\\
& = e^{\Omega}\left( p_{ab} + \frac{1}{2N}(\ck W)_{ab}\right) + \frac{\tau}{3}e^{-2\Omega} h_{ab}
\end{aligned}
\end{equation}
where $e^{\Omega}$ and $W^a$ are an unknown conformal factor and vector field
respectively. In the above, $\ck$ is the conformal Killing operator, 
so $(\ck W)_{ab} = \nabla_a W_b + \nabla_b W_a - (2/3) \nabla_c W^c\, h_{ab}$.
For reasons involving the
conformal transformation of scalar curvature it is convenient to write
$e^\Omega=\phi^{-2}$ for an unknown function $\phi$ instead.
Substituting $\tilde g_{ab}$ and $\widetilde K_{ab}$ into the
constraint equations and rewriting them in terms of the original data
 $(g_{ab}, p_{ab}, \tau, N)$, we find that $\phi$ and $W^a$ solve 
\begin{align}
    -8\Delta\phi + R_h \phi + \left| p+\frac{1}{2N}\ck W\right|_h^2\phi^{-7} + \frac{2}{3}\tau^2\phi^5 &= 0\label{eq:LCBY-ham}\\
    -\nabla^a \left(\frac{1}{2N}(\ck W)_{ab}\right) &= -\frac{2}{3}(d\tau)_b\phi^6.
\end{align}
These are the LCBY equations, named for Lichnerowicz, Choquet-Bruhat, and York.
In particular, if $\tau$ is constant (which we henceforth suppose), the momentum
constraint decouples and no longer involves $\phi$.  Its solutions consist
of the conformal Killing fields satisfying $(\ck W)_{ab}=0$ and have no
impact on the Hamiltonian constraint \eqref{eq:LCBY-ham}, which is known in this 
context as the Lichnerowicz equation and which can be analyzed separately.

Now suppose in the non-vacuum case that the energy density $\mathcal E$ and momentum density 
$\mathcal J_a$ depend on the metric and
on matter fields $\mathcal F$ which can themselves be 
written in terms of matter seed data $F$ and the conformal factor $\phi$.
Hence, $\widetilde {\mathcal E} = \mathcal E( \mathcal F(F, \phi), \phi^4 h_{ab} )$
and similarly for the momentum density.  Assuming $\tau$ is constant,
the LCBY equations then become
\begin{align}
    -8\Delta\phi + R_h \phi + \left| p+\frac{1}{2N}\ck W\right|_h^2\phi^{-7} + \frac{2}{3}\tau^2\phi^5 &= 2\phi^5 \widetilde {\mathcal E}\label{eq:LCBY-ham-matter}\\
    -\nabla^a \left(\frac{1}{2N}(\ck W)_{ab}\right) &= \phi^6 \widetilde{\mathcal J}_b\label{eq:LCBY-mom-matter}
\end{align}
where $\widetilde{\mathcal E}$ and $\widetilde{\mathcal J}_a$ are expressions
involving $h_{ab}$, the matter seed fields $F$, and $\phi$. If $\mathcal J_a$
depends on $\phi$ such that  
\begin{equation}\label{eq:mom-scale}
    \widetilde{\mathcal J}_a = \phi^{-6} \mathcal J_{a}  
\end{equation}
then the momentum constraint
again decouples from the Hamiltonian constraint,
and the analysis reduces to studying the scalar Lichnerowicz equation
just as in the vacuum case.

Remarkably, one can arrange for equation \eqref{eq:mom-scale} to hold 
for a wide range of matter models. 
Early work in this direction appears in \cite{kuchar_kinematics_1976}\cite{isenberg_extension_1977}, which gives a physical argument of 
a general principle that would ensure
its satisfaction.  This principle is further developed and made
rigorous in \cite{IsenbergMaxwellMatter}, where it is shown that if
\vspace{-\topsep}
\begin{itemize}\setlength{\itemsep}{0pt}
    \item the matter model arises from a Lagrangian,
    \item the Lagrangian is diffeomorphism invariant,
    \item the metric appears only algebraically (i.e., is minimally coupled to matter), and
    \item matter is expressed on $\Sigma$ in terms of certain 
    fields and conjugate momenta that are held conformally invariant,
\end{itemize}  
\vspace{-\topsep}
then equation \eqref{eq:mom-scale} is guaranteed to hold.

Although we do not have a Lagrangian description of the viscous fluid model,
and because such a Lagrangian would necessarily involve first derivatives 
of the metric, we cannot apply \cite{IsenbergMaxwellMatter} directly to the full conformal viscous fluid model.  
Instead, we first treat perfect conformal fluids where these techniques can be used.
Having obtained principled scaling relations for some of the matter fields, 
we then extend these by ad-hoc but well-motivated rules to the remaining fields in the viscous case.

\subsection{Perfect conformal fluids}\label{secsec:scaling-perfect}

Using the technique of \cite{IsenbergMaxwellMatter} summarized above,
we show that under the conformal change 
$\tilde h_{ab} = \phi^4 h_{ab} = e^{-2\Omega} h_{ab}$,
the rest number density $n$, rest energy density $\epsilon$, and 
slice projected velocity $v^a$ transform via
\begin{align}
\tilde n &= e^{3\Omega} n\label{eq:conf-n}\\
\tilde \epsilon &= e^{4\Omega} \epsilon\label{eq:conf-epsilon}\\
\tilde v^a &= e^{\Omega} v^a\label{eq:conf-v}.
\end{align}
Equations \eqref{eq:conf-n}--\eqref{eq:conf-v}
form the foundation for the full set of transformations needed for 
the viscous model and
could reasonably be motivated heuristically.  However, since the 
technique of \cite{IsenbergMaxwellMatter} has not yet been applied 
to perfect fluids beyond a couple of sample constitutive relations treated in 
\cite{IsenbergMaxwellMatter}, 
we take this opportunity to further develop it here.
Indeed, while perhaps intuitive, we find that
the transformations \eqref{eq:conf-n}--\eqref{eq:conf-v} are
in fact very specific to the constitutive equation 
of a conformal fluid. 

To begin we require a Lagrangian description of the matter model
and we use the approach for perfect fluids appearing in
\cite{christodulu_action_2000}.  Because this Lagrangian approach 
is perhaps not widely known, we summarize it here; 
see also \cite{IsenbergMaxwellMatter} Section 7.1 for more details.

Individual clusters of fluid particles are tracked in this model using a reference
3-manifold $\overline{\Sigma}$ that is different
from spacetime and which effectively serves as
a set of labels
for the particles. For simplicity we assume that $\overline{\Sigma}$ is oriented,
and we endow $\overline\Sigma$ with a positive 3-form
$\overline \omega$ with the interpretation that 
$\int_\Omega \overline \omega$ is the number of
particles contained in the region 
$\Omega \subseteq \overline \Sigma$.  The pair $(\overline \Sigma, \overline \omega)$
is called the \textit{material manifold}.

A fluid configuration on $M$ is 
a map $\Psi: M \to \overline \Sigma$ and is the
primary fluid variable for the Lagrangian theory.  Note that $M$
has not yet been assigned a Lorentzian structure because the metric
is an independent variable in the Lagrangian and indeed finding a
suitable metric-free description of matter is a key step in
formulating the Lagrangian.
If $x\in \overline \Sigma$ is a fluid element, the 
preimage $\Psi^{-1}(x)$ is the worldline of $x$ in
$M$.  The 3-form $\omega = \Psi^*\overline \omega$
describes the flux of particles through the spacetime,
and one readily verifies $d\omega =0 $, a manifestation of 
local conservation of particles.  This description of a 
fluid configuration assumes that the fluid fills all of
spacetime, a simplifying hypothesis we continue to make.
We further assume that $\Psi$ is full rank
and hence $\Psi_*$ has a one-dimensional kernel which
is tangent to the fluid element worldlines.

If $M$ is also equipped with a time-oriented Lorenzian metric $g_{ab}$
we can describe the particle flux more familiarly 
in terms of the number flux vector $j^a$ defined by
$j\interior dV_g = \omega$.  We assume that $\Psi$ and $g_{ab}$
are compatible in the sense that $j^a$ is time-like and future pointing,
which is an open condition on the Lagrangian fields $g_{ab}$ and $\Psi$. 
The rest number density is $n = \sqrt{-g_{ab}j^a j^b}$ 
whereas the fluid velocity is $u^a = j^a / n$. 
We can also write $n = ||\omega||_g = ||\Psi^* \overline \omega||$
which makes it transparent that 
$n$ is a function of $\Psi$ and $g_{ab}$ that depends algebraically on $g_{ab}$
and that depends on the values and first derivatives of $\Psi$, an important
criterion for the matter Lagrangians treated in \cite{IsenbergMaxwellMatter}.

The equation of state of a perfect fluid 
relates rest number density $n$ to rest energy density
$\epsilon(n)$ and the spacetime fluid Lagrangian is
\[
\mathcal L_{\mathrm{fluid}}[g, \Psi] = -2 \epsilon(n[g, \Psi])\; dV_g.
\] 
As shown in
\cite{IsenbergMaxwellMatter},
the stress-energy tensor associated with this Lagrangian
is
\begin{equation}\label{eq:fluid-T}
T_{ab} = \epsilon(n) u_a u_b + (n\epsilon'(n)-\epsilon(n))(u_a u_b + g_{ab})
\end{equation}
which is the stress-energy tensor of a perfect fluid with
energy density $\epsilon(n)$ and pressure 
$p(n)=n\epsilon'(n)-\epsilon(n)$.

Now consider a Cauchy surface $\Sigma$ with future pointing 
unit normal $N^a$ and induced metric $h_{ab}$.
As usual, we decompose $u^a = \gamma N^a + v^a$ where
$v^a$ is tangent to $\Sigma$ and $\gamma^2 = 1 + |v|_h^2$.  
The map $\Psi$ restricts to a map $\Psi_\Sigma:\Sigma\to \overline\Sigma$
and we obtain a number density $\omega_\Sigma = \Psi_\Sigma^* \overline \omega$
on $\Sigma$. In terms of the metric, we can write $\omega_\Sigma = n_\Sigma dV_h$
for some scalar function $n_\Sigma$ and indeed 
$n_\Sigma = n \gamma$.  A brief computation shows that
the energy and momentum densities associated with 
the stress-energy tensor \eqref{eq:fluid-T} are
\begin{align}
\mathcal E &= \epsilon(n) + n\epsilon'(n) |v|_h^2 \label{eq:ham-fluid-v1}\\
\mathcal J_b &=  \gamma n  \epsilon'(n) v_b \label{eq:mom-fluid-v1}.
\end{align}
Equations \eqref{eq:ham-fluid-v1}--\eqref{eq:mom-fluid-v1} parameterize
these densities in terms of $v^a$ and $n$, which are physically reasonable variables,
but the application of the conformal method requires instead the Lagrangian
field variable (the map $\Psi_\Sigma$) and its conjugate momentum, which
are treated as conformally invariant objects. Hence we need to rewrite these
equations in terms of the new variables.

We can simplify matters somewhat by assuming
(without loss of generality) that the material manifold
$\overline \Sigma$ is the Cauchy surface $\Sigma$ itself,
that $\overline \omega$ is the observed density $\omega_\Sigma$, 
and that $\Psi_\Sigma = \id$.  In this case,
a computation in \cite{IsenbergMaxwellMatter}
shows that the momentum conjugate to $\Psi_\Sigma$ is
the covector-valued three-form
\[
\Pi_a = -2\epsilon'(n)v_a\; \omega_\Sigma.
\]
The conformally invariant data describing the fluid
can be taken to be the 3-form $\omega_\Sigma$ along with
$\Pi_a$.  However, rather than work with $\Pi_a$ directly
we introduce the associated 1-form
\[
p_a = \epsilon'(n)v_b
\]
which can be interpreted as the average momentum per particle
and which is also conformally invariant if $\omega_\Sigma$ and $\Pi_a$ are.
Rewriting equations \eqref{eq:ham-fluid-v1} and \eqref{eq:mom-fluid-v1}
in terms of $\omega_\Sigma$ and $p_a$ we obtain
\begin{align}
    \mathcal E &= \epsilon(n) + \frac{n}{\epsilon'(n)} |p|_h^2\label{eq:cf-ham}\\
    \mathcal J_b &= n_\Sigma p_b\label{eq:cf-mom}
\end{align}
where $n_\Sigma$ is shorthand for $\omega_\Sigma/dV_h$ 
and where $n$ is determined by $n_\Sigma^2 = n^2 \gamma^2$,
i.e.,
\begin{equation}\label{eq:implicit-n-n}
n_\Sigma^2 = n^2\left(1 + \frac{1}{\epsilon'(n)^2} |p|_h^2\right).
\end{equation}

At this point we can determine how $\mathcal E$ and $\mathcal J_a$ 
conformally transform when $h_{ab}$ becomes $\tilde h_{ab}= e^{-2\Omega} h_{ab}$
and when $\omega_\Sigma$ and $p_a$ are held constant.  The rule for the 
momentum density is easy: since $\tilde n_\Sigma dV_{\tilde h}$ and
$n_\Sigma dV_h$ both equal $\omega_\Sigma$
and since $dV_{\tilde h} = e^{-3\Omega} dV_h$ it follows that 
$\tilde n_\Sigma = e^{3\Omega} n_\Sigma$. Since $p_a$ is conformally invariant
we find 
\begin{equation}\label{eq:conf-J-perfect}
    \widetilde {\mathcal J}_a = e^{3\Omega} \mathcal J_a
\end{equation} 
which, recalling the convention $e^\Omega = \phi^{-2}$, is exactly the scaling 
\eqref{eq:mom-scale} needed to achieve decoupling of the momentum constraint 
for the conformal method.  This is to be expected, of course, as it 
is guaranteed to hold when using the \cite{IsenbergMaxwellMatter} approach to matter field scaling.

The conformal transformation rule for $\mathcal E$ requires 
determining the associated rule for $n$, which stems from the implicit
relation \eqref{eq:implicit-n-n}.  Substituting 
the relations 
$\tilde n_\Sigma = e^{3\Omega} n_\Sigma$
and $|p|_{\tilde h}^2 = e^{2\Omega}|p|_h^2$ already derived 
into equation \eqref{eq:implicit-n-n} we find
\begin{equation}\label{eq:implicit-n-n-conf}
e^{6\Omega} n_{\Sigma}^2 = \tilde n^2\left( 1 + \frac{1}{(\epsilon'(\tilde n))^2}e^{2\Omega}|p|_h^2\right)
\end{equation}
which defines $\tilde n$ in terms of $n_\Sigma$, $p_a$ and $e^\Omega$.
The implicit relation \eqref{eq:implicit-n-n-conf} 
complicates the conformal method when applied to an arbitrary equation of state,
and a general analysis is work in progress  \cite{AllenMaxwell}.
Nevertheless, for some specific equations of state one can make further 
progress. Dust and stiff fluids are treated in \cite{IsenbergMaxwellMatter}, for example, and 
we show now that the conformal fluid equation of state also
admits a direct analysis.

For a conformal fluid, $\epsilon(n) = k_\epsilon n^{4/3}$ 
for some constant $k_\epsilon$
and hence equation \eqref{eq:implicit-n-n} becomes
\begin{equation}\label{eq:implicit-n-n-cpf}
n_\Sigma^2 = n^2 + \frac{9}{16k_\epsilon^2}n^{4/3} |p|_h^2.
\end{equation}
Note that monotonicity implies $n\ge 0$ is uniquely determined
by $n_\Sigma$, $p$ and $h$ in equation \eqref{eq:implicit-n-n-cpf}. 
Equation \eqref{eq:implicit-n-n-conf} becomes
\[
e^{6\Omega} n_\Sigma = \tilde n^2 + \frac{9}{16k_\epsilon^2}e^{2\Omega}\tilde n^{4/3}|p|_h^2
\]
or equivalently
\begin{equation}\label{eq:implicit-n-n-cpf2}
n_\Sigma = \left(e^{-3\Omega} \tilde n\right)^2 + 
\frac{3}{4k_\epsilon}\left(e^{-3\Omega} \tilde n\right)^{4/3}|p|_h^2.
\end{equation}
Comparing equations \eqref{eq:implicit-n-n-cpf} and \eqref{eq:implicit-n-n-cpf2},
the uniqueness of $n\ge 0$ satisfying equation \eqref{eq:implicit-n-n-cpf} implies
$\tilde n = e^{3\Omega} n$, which is the first of the relations 
\eqref{eq:conf-n}--\eqref{eq:conf-v}.  But now
\[
\tilde\epsilon = \epsilon(\tilde n) = k_\epsilon \tilde n^{4/3}= e^{4\Omega} cn^{4/3} = e^{4\Omega} \epsilon(n)
\]
which is equation \eqref{eq:conf-epsilon}.
Also, because $p_a = \epsilon'(n)v_b$ is conformally invariant,
\[
\epsilon'(\tilde n) \tilde v_a = \epsilon'(n) v^a.
\]
Since $\epsilon'(\tilde n) = e^\Omega \epsilon'(n)$ we conclude
$\tilde v_a = e^{-\Omega} v_a$ and therefore 
\[
\tilde v^a = \tilde h^{ab} \tilde v_b = e^{2\Omega}h^{ab} e^{-\Omega} v_b = e^{\Omega} v^b
\]
which is equation \eqref{eq:conf-v}.

The conformal transformation rule for $\mathcal E$ can now be obtained by
substituting the constitutive equation $\epsilon(n) = k_\epsilon n^{4/3}$
into equation \eqref{eq:ham-fluid-v1} and conformally transforming $n$ and $h_{ab}$
appropriately.  In fact, with this choice of $\epsilon(n)$, it is easy to see that
\[
\mathcal E = \epsilon(n) \left(1+\frac{4}{3}|v|_h^2\right)
\]
which is exactly equation \eqref{eq:viscous-E} with all viscosity terms set to zero.
Since $|\tilde v|_{\tilde h} = |v|_h$, 
\begin{align}\label{eq:conf-E-perfect}
\widetilde{\mathcal E} &= \tilde \epsilon\left(1+\frac{4}{3}|v|_h^2\right)\\
& = e^{4\Omega} \epsilon\left(1+\frac{4}{3}|v|_h^2\right)\\
& = e^{4\Omega} \mathcal E.
\end{align}

The transformation rules $\tilde n = e^{3\Omega} n$ and 
$\widetilde{\mathcal E} = e^{4\Omega}\mathcal E$ are remarkably
simple, and we emphasize that they are specific to the conformal 
perfect fluid.  By contrast, the analogous relations for 
dust from \cite{IsenbergMaxwellMatter} are
\begin{align*}
\tilde n &= \frac{e^{3\Omega}n_\Sigma}{\sqrt{1+e^{2\Omega}\frac{1}{m^2}|p|_h^2}}\\
\widetilde{\mathcal E} &= 
n_\Sigma e^{4\Omega} \sqrt{e^{-2\Omega} m^2 + |p|_h^2}
\end{align*}
where $m$ is the mass per particle.

\subsection{Viscous conformal fluids}\label{secsec:scaling-viscous}

We now present heuristic arguments to extend 
the scaling rules \eqref{eq:conf-n}--\eqref{eq:conf-v}
to the viscous case.  Specifically, we derive the additional transformations
for $\mathcal D\epsilon$ (or equivalently $\mathcal D T$) and $\mathcal A^\Sigma_a$
along with associated rules for the energy and momentum densities of a viscous conformal fluid.

The scaling $\tilde v^a = e^\Omega v^a$ when
$\tilde h_{ab} = e^{-2\Omega} h_{ab}$ from equation \eqref{eq:conf-v}
is consistent with a spacetime transformation $\tilde u^a = e^\Omega u^a$
when $\tilde g_{ab} = e^{-2\Omega} g_{ab}$. This transformation rule for $u^a$
leaves the Weyl derivative $\mathbfcal D_a$ unchanged: with respect to $g_{ab}$ it
is expressed with the one form $\mathcal A_a$ from equation 
\eqref{eq:A-def} and with respect to $\tilde g_{ab}$ it is expressed 
with $\widetilde{\mathcal A}_a = \mathcal A_a - (d\Omega)_a$.  The induced
Weyl derivative $\mathbfcal D^\Sigma_a$ on $\Sigma$ is then also unchanged, 
which leads to
\begin{equation}\label{eq:conf-A}
\widetilde{\mathcal A}^{\;\Sigma}_a = \mathcal A^\Sigma_a - (d^\Sigma \Omega)_a
\end{equation}
which we take as one of our remaining needed transformation rules.

If the Weyl derivative $\mathbfcal D_a$ 
is fixed and if $\epsilon$ transforms as $\tilde \epsilon = e^{4\Omega}\epsilon$
then $\mathcal D_a \epsilon$ also has weight 4, so 
$\widetilde {\mathcal D}_a \tilde \epsilon = e^{4\Omega} \mathcal D_a \epsilon$.
Moreover, $\widetilde{\mathcal D}\tilde \epsilon =
\tilde u^a \widetilde{\mathcal D}_a \tilde \epsilon = e^{\Omega}
u^a e^{4\Omega}\mathcal D_a \epsilon$.  Hence we obtain our final 
conformal transformation
\begin{equation}
    \label{eq:conf-De}
    \widetilde{\mathcal D \epsilon} = e^{5\Omega} \mathcal D\epsilon.
\end{equation}

In summary, our scaling rules for the conformal viscous fluid matter fields under the metric
transformation $\tilde h_{ab} = e^{-2\Omega} h_{ab}$
are 
\begin{equation}\label{eq:conf-matter}
\begin{aligned}
\tilde \epsilon &= e^{4\Omega} \epsilon\\
\tilde v^a &= e^\Omega v^a\\
\widetilde{\mathcal A}^{\,\Sigma}_a &= \mathcal A^\Sigma_a - (d^\Sigma \Omega)_a\\
\widetilde{\mathcal D\epsilon} &= e^{5\Omega} \mathcal D\epsilon.
\end{aligned}
\end{equation}

Although matter scaling for the conformal method is not generally 
associated with a spacetime conformal scaling, for the viscous 
conformal fluid it is.  Indeed the matter transformations 
\eqref{eq:conf-matter}  all follow from spacetime scalings
\begin{equation}\label{eq:spacetime-conf}
    \begin{aligned}
        \tilde g_{ab} &=  e^{-2\Omega} g_{ab}\\
        \tilde u^a &= e^\Omega u^a\\
        \tilde \epsilon &= e^{4\Omega}\epsilon.
    \end{aligned}            
\end{equation}
A straightforward computation based on relations \eqref{eq:spacetime-conf} along with the scaling rule \eqref{eq:Weyl-conf} for the Weyl derivative 
shows that the stress-energy tensor then transforms as 
\begin{equation}\label{eq:T-conf}
\begin{aligned}
\widetilde T_{ab} &= e^{2\Omega} T_{ab}.
\end{aligned}
\end{equation}
This is precisely the scaling relation needed to ensure equation \eqref{E:Conformal_transformation_div_T} when $T_{ab}$ is trace-free and hence
$\widetilde \nabla^a \widetilde T_{ab} = 0$
when $\nabla^a T_{ab}=0$, a hallmark
of a conformally invariant matter field.

Since $\epsilon$ is proportional to $T^4$, equations \eqref{eq:conf-matter}
imply associated relations for temperature,
\begin{equation}\label{eq:conf-temp}
\begin{aligned}
    \widetilde T &= e^\Omega T\\
    \widetilde{\mathcal D T} &= e^{2\Omega} \mathcal D T.
\end{aligned}
\end{equation}
Because $v_a$ has weight $-1$, 
the induced shear tensor $\sigma^\Sigma_{ab}$ from
equation \eqref{eq:shear-etc} satisfies
\begin{equation}\label{eq:conf-shear}
\widetilde \sigma_{ab} = e^{-\Omega} \sigma_{ab}.
\end{equation}

With these elementary conformal scalings in hand we can compute
how the energy density transforms. First, we decompose
$\mathcal E$ from equation \eqref{eq:viscous-E-temp} as
\[
\mathcal E = \mathcal E^{\mathrm{cpf}} + c_\chi \mathcal E^{\chi} + c_\eta \mathcal E^{\eta} + c_\lambda \mathcal E^{\lambda} -c_\eta \frac{6\gamma T^3}{1+2\gamma^2}\theta_{ab}v^a v^b
\]
where
\begin{equation}\label{eq:E-parts}
    \begin{multlined}
\mathcal E^{\mathrm{cpf}} = c_\epsilon T^4 \left(1+\frac{4}{3}|v|_h^2\right);\quad
\mathcal E^{\mathrm{\chi}} =  3 T^2\, \mathcal DT
\left(1+\frac{4}{3}|v|_h^2\right);\quad
\mathcal E^{\eta} = -\frac{6T^3}{1+2\gamma^2}\sigma^\Sigma_{ab}v^a v^b;\quad\\
\mathcal E^{\lambda} = 2T^2\gamma v^a\left(\mathcal D_a^\Sigma T + v_a \mathcal DT\right).    \hfil
    \end{multlined}
\end{equation}
Since $|v|_h^2$ and $\gamma$ are conformally invariant, direct
computation based on equations \eqref{eq:conf-matter}--\eqref{eq:conf-shear} shows 
that each of 
$\mathcal E^{\mathrm{cpf}}$, $\mathcal E^{\mathrm{\chi}}$,
$\mathcal E^{\mathrm{\eta}}$ and $\mathcal E^{\mathrm{\lambda}}$
scale with a factor $e^{4\Omega}$ and hence
\begin{equation}\label{eq:E-conf}
\widetilde{\mathcal E} = e^{4\Omega}\left(
    \mathcal E^{\mathrm{cpf}} + c_\chi \mathcal E^{\chi} + c_\eta \mathcal E^{\eta} + c_\lambda \mathcal E^{\lambda}
\right) -c_\eta e^{5\Omega}\frac{6\gamma T^3} {1+2\gamma^2}\tilde \theta_{ab}v^a v^b
\end{equation}
where $\tilde \theta_{ab}$ is, for the moment, not yet specified.

Similarly, we split the momentum density $\mathcal J_a$ from equation \eqref{eq:viscous-J-temp}
as
\[
\mathcal J_a = \mathcal J_a^{\mathrm{cpf}} + c_\chi \mathcal J_a^{\chi}
+c_\eta \mathcal J_a^{\eta} + c_\lambda \mathcal J_a^\lambda -2 c_\eta T^3\left(\delta_a^d + \frac{v_a v^d}{1+2\gamma^2}\right)v^c\theta_{cd}
\]
with
\begin{equation}\label{eq:J-parts}
    \begin{multlined}
\mathcal J_a^{\mathrm{cpf}} = c_\epsilon T^4 \frac{4\gamma}{3}v_a;\quad
\mathcal J_a^{\chi}=4 \gamma T^2\, \mathcal D T\; v_a;\quad 
\mathcal J_a^{\eta}=-\frac{2T^3}{\gamma}\left(\delta_b^d+\frac{v_bv^d}{1+2\gamma^2}\right)v^c \sigma^\Sigma_{cd};\\
\mathcal J_a^{\lambda}=\gamma T^2 \left(\delta^c_a+\frac{v^cv_a}{\gamma^2}\right)\left(\mathcal D_c^\Sigma T + v_c\mathcal DT\right).\hfil
    \end{multlined}
\end{equation}
Again using equations \eqref{eq:conf-matter}--\eqref{eq:conf-shear} 
we find that each of $\mathcal J_a^{\mathrm{cpf}}$,
$\mathcal J_a^{\chi}$, $\mathcal J_a^{\eta}$ and $\mathcal J_a^{\lambda}$, 
scale with a factor of $e^{3\Omega}$ and hence
\begin{equation}\label{eq:J-conf}
\widetilde {\mathcal J}_a = e^{3\Omega}
\left( 
    \mathcal J_a^{\mathrm{cpf}} + c_\chi \mathcal J_a^{\chi}
    +c_\eta \mathcal J_a^{\eta} + c_\lambda \mathcal J_a^\lambda    
\right) -2c_\eta e^{4\Omega} T^3
\left(\delta_a^d + \frac{v_a v^d}{1+2\gamma^2}\right)v^c\tilde \theta_{cd}.
\end{equation}

At this point we encounter a tension between the conformal method
and a spacetime interpretation of rescaling matter fields.
From a spacetime perspective, the trace-free part of the 
second fundamental form of $\Sigma$ becomes
$e^{-\Omega} \theta_{ab}$ under the conformal change $g_{ab} \to e^{-2\Omega}g_{ab}$.  
If we were able to adopt this transformation for $\tilde \theta_{ab}$, 
then $\mathcal E$ and $\mathcal J_a$
would transform exactly as for a perfect conformal fluid, equations 
\eqref{eq:conf-J-perfect} and \eqref{eq:conf-E-perfect}, and the 
application of the conformal method to the viscous fluid would proceed exactly 
as in the inviscid case, Lemma \ref{lem:cpf-data} below.  

By contrast, the scaling of $\theta_{ab}$ in the conformal method arises
through its association with the momentum conjugate to the conformal class
of the metric \cite{maxwell_conformal_2014}. From equation \eqref{eq:K-conf}
one computes instead
\begin{equation}\label{eq:theta-conf}
\tilde \theta_{ab} = e^{\Omega} \left( p_{ab} + \frac{1}{2N} (\ck W)_{ab}\right)
\end{equation}
with the lapse $N$, conformal momentum $p_{ab}$ and vector field $W^a$ discussed
in Section \ref{S:Scaling_matter}.
We take equations \eqref{eq:E-conf} and \eqref{eq:J-conf},
supplemented by equation \eqref{eq:theta-conf}, as the conformal transformation rules
for the energy and momentum density.

The failure of $\theta_{ab}$ to conveniently scale keeps the momentum constraint
from decoupling from the Hamiltonian constraint for the viscous fluid; see equation
\eqref{eq:LCBY-mom-vcf} below. Although we are not able to address this issue,
we point to a direction that might lead to better seed data.  The tensor $p_{ab}$
represents, in standard scenarios, a momentum conjugate to the conformal class of the metric.
In the current setting, we do not have a Lagrangian at hand from which one can compute momenta.
However, if one could be found, the non-gravitational part
would depend on the first derivatives of the metric. Hence the momentum represented 
by $p_{ab}$ would require adjusting in the same way that the momentum of a charged
particle involves the electromagnetic vector potential in addition to the free particle momentum.  
Usage of the this true momentum,
rather than $p_{ab}$, may lead to a better system of equations that we have not been able to identify.

\section{Initial data construction}\label{S:Initial_data}

In this section we solve the LCBY equations \eqref{eq:LCBY-ham-matter}--\eqref{eq:LCBY-mom-matter} starting from the following seed fields:
\begin{equation}\label{eq:seed-data}
    \begin{aligned}
    \text{metric:}\quad& h_{ab},\\
    \text{conformal class momentum:}\quad& p_{ab},\\
    \text{mean curvature:}\quad&\tau\in \Reals,\\
    \text{lapse:}\quad& N,\\
    \text{temperature:}\quad& T,\\
    \text{projected velocity:}\quad& v^a,\\
    \text{along-flow temperature derivative:}\quad& \mathcal DT,\\
    \text{induced Weyl connection:}\quad& \mathcal A^\Sigma_a.
    \end{aligned}
\end{equation}
The physical energy and momentum densities 
$\widetilde{\mathcal E}$ and $\widetilde{\mathcal J}_a$
appearing on the right-hand sides of
equations \eqref{eq:LCBY-ham-matter}--\eqref{eq:LCBY-mom-matter}
are obtained by substituting the matter field scaling relations
\eqref{eq:conf-matter}--\eqref{eq:conf-temp} into
the energy and momentum densities
 \eqref{eq:viscous-E-temp}--\eqref{eq:viscous-J-temp}.
Doing so, the LCBY equations become
\begin{equation}\label{eq:LCBY-Ham-vcf}
    \begin{multlined}
        -8\Delta \phi + R_h \phi -\left|p+\frac{1}{2N}\ck W\right|_h^2\phi^{-7}+\frac{2}{3}\tau^2\phi^5 
         =\\
         \qquad\qquad\qquad   2\phi^{-3}\left[\mathcal E^{\rm cpf}  + c_\chi \mathcal E^\chi + c_\eta \mathcal E^{\eta} +c_\lambda \mathcal E^{\lambda}\right]
        -
        2c_\eta\phi^{-7}\frac{6\gamma T^3}{1+2\gamma^2}\left[p(v,v) + \frac{1}{2N}(\ck W)(v,v)\right]
    \end{multlined}\\
    \end{equation}
    and 
    \begin{equation}\label{eq:LCBY-mom-vcf}
    \begin{multlined}
            -\nabla^a \left(\frac{1}{2N} (\ck W)_{ab}\right) =\\
            \qquad\qquad\mathcal J^{{\rm cpf}}_b + c_\chi \mathcal J^\chi_b 
            + c_\eta \mathcal J^{\eta}_{b} 
            + c_\lambda \mathcal J^\lambda_b
            -2c_\eta \phi^{-4} T^3\left(\delta^d_b+\frac{v^d v_b}{1+2\gamma^2}\right)v^c\left(p_{cd}
            +\frac{1}{2N}(\ck W)_{cd}\right)
        \end{multlined}    
\end{equation}
where $\gamma= 1+|v|_h^2$ and where 
equations \eqref{eq:E-parts} and \eqref{eq:J-parts} define
the various terms $\mathcal E^{\rm{cpf}}$, $\mathcal J_b^{\rm{cpf}}$ and so on.

Before proceeding to solve equations \eqref{eq:LCBY-Ham-vcf}--\eqref{eq:LCBY-mom-vcf},
we remark on a conformal covariance property enjoyed by these equations that we use extensively below.
Suppose we start with seed data 
$(h_{ab}, p_{ab}, \tau, N, T, v^a, \mathcal D T, \mathcal A_a^\Sigma)$
but then introduce a conformal change to hatted seed variables $\hat h_{ab}$
and so forth given by
\[
(e^{-2\widehat\Omega} h_{ab}, e^{\widehat \Omega} p_{ab}, \tau, 
e^{-3 \widehat \Omega} N, e^{\widehat \Omega} T, e^{\widehat\Omega} v^a, e^{5\widehat \Omega} \mathcal D T, 
\mathcal A_a^\Sigma - (d^\Sigma\widehat\Omega)_a)
\] 
for some $\widehat \Omega$.
Then $\phi$ and $W^a$ solve equations \eqref{eq:LCBY-Ham-vcf}--\eqref{eq:LCBY-mom-vcf}
with respect to the original seed data if and only if 
$\widehat \phi =e^{\Omega/2} \phi$ and $\widehat W^a = W^a$
solve the same equations with respect to the hatted seed data.
This fact follows from an elementary computation using the various scaling rules and 
critically uses the conformally transforming lapse $\widehat N = e^{-3\widehat \Omega}N$ discussed in Section \ref{S:Scaling_matter}.
As a consequence of this conformal covariance, when solving
equations \eqref{eq:LCBY-Ham-vcf}--\eqref{eq:LCBY-mom-vcf}
we are welcome to first make a conformal 
transformation of the data to start with any convenient representative 
of the conformal class of $h_{ab}$.

\subsection{Compact manifolds}\label{secsec:initial-data-compact}

Suppose throughout this section 
that $\Sigma$ is a compact, connected $3$-manifold. 
We work with standard $L^2$-based Sobolev spaces $H^k(\Sigma)$ with $k\in \Ints_{\ge 0}$
and assume the following regularity
for the seed data:
\begin{equation}\label{eq:data-set}
\begin{aligned}
h_{ab} &\in H^k(\Sigma),\\
p_{ab} &\in H^{k-1}(\Sigma),\\
\tau &\in \Reals,\\
N &\in H^{k}_+(\Sigma)=\{u\in H^k(\Sigma): u>0\},\\
T  &\in H^{k-1}(\Sigma),\\
v^a  &\in H^{k-1}(\Sigma),\\
\mathcal DT &\in H^{k-2}(\Sigma),\\
\mathcal A^\Sigma_a  &\in H^{k-2}(\Sigma).
\end{aligned}
\end{equation}
The conformal class momentum $p_{ab}$ is trace-free and divergence-free and can be
constructed in the usual way using York splitting 
\cite{york_conformally_1973} via \cite{maxwell_rough_2005} Theorem 5.2. 
We assume additionally that $N>0$ and that $T\ge 0$; regions where $T=0$
are devoid of fluid. A collection \eqref{eq:data-set} 
satisfying these properties is a 
\textbf{conformal fluid seed data set} with regularity $k$.
For perfect conformal fluids,
we omit the irrelevant data $\mathcal D T$ and $\mathcal A^\Sigma_a$
and refer to the collection as a \textbf{perfect conformal fluid seed data set}.

In practice, we  assume $k\ge 3$ and hence $H^{k}(\Sigma)$ and $H^{k-1}(\Sigma)$
are both algebras and multiplication $H^{k-1}(\Sigma)\times H^{k-2}(\Sigma)\to H^{k-2}(\Sigma)$
is well defined.  Using these observations, a computation shows each of the terms
$\mathcal E^{\mathrm{cpf}}$,
$\mathcal J^{\mathrm{cpf}}_a$ and so forth on the right-hand sides
of equations \eqref{eq:LCBY-Ham-vcf}--\eqref{eq:LCBY-mom-vcf}
are well-defined elements of $H^{k-2}(\Sigma)$.

First consider the inviscid case where $c_\chi=c_\eta=c_\lambda=0$.  
Equations \eqref{eq:LCBY-Ham-vcf}--\eqref{eq:LCBY-mom-vcf} become
\begin{align}
    -8\Delta \phi + R_h \phi -\left|p+\frac{1}{2N}\ck W\right|_h^2\phi^{-7}+\frac{2}{3}\tau^2\phi^5 
    &=2\phi^{-3}\mathcal E^{\rm cpf}\label{eq:LCBY-Ham-pcf}\\
    -\nabla^a \left(\frac{1}{2N} (\ck W)_{ab}\right) &=\mathcal J^{{\rm cpf}}_b\label{eq:LCBY-mom-pcf}
\end{align}
and, importantly, the momentum constraint in this system 
can be solved independently of the Hamiltonian constraint.

The solvability of system \eqref{eq:LCBY-Ham-pcf}--\eqref{eq:LCBY-mom-pcf}
depends on the Yamabe invariant $\mathcal Y_g$ of the conformal class
of $h_{ab}$, which for compact manifolds is defined by
\[
  \mathcal Y_h = \inf_{\phi \in C^\infty(\Sigma) \atop \phi\not\equiv 0}  
  \frac{\int_\Sigma 8|\nabla \phi|_h^2 + R\phi^2 dV_h }{||\phi||_{L^6}^2}.
\]
For example, suppose the mean curvature $\tau$ of a solution
of the constraints vanishes and that the energy density is non-negative
and not identically zero.  
The Hamiltonian constraint \eqref{eq:ham-constraint}
then reads $R_h = |K|^2_h + 2\mathcal E$ and consequently the scalar curvature
is non-negative and does not vanish identically.  
This is only possible if the conformal class of $h_{ab}$ is Yamabe positive,
i.e., $\mathcal Y_h>0$.


\begin{lemma} \label{lem:cpf-data}
Consider a perfect conformal fluid seed data set with regularity $k\ge 3$ on a 
connected compact 3-manifold $\Sigma$.
Suppose $h_{ab}$ has no conformal Killing fields, that $T\not\equiv 0$,
and that either $\mathcal Y_h >0$ or $\tau\neq 0$.
Then there exists a unique 
 solution $\phi\in H^{k}(\Sigma)$ and $W^a\in H^k(\Sigma)$
 of system  \eqref{eq:LCBY-Ham-pcf}-\eqref{eq:LCBY-mom-pcf}.
\end{lemma}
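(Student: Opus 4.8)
The plan is to exploit the decoupling available in the inviscid CMC case: by \eqref{eq:J-parts} the right-hand side $\mathcal J^{\rm cpf}_b$ of \eqref{eq:LCBY-mom-pcf} depends only on the seed fields $T$ and $v^a$, not on $\phi$, so the momentum constraint can be solved first. The operator $W^b\mapsto-\nabla^a\bigl(\tfrac{1}{2N}(\ck W)_{ab}\bigr)$ is a second-order, formally self-adjoint, elliptic operator; pairing it with $W^b$ and integrating by parts shows its kernel is exactly the space of conformal Killing fields of $h_{ab}$, which is trivial by hypothesis. Since $k\ge3$ makes $H^k(\Sigma)$ an algebra closed under reciprocals of positive functions, we have $1/N\in H^k(\Sigma)$, the operator is a Fredholm map $H^k(\Sigma)\to H^{k-2}(\Sigma)$ of index zero, and triviality of the kernel makes it an isomorphism. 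As $\mathcal J^{\rm cpf}_b\in H^{k-2}(\Sigma)$, this produces a unique $W^a\in H^k(\Sigma)$ solving \eqref{eq:LCBY-mom-pcf}, with $(\ck W)_{ab}\in H^{k-1}(\Sigma)$ matching the regularity of $p_{ab}$. (For the low-regularity elliptic theory behind these statements see \cite{maxwell_rough_2005}.)

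With $W^a$ in hand, \eqref{eq:LCBY-Ham-pcf} becomes a scalar Lichnerowicz equation
\begin{equation*}
    -8\Delta\phi + R_h\,\phi + \tfrac23\tau^2\phi^{5} = \mathcal B\,\phi^{-7} + \mathcal C\,\phi^{-3},
\end{equation*}
with fixed nonnegative coefficients $\mathcal B=\bigl|p+\tfrac{1}{2N}\ck W\bigr|_h^2$ and $\mathcal C=2\mathcal E^{\rm cpf}=2c_\epsilon\,T^4\bigl(1+\tfrac43|v|_h^2\bigr)$, both in $H^{k-1}(\Sigma)$. Since $c_\epsilon>0$ and $T\not\equiv0$, the matter coefficient satisfies $\mathcal C\not\equiv0$, hence $\mathcal B+\mathcal C\not\equiv0$; combined with the hypothesis that $\mathcal Y_h>0$ or $\tau\ne0$, this places the equation in precisely the case of the Isenberg CMC classification for which a unique positive solution exists, and I would invoke that theorem --- in the low-regularity form of \cite{maxwell_rough_2005}, resting on \cite{isenberg_constant_1995} --- rather than reprove it. Its proof proceeds by sub- and supersolutions: after normalizing $h_{ab}$ to its Yamabe representative (permitted by the conformal covariance noted above), the upper barrier is supplied by the linear operator when $\mathcal Y_h>0$ and by the growth of $\tfrac23\tau^2\phi^5$ when $\tau\ne0$, while the lower barrier comes from the blow-up of $\mathcal B\phi^{-7}+\mathcal C\phi^{-3}$ as $\phi\downarrow0$; uniqueness is the usual maximum-principle argument using monotonicity of the nonlinearity.

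It remains to bootstrap regularity. The existence theorem yields a positive, continuous $\phi$, and a standard elliptic bootstrap (as in \cite{maxwell_rough_2005}) then raises it to $H^k(\Sigma)$, each step gaining two derivatives until the ceiling set by $R_h\in H^{k-2}(\Sigma)$ is reached; the nonlinear terms cause no obstruction because composition with smooth functions away from zero and multiplication preserve $H^{k-1}(\Sigma)$ once $k\ge3$. The main obstacle --- indeed the only nonroutine ingredient --- is the Lichnerowicz step: confirming that the stated seed-data hypotheses land exactly inside the solvable case of the classification (in particular that $\mathcal C\not\equiv0$, coming from $T\not\equiv0$, is what secures $\mathcal B+\mathcal C\not\equiv0$ in the subcase $\tau=0$, which then forces $\mathcal Y_h>0$) and carrying this out at Sobolev regularity $k\ge3$, for which the machinery of \cite{maxwell_rough_2005} is needed. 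Solving the linear momentum constraint and the final bootstrap are routine.
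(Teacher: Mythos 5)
Your overall architecture coincides with the paper's: use conformal covariance to normalize, solve the decoupled momentum constraint via the fact that $\nabla^a\bigl(\tfrac{1}{2N}(\ck\,\cdot)_{ab}\bigr)$ is an isomorphism $H^k\to H^{k-2}$ in the absence of conformal Killing fields, and then treat the Lichnerowicz equation \eqref{eq:LCBY-Ham-pcf} by sub/supersolutions plus a maximum-principle uniqueness argument. The momentum-constraint step and the regularity bookkeeping are fine and essentially identical to the paper's.

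The gap is in the Lichnerowicz step, which is exactly the part the paper does not outsource. The theorem you invoke --- the CMC classification of \cite{isenberg_constant_1995} in the rough form of \cite{maxwell_rough_2005} --- is stated for the \emph{vacuum} Lichnerowicz equation; neither reference covers the scaled-matter term $2\mathcal E^{\rm cpf}\phi^{-3}$ (with $H^{k-2}$-level coefficients and the conformal lapse) that appears here, which is presumably why the authors give a self-contained proof using only the sub/supersolution machinery (Proposition 6.2) and the linear solvers from \cite{maxwell_rough_2005}. Your fallback sketch does not close this: the proposed lower barrier ``from the blow-up of $\mathcal B\phi^{-7}+\mathcal C\phi^{-3}$ as $\phi\downarrow 0$'' only helps where $\mathcal B+\mathcal C>0$. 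In the case $\mathcal Y_h>0$, $\tau=0$, after Yamabe normalization one has $R>0$, and at points where $\mathcal B+\mathcal C$ vanishes a small constant $c$ fails the subsolution inequality $Rc\le \mathcal B c^{-7}+\mathcal C c^{-3}$; since $T$ (hence $\mathcal E^{\rm cpf}$) is only assumed $\not\equiv 0$, such points are allowed. The paper's device for precisely this difficulty is to solve the auxiliary linear problem $-8\Delta w+(R_++\tfrac23\tau^2)w=\mathcal B+2\mathcal E^{\rm cpf}$, obtain $w>0$ by the maximum principle, and take $c_-w$ and $c_+w$ as the barrier pair. Similarly, ``monotonicity of the nonlinearity'' is not available as stated because $R_h$ may be negative; uniqueness in the paper first uses conformal covariance to reduce to a second solution $\equiv 1$, after which the nonlinearity is a sum of monotone terms vanishing at $1$ and an integration-by-parts argument applies. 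So either supply a reference that genuinely covers the non-vacuum, rough, conformal-lapse case, or reproduce a barrier construction of the above type; as written, the central existence/uniqueness step rests on citations that do not cover it and on a subsolution that fails exactly in the hardest subcase.
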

\begin{proof}
The conformal covariance of equations \eqref{eq:LCBY-Ham-vcf}--
\eqref{eq:LCBY-mom-vcf} discussed above
allows us to assume without loss of generality
that $h_{ab}$ is any particular 
representative of its conformal class.  From 
\cite{maxwell_rough_2005} Proposition 3.3 we can 
therefore assume that the scalar curvature of
$h_{ab}$ is continuous and has a constant sign
agreeing with the sign of $\mathcal Y_h$.

Because $h_{ab}$ has no conformal Killing fields,
Proposition 5.1 of \cite{maxwell_rough_2005} implies 
the vector Laplacian $\nabla^a (\ck \cdot)_{ab}$ is an isomorphism
$H^k(\Gamma)\to H^{k-2}(\Gamma)$.
The same fact remains true for the lapse-modified operator 
$\nabla^a (1/(2N)\ck \cdot)_{ab}$,
which can be established either by adjusting the proof of
\cite{maxwell_rough_2005} Proposition 5.1 or by simply observing that
$ N\, \nabla^a ( N^{-1}\ck \cdot)_{ab}$ is the 
vector Laplacian of the conformally related metric 
$N^{-2/3} h_{ab}$.  Regardless, 
since $\mathcal J^{\mathrm{cpf}}_a\in H^{k-2}(\Sigma)$, 
equation \eqref{eq:LCBY-mom-pcf} admits a 
unique solution $W^a \in H^k(\Sigma)$.

Having solved for $W^a$, we next show that 
the Lichnerowicz equation \eqref{eq:LCBY-Ham-pcf} has as unique solution
using barrier techniques.
A function $w_+\in H^k(\Sigma)$ 
is a supersolution of equation \eqref{eq:LCBY-Ham-pcf} if
\[
    -8\Delta w_+ + R_h w_+ -\left|p+\frac{1}{2N}\ck W\right|_h^2w_+^{-7}+\frac{2}{3}\tau^2w_+^5 
    \ge 2w_+^{-3}\mathcal E^{\rm cpf}
\]
with the inequality holding pointwise-almost everywhere. 
A subsolution $w_-$ is defined similarly with the inequality reversed.  
Proposition 6.2 of \cite{maxwell_rough_2005} implies that if we can find a 
sub/supersolution pair $w_-$ and $w_+$ with $w_-\le w_+$, then there 
exists a solution $\phi$ of equation \eqref{eq:LCBY-Ham-pcf}
with $w_- \le \phi \le w_+$. 

For brevity of notation let $\alpha = 2/3\, \tau^2$ and let
$\beta = |p+1/(2N)\ck W|^2$; observe that $\beta\in H^{k-1}(\Sigma)$. 
Consider the PDE
\begin{equation}\label{eq:w}
-8\Delta w + (R_+ + \alpha) w  = \beta + 2\mathcal E^{\mathrm{cpf}}
\end{equation}
where $R_+ = \max(R,0)\in H^{k-2}(\Sigma)$.  Because we have assumed
that $\tau\neq 0$
if $Y_h\le 0$, we know that $R_+ + \alpha > 0$ and hence 
\cite{maxwell_rough_2005} Proposition 6.1 implies 
there exists a unique solution $w\in H^k(\Sigma)$ of equation \eqref{eq:w}.
Since each of $\beta$ and $\mathcal E^{\mathrm{cpf}}$ are nonnegative,
and since $\mathcal E^{\mathrm{cpf}}\not\equiv 0$ as a consequence
of our hypothesis $T\not\equiv 0$, we know that $w\not \equiv 0$ as well. 
The maximum principle 
(\cite{maxwell_rough_2005} Lemma 2.9 and Proposition 2.10) 
then implies $w>0$ everywhere.

We claim that $cw$ is a subsolution of equation \eqref{eq:LCBY-Ham-pcf}
if $c$ is sufficiently small, and a supersolution if $c$ is sufficiently large.
A computation using the fact that $R-R_+=R_-:=\min(R,0)$ shows
\begin{equation}\label{eq:sub-sup}
 -8\Delta cw + R_h cw + \alpha(cw)^5 + \beta(cw)^{-7} -2\mathcal E^{\mathrm{cpf}}(cw)^{-3}
 = R_- cw + \alpha((cw)^5-cw) + \beta(c-(cw)^{-7}) + 2\mathcal E^{\mathrm{cpf}}(c-(cw)^{-3}).
\end{equation}
Because $w$ is continuous and positive on the compact manifold $\Sigma$, we can take $c$ sufficiently small so that
each of the expressions $(cw)^5-cw$, $c-(cw)^{-7}$ and $c-(cw)^{-3}$
are negative and hence the last three terms of equation
\eqref{eq:sub-sup} are nonpositive.  The first term is manifestly 
non-positive and hence we can pick a small value $c_->0$
such that $w_-=c_- w$ is a subsolution.
A similar argument shows that if $c$ is sufficiently large, then
each of $c-(cw)^{-7}$ and $c-(cw)^{-3}$ are positive and hence the last
two terms of the right-hand side of 
equation \eqref{eq:sub-sup} are non-negative.  The first two terms 
can be written
\[
\alpha (cw)^5 +(R_--\alpha)cw.
\]
Hence if $c$ is sufficiently large, e.g. 
\[
c > \frac{1}{\min w} \left(\frac{\alpha-\min(R_-)}{\alpha}\right)^{\frac 14},
\] 
then $\alpha (cw)^5 +(R_--\alpha)cw\ge 0$ as well.  Hence
we can pick a large value $c_+>c_-$ such that $w_+=c_+w$ is a supersolution.
Since $w_-\le w_+$ are a sub- and supersolution pair, we obtain a 
solution $\phi$ of equation \eqref{eq:LCBY-Ham-pcf}.

Turning to uniqueness, suppose $\phi_1$ and $\phi_2$ are
two solutions.  Again using conformal covariance,
we can assume that $\phi_2=1$, in which case $\phi_1$
solves
\begin{equation}\label{eq:uniq}
-8\Delta\phi_1 + \alpha(\phi_1^5-\phi_1) + \beta(\phi_1-\phi_1^{-7})    
+2\mathcal E^{\mathrm{cpf}}(\phi_1-\phi_1^{-3})=0.
\end{equation}

Since $(\phi_1-1)_+ = \max(\phi_1-1,0)\in H^1(\Sigma)$, we can apply
integration by parts to conclude
\begin{equation}\label{eq:int-by-parts}
8\int_M |\nabla(\phi_1-1)_+|_h^2\; dV_h = 
\int_M (-8\Delta (\phi_1-1)) (\phi_1-1)_+\; dV_h 
= \int_{U} (-8\Delta \phi_1) (\phi_1-1)_+\; dV_h
\end{equation}
where $U$ is the open set where $\phi_1>1$. On $U$, $-8\Delta \phi_1 \le 0$
and hence $\int_{U} (-8\Delta \phi_1) (\phi_1-1)_+\; dV_h \le 0$. 
But then as a consequence of equation \eqref{eq:int-by-parts}
and the connectivity of $\Sigma$ we 
conclude $(\phi_1-1)_+$ is constant.  A similar argument shows that $(\phi_1-1)_-$, 
defined analogously, is also constant and it follows that $\phi_1$ is itself constant. 
But then equation \eqref{eq:uniq} implies
\[
    \alpha(\phi_1^5-\phi_1) + \beta(\phi_1-\phi_1^{-7})    
    +2\mathcal E^{\mathrm{cpf}}(\phi_1-\phi_1^{-3})=0.
\]
Since $\alpha$, $\beta$ and $\mathcal E^{\mathrm{cpf}}$ are all non-negative,
and since $\mathcal E^{\mathrm{cpf}}$ does not vanish identically, this can only 
happen if $\phi_1\equiv 1$.
\end{proof}

We now show that starting with a perfect conformal fluid solution
of the constraint equations generated by the previous lemma, we can 
perturb the viscosity coefficients to nonzero values
and generate nearby conformal viscous fluid solutions.

\begin{theorem}\label{thm:initial-cpct}
Consider a conformal fluid seed data set with regularity $k\ge 3$ on a compact, connected
3-manifold $\Sigma$.  Suppose additionally
that $h_{ab}$ has no conformal Killing fields,
that $T\not\equiv 0$, and that either $\tau\neq 0$ or $\mathcal Y_h > 0$.

Let $(\phi_{\mathrm{cpf}}, W^a_{\mathrm{cpf}})$ be the unique solution 
of the viscosity-free equations \eqref{eq:LCBY-Ham-pcf}--\eqref{eq:LCBY-mom-pcf}
provided by Lemma \ref{lem:cpf-data} for the corresponding perfect conformal
fluid seed data set.
There is a neighborhood $\Gamma$ of 
$(\phi_{\mathrm{cpf}}, W^a_{\mathrm{cpf}})$ 
in $H^k(\Sigma)\times H^k(\Sigma)$ such that
for any sufficiently small viscosity coefficients 
$c_\chi$, $c_\eta$, $c_\lambda$
there exists a unique solution of system
\eqref{eq:LCBY-Ham-vcf}--\eqref{eq:LCBY-mom-vcf}
in $\Gamma$.
\end{theorem}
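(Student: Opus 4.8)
The plan is to solve the viscous system \eqref{eq:LCBY-Ham-vcf}--\eqref{eq:LCBY-mom-vcf} by the Implicit Function Theorem, taking the viscosity coefficients $(c_\chi,c_\eta,c_\lambda)\in\Reals^3$ as the perturbation parameter and the perfect-fluid solution of Lemma~\ref{lem:cpf-data} as the base point. First I would use the conformal covariance of \eqref{eq:LCBY-Ham-vcf}--\eqref{eq:LCBY-mom-vcf} recorded above to normalize the base point: since $k\ge3$, $\phi_{\mathrm{cpf}}\in H^k(\Sigma)$ is continuous and, being produced in the proof of Lemma~\ref{lem:cpf-data} as $w_-\le\phi_{\mathrm{cpf}}\le w_+$ with $w_->0$, is bounded below by a positive constant; hence $\widehat\Omega:=-2\log\phi_{\mathrm{cpf}}\in H^k(\Sigma)$ and the conformally transformed seed data (still a conformal fluid seed data set with regularity $k$) has $\phi_{\mathrm{cpf}}\equiv1$, with $W^a_{\mathrm{cpf}}$ unchanged. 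I work in this gauge throughout, restoring the original data at the end.

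Next I would set up the nonlinear map. Fix an open neighbourhood $\mathcal U$ of $(1,W^a_{\mathrm{cpf}})$ in $H^k(\Sigma)\times H^k(\Sigma)$ on which $\phi$ stays bounded away from $0$ (possible since $H^k(\Sigma)\hookrightarrow C^0(\Sigma)$), and define
\[
\mathcal F\colon\Reals^3\times\mathcal U\longrightarrow H^{k-2}(\Sigma)\times H^{k-2}(\Sigma),
\]
where $\mathcal F\big((c_\chi,c_\eta,c_\lambda),\phi,W\big)$ is the pair obtained by moving all terms of \eqref{eq:LCBY-Ham-vcf} and of \eqref{eq:LCBY-mom-vcf} to one side. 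Smoothness of $\mathcal F$ follows from the algebra structure of $H^k(\Sigma)$, $H^{k-1}(\Sigma)$ and continuity of multiplication $H^{k-1}(\Sigma)\times H^{k-2}(\Sigma)\to H^{k-2}(\Sigma)$ (already used above to place all the matter terms on the right-hand sides of \eqref{eq:LCBY-Ham-vcf}--\eqref{eq:LCBY-mom-vcf} in $H^{k-2}(\Sigma)$), the at-most-quadratic dependence on $(c_\chi,c_\eta,c_\lambda)$ and on $W$ (through $\ck W$ and $|p+\tfrac1{2N}\ck W|_h^2$), and the smoothness of $\phi\mapsto\phi^{-p}$ on $\{\phi>0\}\subset H^k(\Sigma)$. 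By Lemma~\ref{lem:cpf-data} in the normalized gauge, $\mathcal F(0,1,W^a_{\mathrm{cpf}})=0$.

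The crux is invertibility of $D_{(\phi,W)}\mathcal F$ at $(0,1,W^a_{\mathrm{cpf}})$. At $(c_\chi,c_\eta,c_\lambda)=0$ the system is the perfect-fluid system \eqref{eq:LCBY-Ham-pcf}--\eqref{eq:LCBY-mom-pcf}, whose momentum equation is independent of $\phi$, so this linearization is block upper-triangular in the $(\phi,W)$ splitting, with diagonal blocks the linearized Lichnerowicz operator $L$ (in $\phi$) and the lapse-modified vector Laplacian $\stokeslap_N Z=-\nabla^a\big(\tfrac1{2N}(\ck Z)_{ab}\big)$ (in $W$). By the argument already given in the proof of Lemma~\ref{lem:cpf-data}, the absence of conformal Killing fields of $h_{ab}$ makes $\stokeslap_N$ an isomorphism $H^k(\Sigma)\to H^{k-2}(\Sigma)$. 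For $L$: differentiating \eqref{eq:LCBY-Ham-pcf} at $\phi\equiv1$ and using that $\phi\equiv1$ solves \eqref{eq:LCBY-Ham-pcf} to substitute for $R_h$ gives $L\psi=-8\Delta\psi+V\psi$ with $V=\tfrac83\tau^2+8\,|p+\tfrac1{2N}\ck W_{\mathrm{cpf}}|_h^2+8\,\mathcal E^{\mathrm{cpf}}$, which is $\ge0$ and $\not\equiv0$ (as $\mathcal E^{\mathrm{cpf}}\not\equiv0$, because $T\not\equiv0$); hence $L$ is an isomorphism $H^k(\Sigma)\to H^{k-2}(\Sigma)$. A block-triangular operator with invertible diagonal blocks being invertible, $D_{(\phi,W)}\mathcal F(0,1,W^a_{\mathrm{cpf}})$ is an isomorphism.

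The Implicit Function Theorem then yields a neighbourhood $\Gamma\subseteq\mathcal U$ of $(1,W^a_{\mathrm{cpf}})$ and an $\varepsilon>0$ such that for every $(c_\chi,c_\eta,c_\lambda)$ with $|(c_\chi,c_\eta,c_\lambda)|<\varepsilon$ there is a unique $(\phi,W)\in\Gamma$ with $\mathcal F\big((c_\chi,c_\eta,c_\lambda),\phi,W\big)=0$; undoing the conformal normalization gives the theorem for the original seed data. I expect the delicate point to be the invertibility of $L$: this is where the hypotheses $T\not\equiv0$ and --- through Lemma~\ref{lem:cpf-data} --- $\tau\ne0$ or $\mathcal Y_h>0$ are genuinely used, whereas the smoothness of $\mathcal F$, the block-triangular structure, and the isomorphism property of $\stokeslap_N$ are routine given the machinery already in place.
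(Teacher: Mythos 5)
Your proposal is correct and follows essentially the same route as the paper: conformal normalization to $\phi_{\mathrm{cpf}}\equiv 1$, a $C^1$ residual map $F$ on $\Reals^3\times H^k_+(\Sigma)\times H^k(\Sigma)$, a block-triangular linearization whose diagonal blocks are $-8\Delta+\Lambda$ (with $\Lambda=8|p+\tfrac{1}{2N}\ck W|_h^2+\tfrac{8}{3}\tau^2+8\mathcal E^{\mathrm{cpf}}\ge 0$, $\not\equiv 0$, obtained by substituting the Hamiltonian constraint) and the lapse-modified vector Laplacian (invertible by absence of conformal Killing fields), followed by the Implicit Function Theorem. No substantive differences from the paper's argument.
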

\begin{proof}
Let
\begin{multline}\label{eq:FH}
    F_{H}[(c_\chi, c_\lambda, c_\eta),(\phi, W^a) ] = 
    -8\Delta \phi + R\phi  - \left|p+\frac{1}{2N}\ck W\right|^2_h\phi^{-7} 
    +\frac 23 \tau^2 \phi^{5} \\
    - 2\phi^{-3}( \mathcal E^{\mathrm{cpf}}
    +c_\chi \mathcal E^{\chi} + c_\eta \mathcal E^{\eta} + c_\lambda \mathcal E^{\lambda})
-
2c_\eta\phi^{-7}\frac{6\gamma T^3}{1+2\gamma^2}\left[p(v,v) + \frac{1}{2N}(\ck W)(v,v)\right]
\end{multline}
and
\begin{multline}\label{eq:FM}
    F_{M}[(c_\chi, c_\eta, c_\lambda), (\phi, W^a) ] = \\
\nabla^a\left(\frac{1}{2N} (\ck W)_{ab}\right) 
+ \mathcal J_b^{\mathrm{cpf}} + c_\chi \mathcal J_b^{\chi}
+ c_\eta \mathcal J_b^{\eta}
+ c_\lambda \mathcal J_b^{\lambda}
-2c_\eta \phi^{-4} T^3\left(\delta^d_b+\frac{v^d v_b}{1+2\gamma^2}\right)v^c\left(p_{cd}
+\frac{1}{2N}(\ck W)_{cd}\right).
\end{multline}
The function $F=F_H\times F_M$ takes $(c_\chi, c_\eta, c_\lambda)\in \Reals^3$
and $(\phi, W)\in Y := H_+^k(\Sigma)\times H^k(\Sigma)$ to $Z:=H^{k-2}(\Sigma)\times H^{k-2}(\Sigma)$.
In fact, $F$ is continuously differentiable in its arguments, 
a claim that only relies on 
standard properties of multiplication between
Sobolev spaces along with continuous differentiability of the map $\phi\mapsto \phi^{-1}$
on $H^k_+(\Sigma)$. This latter fact can be established by techniques similar to 
those used to prove the more technical result Lemma \ref{lem:one-over} below.

From conformal covariance we can assume without loss of generality 
that $\phi_{\mathrm {cpf}}\equiv 1$ and 
hence $F[0, (1, W^a_{\mathrm{cpf}})]=0$.  Moreover,
at the point 
$(0,(1,W^a_{\mathrm{cpf}}))\in \Reals^3\times Y$
 the derivative of $F$ with respect
to $\phi$ and $W^a$ has the block form
\[
DF = \begin{bmatrix}     
-8\Delta + \Lambda & *\\
0 & \mathrm{div} \left(\frac{1}{2N} \ck \cdot \right)
\end{bmatrix}
\]
with 
\begin{equation}
    \Lambda = R_h +7\left|p+\frac{1}{2N}\ck W\right|^2 +\frac{10}{3}\tau^2+3\mathcal E^{\mathrm{cpf}}
\end{equation}
where $*$ is an irrelevant block. Moreover, because $\phi_{\mathrm{cpf}}$
and $W^a_{\mathrm{cpf}}$ solve the viscosity-free equations, 
and because $\phi_{\mathrm{cpf}}=1$, 
the Hamiltonian constraint implies
\[
R_h = \left|p+\frac{1}{2N}\ck W\right|_h^2 -\frac{2}{3}\tau^2 + 2\mathcal E^{\mathrm{cpf}}
\]
and we conclude
\[
\Lambda = 8\left|p+\frac{1}{2N}\ck W\right|_h^2 +\frac{8}{3}\tau^2+8\mathcal E^{\mathrm{cpf}}.
\]
In particular $\Lambda\ge 0$. Moreover, since $T\not\equiv 0$ 
it follows that $\mathcal E^{\mathrm{cpf}}\not\equiv 0$
and therefore $\Lambda \not\equiv 0$.  Noting that $\Lambda\in H^{k-2}(\Sigma)$,
\cite{maxwell_rough_2005} Proposition 6.1 implies that the operator $-8\Delta + \Lambda$
in the upper-left block of $DF$ is invertible as a map 
$H^{k}(\Sigma)\to H^{k-2}(\Sigma)$.
As in the proof of Lemma \ref{lem:cpf-data}, 
because we have assumed that $h_{ab}$ admits no conformal Killing fields, 
the operator $\div(1/(2N)\ck\cdot )$ in the lower-right block is
also invertible $H^{k}(\Sigma)\to H^{k-2}(\Sigma)$, which establishes the 
invertibility of $DF$ at $(0,(\phi_\mathrm{cpf},W^a_{\mathrm{cpf}}))$.

The Implicit Function Theorem now implies that
there is a small ball $\Gamma$ about $(\phi_\mathrm{cpf}, W^a_{\mathrm{cpf}})$
in $Y$ such that if $(c_\chi, c_\eta, c_\lambda)$
is sufficiently small in $\Reals^3$, there exists a unique $(\phi, W^a)\in \Gamma$
such that $F((c_\chi, c_\eta, c_\lambda),(\phi, W^a))=0$.
\end{proof}

\subsection{Asymptotically Euclidean manifolds}\label{secsec:initial-data-asymptotic}

In this section we extend Theorem \ref{thm:initial-cpct} to isolated 
gravitational systems using weighted Sobolev spaces to enforce
the requisite decay of the various fields.  Following \cite{bartnik_mass_1986},
given $k\in\mathbb{Z}_{\ge 0}$ and $\delta\in \Reals$
the weighted space $H^k_{\delta}(\Reals^3)$ 
consists of the functions $u$
on $\Reals^3$ for which the norm
\[
||u||_{H^k_{\delta}} = \left[ \sum_{|\alpha|\le k} \int_{\Reals^3} |\nabla^\alpha u|^2\; |x|^{-3-2\delta+2|\alpha|}\; dx\right]^{1/2}
\]
is finite; an analogous definition holds for tensors, and we use the same notation. 
The convention for the decay 
parameter $\delta$ is chosen so that
$|x|^\delta \in H^{0}_{\delta-\epsilon}(\Reals^3)$ for any $\epsilon>0$. 
Note that if $u\in H^k_\delta(\Reals^3)$ with $k\ge 1$ then 
its derivatives satisfy $\partial u \in H^{k-1}_{\delta-1}(\Reals^3)$
and lose an order of growth at infinity.
Other properties of weighted Sobolev spaces can be found
in \cite{bartnik_mass_1986}; see also \cite{maxwell_rough_2006} and 
\cite{maxwell_solutions_2005} 
which contains PDE tools used in the construction below.  

We are primarily
interested in the case $k\ge 3$, in which case elements of $H^k(\Reals^3)$
are $C^1$.  Moreover, if $k\ge 3$ and $\delta_1$, $\delta_2\in\Reals$,
then pointwise multiplication determines continuous maps as follows:
\begin{equation}\label{eq:mult}
\begin{aligned}
    H^k_{\delta_1}(\Reals^3)\times H^k_{\delta_2}(\Reals^3) &\to H^k_{\delta_1+\delta_2}(\Reals^3)\\
    H^{k-1}_{\delta_1}(\Reals^3)\times H^{k-1}_{\delta_2}(\Reals^3) &\to H^{k-1}_{\delta_1+\delta_2}(\Reals^3)\\
    H^{k-1}_{\delta_1}(\Reals^3)\times H^{k-2}_{\delta_2}(\Reals^3) &\to H^{k-2}_{\delta_1+\delta_2}(\Reals^3).
\end{aligned}
\end{equation}
See, e.g., \cite{maxwell_rough_2006} Lemma 2.4.

The Euclidean metric $h^{\mathrm{Euc}}_{ab}$ on $\Reals^3$ has 
components $\delta_{ab}$
with respect to standard coordinates. 
A metric $h_{ab}$ on $\Reals^3$ is asymptotically Euclidean 
if it approaches the Euclidean metric $h_{ab}^{\mathrm{Euc}}$
at infinity.
More precisely, we require $h_{ab}-h_{ab}^{\mathrm{Euc}}\in H^k_{\delta}(\Reals^3)$ 
for some $\delta < 0$. Asymptotically Euclidean manifolds 
can be defined more generally \cite{bartnik_mass_1986},
but for simplicity we restrict our attention to initial data on $\Reals^3$.
Nevertheless, the arguments below transfer to the general case
without change.

An \textbf{asymptotically Euclidean conformal fluid seed data set} with regularity $k\ge 3$
and weight parameter $\delta<0$ 
has constant mean curvature $\tau=0$ and the following 
regularity for the remaining seed fields:
\begin{equation}\label{eq:ae-data-set}
\begin{aligned}
h_{ab} \text{ with } (h_{ab}-h^{\mathrm{Euc}}_{ab}) &\in H^k_\delta(\Reals^3),\\
p_{ab} &\in H^{k-1}_{\delta}(\Reals^3),\\
N \text{ with } N>0 \text{ and } N-1 &\in H^{k}_\delta(\Reals^3),\\
T  &\in H^{k-1}_{\beta}(\Reals^3),\\
v^a &\in H^{k-1}_0(\Reals^3),\\
\mathcal DT &\in H^{k-2}_{\beta-1}(\Reals^3),\\
\mathcal A^\Sigma_a  &\in H^{k-2}_{-1}(\Reals^3).
\end{aligned}
\end{equation}
The weight parameter $\beta$ above is defined by $\beta = (\delta-2)/4$ 
and arises so that $T^4\in H^{k-1}_{\delta-2}(\Reals^3)$.
As in the compact case, we assume $p_{ab}$ is trace-free and divergence-free,
that $T\ge 0$, and that $N>0$. An \textbf{asymptotically Euclidean perfect conformal fluid seed data set}
is defined similarly omitting 
the unneeded fields $\mathcal DT$ and $\mathcal A^\Sigma_a$.

We seek a solution of the LCBY equations \eqref{eq:LCBY-Ham-vcf}--\eqref{eq:LCBY-mom-vcf}
of the form $\phi = 1 + u$ with $u\in H^{k}_{\delta}(\Reals^3)$ and $W^a \in H^{k}_{\delta}(\Reals^3)$.
If we find such a solution 
then $\tilde h_{ab} = \phi^4 h_{ab}$ and $\widetilde K_{ab} = \phi^{-2}(p_{ab}+1/(2N)(\ck W)_{ab} )$
together with $\widetilde T=\phi^{-2}T$, $\tilde v^a = \phi^{-2}v^a$, 
$\widetilde{\mathcal DT} = \phi^{-10}\mathcal DT$ and 
$\widetilde {\mathcal A}_a^{\,\Sigma} = A_a^\Sigma +2  \nabla_a \log(\phi)$
solve the constraint equations \eqref{eq:ham-constraint}--\eqref{eq:mom-constraint}.  Moreover, the metric
and second fundamental form are asymptotically Euclidean in the sense that
$\tilde h_{ab}- h_{ab}^{\mathrm{Euc}} \in H^{k}_{\delta}(\Reals^3)$
and $\widetilde K_{ab} \in H^{k-1}_{\delta-1}(\Reals^3)$.

Following the general strategy of Section \ref{secsec:initial-data-compact},
we first consider solvability for inviscid conformal fluids,
in which case equations \eqref{eq:LCBY-Ham-vcf}--\eqref{eq:LCBY-mom-vcf}
become
\begin{align}
    -8\Delta u + R(1+u) -\left|p+\frac{1}{2N}\ck W\right|_h^2(1+u)^{-7} 
    &=2(1+u)^{-3}\mathcal E^{\rm cpf}\label{eq:LCBY-Ham-pcf-ae}\\
    -\nabla^a \left(\frac{1}{2N} (\ck W)_{ab}\right) &=\mathcal J^{{\rm cpf}}_b.\label{eq:LCBY-mom-pcf-ae}
\end{align}

The Yamabe invariant of the asymptotically Euclidean metric $h_{ab}$ defined by
\[
  \mathcal Y_h = \inf_{u \in C^\infty_c(\Reals^3) \atop u\not\equiv 0}  
  \frac{\int_{\Reals^3} 8|\nabla u|_h^2 + Ru^2 dV_h }{||u||_{L^6}^2}
\]
plays a role in the solution theory similar to that in the compact case. 
Since $\mathcal E^{\mathrm{cpf}}\ge 0$, and since the mean curvature vanishes,
the Hamiltonian constraint \eqref{eq:ham-constraint} implies that the scalar
curvature of the solution metric is non-negative.   
In the asymptotically Euclidean setting this implies 
that the Yamabe invariant of the conformal class of $h_{ab}$ 
is necessarily positive, \cite{maxwell_solutions_2005} Proposition 4.1.

\begin{lemma} \label{lem:cpf-data-ae}
    Consider an asymptotically Euclidean perfect fluid seed data set with regularity $k\ge 3$ 
    and weight parameter $-1<\delta<0$.
    If $\mathcal Y_h >0$ then there exists a unique 
     solution $(\phi,W^a$)
     of system  \eqref{eq:LCBY-Ham-pcf-ae}-\eqref{eq:LCBY-mom-pcf-ae}
     with $\phi>0$, $\phi-1\in H^{k}_{\delta}(\Reals^3)$ and $W^a\in H^{k}_{\delta}(\Reals^3)$.
\end{lemma}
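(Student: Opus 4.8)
The plan is to reprise the proof of Lemma~\ref{lem:cpf-data}, replacing each compact-manifold ingredient by its asymptotically Euclidean counterpart from \cite{bartnik_mass_1986}, \cite{maxwell_rough_2006} and \cite{maxwell_solutions_2005}. First I would record a bookkeeping observation: using the multiplication rules \eqref{eq:mult} together with the fact that the weight $\beta=(\delta-2)/4$ in \eqref{eq:ae-data-set} was chosen precisely so that $T^4\in H^{k-1}_{\delta-2}(\Reals^3)$, every coefficient appearing on the right-hand sides of \eqref{eq:LCBY-Ham-pcf-ae}--\eqref{eq:LCBY-mom-pcf-ae} — in particular $\mathcal E^{\mathrm{cpf}}$ and $\mathcal J^{\mathrm{cpf}}_b$ from \eqref{eq:E-parts}--\eqref{eq:J-parts} — lies in $H^{k-2}_{\delta-2}(\Reals^3)$, which is exactly the target space of the elliptic operators acting on $H^k_\delta(\Reals^3)$. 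Next, since $\mathcal Y_h>0$, the asymptotically Euclidean Yamabe result (\cite{maxwell_solutions_2005}) together with the conformal covariance of the LCBY system noted just before Section~\ref{secsec:initial-data-compact} allows me to replace $h_{ab}$ by a scalar-flat representative of its conformal class (the conformal factor tends to $1$ at infinity, so the data remain asymptotically Euclidean), and I may henceforth assume $R\equiv 0$.

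With this reduction, I would solve the now-decoupled momentum constraint \eqref{eq:LCBY-mom-pcf-ae} first. For $\delta\in(-1,0)$ the weight is non-exceptional and $\Reals^3$ admits no conformal Killing field in $H^k_\delta(\Reals^3)$, so the vector Laplacian $\nabla^a(\ck\,\cdot)_{ab}$ is an isomorphism $H^k_\delta(\Reals^3)\to H^{k-2}_{\delta-2}(\Reals^3)$; exactly as in the compact case this persists for the lapse-modified operator $\nabla^a\bigl((1/2N)\ck\,\cdot\bigr)_{ab}$, since $N\,\nabla^a(N^{-1}\ck\,\cdot)_{ab}$ is the vector Laplacian of the still asymptotically Euclidean metric $N^{-2/3}h_{ab}$. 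Because $\mathcal J^{\mathrm{cpf}}_b\in H^{k-2}_{\delta-2}(\Reals^3)$, this yields the unique $W^a\in H^k_\delta(\Reals^3)$.

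Holding $W^a$ fixed, and with $R\equiv 0$, the Lichnerowicz equation \eqref{eq:LCBY-Ham-pcf-ae} reads $-8\Delta\phi=\beta\,\phi^{-7}+2\mathcal E^{\mathrm{cpf}}\phi^{-3}$ with $\beta=|p+(1/2N)\ck W|_h^2\ge 0$ and $\mathcal E^{\mathrm{cpf}}\ge 0$. The constant $\phi_-\equiv 1$ is immediately a subsolution. For a supersolution I would solve the linear problem $-8\Delta v=\beta+2\mathcal E^{\mathrm{cpf}}$ for $v\in H^k_\delta(\Reals^3)$ — solvable because $-\Delta$ is an isomorphism $H^k_\delta(\Reals^3)\to H^{k-2}_{\delta-2}(\Reals^3)$ for this non-exceptional $\delta$ — and note $v\ge 0$ by the weighted maximum principle; then $\phi_+=1+v\ge 1$ satisfies $-8\Delta\phi_+=\beta+2\mathcal E^{\mathrm{cpf}}\ge\beta\phi_+^{-7}+2\mathcal E^{\mathrm{cpf}}\phi_+^{-3}$. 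The asymptotically Euclidean sub/supersolution theorem of \cite{maxwell_solutions_2005} then furnishes a solution $\phi>0$ with $\phi-1\in H^k_\delta(\Reals^3)$ and $1\le\phi\le 1+v$, so $(\phi,W^a)$ is the desired pair.

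For uniqueness I would argue as in Lemma~\ref{lem:cpf-data}: given two solutions $\phi_1,\phi_2$, conformal covariance lets me rescale the seed data so that $\phi_2\equiv 1$, whereupon \eqref{eq:LCBY-Ham-pcf-ae} forces the new metric to have $R=\beta+2\mathcal E^{\mathrm{cpf}}\ge 0$, and $\phi_1-1\in H^k_\delta(\Reals^3)$ solves $-8\Delta\phi_1+\beta(\phi_1-\phi_1^{-7})+2\mathcal E^{\mathrm{cpf}}(\phi_1-\phi_1^{-3})=0$. On the open set $\{\phi_1>1\}$ the right-hand side of $-8\Delta\phi_1=-\beta(\phi_1-\phi_1^{-7})-2\mathcal E^{\mathrm{cpf}}(\phi_1-\phi_1^{-3})$ is nonpositive, so integrating by parts (legitimate in the weighted setting because the solution inherits the generic $O(|x|^{-1})$ decay) gives $8\int_{\Reals^3}|\nabla(\phi_1-1)_+|_h^2\,dV_h\le 0$; hence $(\phi_1-1)_+$ is constant, and being decaying it vanishes, so $\phi_1\le 1$ and, symmetrically, $\phi_1\ge 1$, giving $\phi_1=\phi_2$. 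I expect the only genuine work to be the weighted-space bookkeeping of the first paragraph and confirming that the isomorphism, barrier, and maximum-principle results of the cited references transfer verbatim to this setting; note that, unlike Lemma~\ref{lem:cpf-data}, no hypothesis $T\not\equiv 0$ is needed, since the asymptotic normalization $\phi\to 1$ already pins down the solution.
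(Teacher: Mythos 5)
Your existence argument is essentially the paper's: reduce by conformal covariance to a scalar-flat representative (using the Yamabe-positive result of \cite{maxwell_solutions_2005}), solve the decoupled momentum constraint via the weighted isomorphism for the lapse-modified vector Laplacian, and build barriers $\phi_-\equiv 1$ and $\phi_+=1+v$ with $-8\Delta v = |p+\tfrac{1}{2N}\ck W|_h^2+2\mathcal E^{\mathrm{cpf}}$, $v\ge 0$; this is exactly the paper's sub/supersolution pair (there written as $w_-=0$, $w_+=w$ for $u=\phi-1$). Your closing remark that no hypothesis $T\not\equiv 0$ is needed is also correct and consistent with the statement.

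The uniqueness step, however, departs from the paper and is where you have a gap. You transplant the compact-case integration-by-parts argument, testing $-8\Delta\phi_1\le 0$ on $\{\phi_1>1\}$ against $(\phi_1-1)_+$, and justify the boundary term at infinity by asserting that the solution ``inherits the generic $O(|x|^{-1})$ decay.'' That decay is not part of what you have established: you only know $u_1=\phi_1-1\in H^k_\delta(\Reals^3)$ with $-1<\delta<0$, and for $\delta\in[-1/2,0)$ this does not even guarantee that the unweighted Dirichlet integral $\int_{\Reals^3}|\nabla(u_1)_+|^2\,dV_h$ is finite (the weight $|x|^{-1-2\delta}$ in the $H^{k-1}_{\delta-1}$ norm of $\nabla u_1$ tends to zero at infinity in that range), nor that the flux terms on large spheres vanish along any sequence of radii. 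To make your route rigorous you would first have to improve the decay of $u_1$ by a weighted elliptic bootstrap (itself requiring care near the exceptional weight), which you do not carry out. The paper sidesteps this entirely: after normalizing $\phi_2\equiv 1$ it writes the difference equation as a \emph{linear} equation $-8\Delta u_1+Vu_1=0$ with $V=|p+\tfrac{1}{2N}\ck W|_h^2\,k_1+2\mathcal E^{\mathrm{cpf}}k_2\ge 0$, $V\in H^0_{\delta-2}(\Reals^3)$ (using that $\phi_1-\phi_1^{-7}=k_1u_1$, $\phi_1-\phi_1^{-3}=k_2u_1$ with $k_1,k_2\in L^\infty$ positive), and then invokes the isomorphism $-8\Delta+V:H^2_\delta(\Reals^3)\to H^0_{\delta-2}(\Reals^3)$ from \cite{maxwell_rough_2006} to conclude $u_1\equiv 0$. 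Either repair your argument with the decay bootstrap, or replace it with this linear-isomorphism argument, which is both shorter and valid for the whole range $-1<\delta<0$.
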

\begin{proof}
    Because $\mathcal Y_h=0$, there is a conformal factor $\phi^*=1+u^*$ with $u^*\in H^{k}_{\delta}(\Reals^3)$
    such that $(\phi^*)^4 h_{ab}$ is scalar flat (\cite{maxwell_solutions_2005} Proposition 4.1).  
    Using conformal covariance
    we can therefore assume without loss of generality that $h_{ab}$ is itself scalar-flat.

    Since $-1<\delta<0$, Theorem 4.6 of \cite{maxwell_rough_2006} implies that 
    the vector Laplacian $\nabla^a \left(\ck \cdot\right)_{ab}$ of $h_{ab}$
    is an isomorphism $H^k_\delta(\Reals^3)\to H^{k-2}_{\delta-2}(\Reals^3)$.  
    The same fact remains true for $\nabla^a \left( 1/(2N) \ck \cdot\right)_{ab}$ 
    by arguments parallel to those of Lemma \ref{lem:cpf-data-ae}.
    Because $T^4\in H^{k-1}_{\delta-2}(\Reals^3)$ and because 
    $v^a\in H^{k-1}_{0}(\Reals^3)$, a brief computation shows that 
    $\mathcal J_a^{\mathrm{cpf}}\in H^{k-1}_{\delta-2}(\Reals^3)\subset H^{k-2}_{\delta-2}(\Reals^3)$. Hence there 
    is a unique solution  $W^a\in H^{k}_{\delta}$ of 
    equation \eqref{eq:LCBY-mom-vcf} and we can now focus on the 
    Hamiltonian constraint \eqref{eq:LCBY-Ham-pcf-ae}.

    A solution to equation \eqref{eq:LCBY-Ham-pcf-ae} is obtained by 
    barrier techniques similar to those of Lemma \ref{lem:cpf-data}. 
    Consider the equation
    \begin{equation}\label{eq:lin-lich-ae}
    -8\Delta w = \left|p+\frac{1}{2N}\ck W\right|^2_h + 2\mathcal E^{\mathrm{cpf}}.
    \end{equation}
    Since $p_{ab}$ and $(\ck W)_{ab}$ both belong to $H^{k-1}_{\delta-1}(\Reals^3)$
    it follows that the first term on the right-hand side of equation \eqref{eq:lin-lich-ae}
    belongs to $H^{k-1}_{2\delta-2}(\Reals^3)$.
    Since $T^4\in H^{k-1}_{\delta-2}(\Reals^3)$ and since 
    $v^a\in H^{k-1}_{0}(\Reals^3)$ we find $\mathcal E^{\mathrm{cpf}}\in H^{k-1}_{\delta-2}$.
    Hence the right-hand side of equation \eqref{eq:lin-lich-ae} 
    belongs to $H^{k-1}_{\delta-2}(\Reals^3) \subset H^{k-2}_{\delta-2}(\Reals^3)$.
    Since $-1<\delta<0$,  
    \cite{maxwell_rough_2006} Proposition 5.1 then implies there is a unique solution $w\in H^{k}_\delta(\Reals^3)$ of
    equation \eqref{eq:lin-lich-ae}.  Moreover the maximum principle 
    \cite{maxwell_rough_2006} Lemma 5.2 implies $w\ge 0$.
    
    We claim that $w_-$ and $w_+$ defined by
     $w_-=0$ and $w_+ = w$
    are a sub- and super-solution pair of equation \eqref{eq:LCBY-Ham-pcf-ae}
    respectively.  Indeed, using the fact that $R\equiv 0$, it is obvious
    that $w_-$ is a subsolution.  Moreover,
    \begin{multline*}
        -8\Delta w_+ - \left|p+\frac{1}{2N}\ck W\right|_h(1+w_+)^{-7} -
        2\mathcal E^{\mathrm{cpf}}(1+w_+)^{-3} =\\
        \left|p+\frac{1}{2N}\ck W\right|_h\left( 1 - (1+w)^{-7}\right)
        + 2\mathcal E^{\mathrm{cpf}}\left( 1 - (1+w)^{-3}\right).
    \end{multline*}
    Since $1+w \ge 1$, we see that $1 - (1+w)^{j}\ge 0$ for
    $j=-3$ and $j=-7$ and we conclude that $w_+$ is indeed a supersolution.
    Proposition 6.2 of \cite{maxwell_rough_2006} now implies there is a solution 
    $u\in H^{k}_\delta(\Reals^3)$ with $w_- \le u \le w_+$ of equation \eqref{eq:LCBY-Ham-pcf-ae}.  
    In particular, $\phi = 1+u \ge 1$ is a positive conformal factor.

    To show uniqueness, suppose $u_1$ and $u_2$ both solve \eqref{eq:LCBY-Ham-pcf-ae}.  
    Conformal covariance lets us assume that $u_2\equiv 0$, so 
    we need only show that $u_1\equiv 0$ as well.
    Because $u_2\equiv 0$, $R= |p+1/(2N) \ck W|_h^2 +2\mathcal E^{\mathrm {cpf}}$. Hence
    $u_1$ solves 
    \[
    -8 \Delta u_1 + 
    \left|p+1/(2N) \ck W\right|_h^2 (1+u_1-(1+u_1)^{-7}) + 2\mathcal E^{\mathrm{cpf}} (1+u_1-(1+u_1)^{-3}).
    \]
    Now $(1+u_1-(1+u_1)^{-7}) = k u_1$ where $k\in L^\infty(\Reals^3)$ is positive and a similar
    claim holds for $(1+u_1-(1+u_1)^{-3})$.  Hence $u_1$ solves
    \[
    -8\Delta u_1 + V u_1 = 0    
    \]
    with a non-negative potential $V\in H^0_{\delta-2}$.  Since $-8\Delta+V$ is an isomorphism 
    $H^2_{\delta}(\Reals^3)\to H^0_{\delta-2}(\Reals^3)$ we conclude $u_1\equiv 0$.
\end{proof}

The existence of asymptotically Euclidean viscous fluid solutions of the constraint
equations now follows from an Implicit Function Theorem argument similar to 
that of Theorem \ref{thm:initial-cpct}.
\begin{theorem}\label{thm:initial-ae}
    Consider an asymptotically Euclidean conformal fluid seed data set with regularity $k\ge 3$
    and decay parameter $-1<\delta<0$.  
    Suppose additionally that $\mathcal Y_h > 0$.
    
    Let $(\phi_{\mathrm{cpf}}=1+u_{\mathrm{cpf}}, W^a_{\mathrm{cpf}})$ be the unique solution 
    of the viscosity-free equations \eqref{eq:LCBY-Ham-pcf-ae}--\eqref{eq:LCBY-mom-pcf-ae}
    provided by Lemma \ref{lem:cpf-data-ae}
    for the corresponding perfect seed data set.
    There is a neighborhood $\Gamma$ of 
    $(u_{\mathrm{cpf}}, W^a_{\mathrm{cpf}})$ 
    in $H^k_{\delta}(\Reals^3)\times H^k_{\delta}(\Reals^3)$ such that
    for any sufficiently small viscosity coefficients 
    $c_\chi$, $c_\eta$, $c_\lambda$
    there exists a unique solution $(\phi=1+u, W^a)$ of system
    \eqref{eq:LCBY-Ham-vcf}--\eqref{eq:LCBY-mom-vcf}
    with $(u, W^a) \in \Gamma$.
\end{theorem}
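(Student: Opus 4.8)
The plan is to follow the proof of Theorem~\ref{thm:initial-cpct} essentially verbatim in structure, replacing $H^k(\Sigma)$ by the weighted spaces $H^k_\delta(\Reals^3)$ and the compact linear theory by its asymptotically Euclidean counterpart. First I would use the conformal covariance of equations~\eqref{eq:LCBY-Ham-vcf}--\eqref{eq:LCBY-mom-vcf} to normalize the inviscid baseline: replacing the seed metric by the representative $\phi_{\mathrm{cpf}}^4 h_{ab}$ of its conformal class — which is again asymptotically Euclidean since $\phi_{\mathrm{cpf}}-1\in H^k_\delta(\Reals^3)$, and which has $R\ge 0$ because it is the solution metric of the inviscid problem with $\tau=0$ and $\mathcal E\ge 0$ — transforms $(\phi_{\mathrm{cpf}}, W^a_{\mathrm{cpf}})$ into $(1, W^a_{\mathrm{cpf}})$. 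A neighborhood constructed in these normalized variables pulls back to a neighborhood of $(u_{\mathrm{cpf}}, W^a_{\mathrm{cpf}})$ along the conformal-covariance diffeomorphism, so we may assume $u_{\mathrm{cpf}}\equiv 0$.

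Next I would introduce $F=F_H\times F_M$ exactly as in~\eqref{eq:FH}--\eqref{eq:FM}, with $\tau=0$ and $\phi=1+u$, viewed as a map taking $(c_\chi, c_\eta, c_\lambda)\in\Reals^3$ and $(u,W^a)\in Y$ to $Z:=H^{k-2}_{\delta-2}(\Reals^3)\times H^{k-2}_{\delta-2}(\Reals^3)$, where $Y:=\{(u,W^a)\in H^k_\delta(\Reals^3)\times H^k_\delta(\Reals^3): 1+u>0\}$ is open because $u\to 0$ at infinity. Verifying that $F$ actually lands in $Z$ is the step requiring care and is the genuine difference from the compact argument: it is a weight count using the multiplications~\eqref{eq:mult}, the facts $R\in H^{k-2}_{\delta-2}(\Reals^3)$ and $p_{ab},(\ck W)_{ab}\in H^{k-1}_{\delta-1}(\Reals^3)$, and — crucially — the choice $\beta=(\delta-2)/4$, which makes $T^4\in H^{k-1}_{\delta-2}(\Reals^3)$ and, together with $-1<\delta<0$, forces each of $\mathcal E^\chi,\mathcal E^\eta,\mathcal E^\lambda,\mathcal J^\chi_a,\mathcal J^\eta_a,\mathcal J^\lambda_a$ and the $\phi$-dependent $\theta$-corrections into $H^{k-2}_{\delta-2}(\Reals^3)$. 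These same ingredients, plus the $C^1$-dependence on $u$ of $u\mapsto(1+u)^{-p}$ on $Y$ (handled by the methods of Lemma~\ref{lem:one-over}), show that $F$ is continuously differentiable, and Lemma~\ref{lem:cpf-data-ae} gives $F[0,(0,W^a_{\mathrm{cpf}})]=0$.

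I would then differentiate $F$ in $(u,W^a)$ at $(0,(0,W^a_{\mathrm{cpf}}))$. Since at $c=0$ the momentum equation $F_M$ reduces to $\nabla^a(\tfrac{1}{2N}\ck W)_{ab}+\mathcal J^{\mathrm{cpf}}_b$, whose only $(u,W^a)$-dependence is linear in $W^a$, the derivative is block upper triangular,
\[
DF=\begin{bmatrix} -8\Delta+\Lambda & * \\[2pt] 0 & \div\!\left(\tfrac{1}{2N}\ck\,\cdot\right)\end{bmatrix},
\qquad
\Lambda = R + 7\left|p+\tfrac{1}{2N}\ck W_{\mathrm{cpf}}\right|_h^2 + 6\,\mathcal E^{\mathrm{cpf}}.
\]
Because $\phi\equiv 1$ solves the inviscid Hamiltonian constraint~\eqref{eq:LCBY-Ham-pcf-ae} with $\tau=0$, so that $R=|p+\tfrac{1}{2N}\ck W_{\mathrm{cpf}}|_h^2+2\,\mathcal E^{\mathrm{cpf}}$, one gets $\Lambda = 8|p+\tfrac{1}{2N}\ck W_{\mathrm{cpf}}|_h^2 + 8\,\mathcal E^{\mathrm{cpf}}\ge 0$, with $\Lambda\in H^{k-2}_{\delta-2}(\Reals^3)\subset H^0_{\delta-2}(\Reals^3)$. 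For $-1<\delta<0$ the operator $-8\Delta+\Lambda$ with nonnegative potential $\Lambda$ is then an isomorphism $H^k_\delta(\Reals^3)\to H^{k-2}_{\delta-2}(\Reals^3)$ (the linear fact already invoked in Lemma~\ref{lem:cpf-data-ae}), while $\div(\tfrac{1}{2N}\ck\,\cdot)$ is an isomorphism $H^k_\delta(\Reals^3)\to H^{k-2}_{\delta-2}(\Reals^3)$ by the argument given there — on asymptotically Euclidean $\Reals^3$ with $-1<\delta<0$ no conformal-Killing-field hypothesis is required. Hence $DF$ is invertible at the base point, the Implicit Function Theorem yields a ball $\Gamma$ about $(0,W^a_{\mathrm{cpf}})$ and, for $(c_\chi, c_\eta, c_\lambda)$ sufficiently small, a unique $(u,W^a)\in\Gamma$ with $F((c_\chi, c_\eta, c_\lambda),(u,W^a))=0$; undoing the normalization gives the statement.

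The main obstacle — in fact essentially the only place where the asymptotically Euclidean case is not a transcription of Theorem~\ref{thm:initial-cpct} — is this weighted-space bookkeeping: checking that every nonlinear term of $F$, including the viscous corrections and the $\phi^{-p}$ factors, both lies in $H^{k-2}_{\delta-2}(\Reals^3)$ and depends continuously differentiably on the seed variables, which is precisely what the choices $\beta=(\delta-2)/4$ and $-1<\delta<0$ are arranged to guarantee.
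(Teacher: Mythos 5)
Your proposal is correct and follows essentially the same route as the paper's proof: conformal normalization to $u_{\mathrm{cpf}}\equiv 0$, the weighted-space mapping and $C^1$ properties of $F$ via the multiplication rules \eqref{eq:mult}, the choice $\beta=(\delta-2)/4$ and Lemma \ref{lem:one-over}, the block upper-triangular linearization with $\Lambda = 8\left|p+\tfrac{1}{2N}\ck W\right|_h^2+8\mathcal E^{\mathrm{cpf}}\ge 0$, the two weighted isomorphism facts, and the Implicit Function Theorem. (Your intermediate coefficient $6\mathcal E^{\mathrm{cpf}}$ in $\Lambda$ is in fact the correct linearization of the $-2\phi^{-3}\mathcal E^{\mathrm{cpf}}$ term and leads to the same final expression the paper states.)
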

\begin{proof}
Define $F_H[(c_\chi, c_\eta, c_\lambda), (u, W^a)]$ and
$F_M[(c_\chi, c_\eta, c_\lambda), (u, W^a)]$ as in 
equations \eqref{eq:FH}--\eqref{eq:FM} but replacing
$\phi$ in these expressions with $1+u$.

Let $P= \{ u \in H^{k}_\delta(\Reals^3): u>-1\}$ and let
$Y=\{(u,W^a): u\in P, W^a\in H^{k}_{\delta}(\Reals^3)\}$.
We claim $F=F_H\times F_M$ takes $(c_\chi, c_\eta, c_\lambda)\in\Reals^3$ and
$(u, W^a)\in Y$ to $Z:=H^{k-2}_{\delta-2}(\Reals^3)\times H^{k-2}_{\delta-2}(\Reals^3)$.
Establishing this claim amounts to an extended computation examining each of the terms in
equations \eqref{eq:FH}--\eqref{eq:FM}. We present 
a sample computation for two of the more interesting terms
to demonstrate
the needed techniques and leave the remainder for the reader.

Lemma \ref{lem:one-over} below shows that 
$u\mapsto (1+u)^{-1} - 1$ is 
a continuously differentiable map from $P$
to $H^{k}_\delta(\Reals^3)$.  From this and the fact that $H^k_{\delta}(\Reals^3)$
is an algebra it follows that $u\mapsto (1+u)^{-j} - 1$ is 
also a continuously differentiable map $P\to H^{k}_\delta(\Reals^3)$
for each $j\in\Nats$. Now consider the term
\[
\mathcal E^{\mathrm{\chi}} (1+u)^{-3} = \mathcal E^{\chi} + 
\mathcal E^{\mathrm{\chi}} ((1+u)^{-3}-1).
\]
If we establish that $\mathcal E^{\mathrm{\chi}} \in H^{k-2}_{\delta-2}(\Reals^3)$
then the continuity of multiplication 
$H^{k}_{\delta}(\Reals^3)\times H^{k-2}_{\delta-2}(\Reals^3) 
\to H^{k-2}_{\delta-2}(\Reals^3)$ implies that
the map $u\mapsto \mathcal E^{\mathrm{\chi}} (1+u)^{-3}$ is continuously 
differentiable from $P$ to $H^{k-2}_{\delta-2}(\Reals^3)$ as needed.

Recall that
\[
\mathcal E^\chi = 3 T^2\mathcal DT (1 + (4/3)|v|_h^2)
\]
and that we have assumed the following regularity: $T\in H^{k-1}_{\beta}(\Reals^3)$,
$\mathcal DT \in H^{k-2}_{\beta-1}(\Reals^3)$ and $v\in H^{k-1}_0(\Reals^3)$.
Using the multiplication properties \eqref{eq:mult} one readily 
shows
\begin{align*}
T^2\mathcal DT &\in H^{k-2}_{3\beta-1}(\Reals^3),\\
|v|_h^2 &\in H^{k-1}_0(\Reals^3), \\
(1+(4/3)|v|_h^2)&\in H^{k-1}_0(\Reals^3).
\end{align*}
Hence $\mathcal E^\chi \in H^{k-2}_{3\beta -1}$ and 
it remains only to show that $3\beta-1\le \delta -2$.  However,
$\beta = (\delta-2)/4$ and $-1 < \delta$.  Hence
\[
    3\beta -1 = \frac{3}{4}\delta - \frac{5}{2} < 
    \frac{3}{4}\delta - \frac{1}4 - 2 < \delta -2.
\]

Next consider the term $\mathcal E^{\eta} (1+u)^{-3}$.  Following 
the argument above, this term is continuously differentiable
from $u\in P$ to $H^{k-2}_{\delta-2}(\Reals^3)$ so long 
as we can show that $\mathcal E^{\eta}\in H^{k-2}_{\delta-2}$.
Recall
\[
\mathcal E^\eta = -6 T^3 \frac{1}{1+2\gamma^2} \sigma^\Sigma_{ab}v^a v^b.
\]
Multiplication arguments similar to the above show 
$T^3 v^a v^b\in H^{k-1}_{3\beta}$.
Moreover, since $v^a \in H^{k-1}_{0}(\Reals^3)$ 
and since 
$\mathcal A^\Sigma_a\in H^{k-2}_{-1}(\Reals^3)$
it follows that 
$\mathcal D^\Sigma_a v_b \in H^{k-2}_{-1}(\Reals^3)$.
As a consequence $\sigma^\Sigma_{ab}\in H^{k-2}_{-1}(\Reals^3)$
as well.  We can rewrite $1+\gamma^2 = 3+|v|_h^2$ 
and Lemma \ref{lem:one-over} below then implies $1/(1+\gamma^2) = 1/3 + w$
for some $w\in H^{k-1}_0(\Reals^3)$.
Multiplying all factors together we find $\mathcal E^{\eta}\in H^{k-2}_{3\beta-1}(\Reals^3)$,
which we have already shown is contained in $H^{k-2}_{\delta-2}(\Reals^3)$.

The remainder of the claim that $F$ maps continuously differentiably into
$H^{k-2}_{\delta-2}(\Reals^3)\times H^{k-2}_{\delta-2}(\Reals^3)$
is proved either by similar techniques, or by easy arguments involving
linear differential operators.  Having established this claim,
we complete the proof with an argument similar to that of 
Theorem \ref{thm:initial-cpct}.  

From conformal covariance we can assume without loss of generality 
that $u_{\mathrm {cpf}}\equiv 0$ and 
hence $F[0, (0, W^a_{\mathrm{cpf}})]=0$. At the point 
$(0,(0,W^a_{\mathrm{cpf}}))\in \Reals^3\times Y$
 the derivative of $F$ with respect
to $u$ and $W^a$ has the block form
\[
DF = \begin{bmatrix}     
-8\Delta + \Lambda & *\\
0 & \mathrm{div} \left(\frac{1}{2N} \ck \cdot \right)
\end{bmatrix}
\]
where, using an argument parallel to that of Theorem \ref{thm:initial-cpct},
\[
\Lambda = 8\left|p+\frac{1}{2N}\ck W\right|_h^2 + 8\mathcal E^{\mathrm{cpf}}.
\]
In particular $\Lambda\in H^{k-2}_{\delta-2}(\Reals^3)$ and $\Lambda\ge 0$.
Proposition 5.1 of \cite{maxwell_rough_2006} implies the upper left block
$-8\Delta + \Lambda : H^{k}_{\delta}(\Reals^3)\to H^{k-2}_{\delta-2}(\Reals^3)$
is an isomorphism, and the argument of Lemma \ref{lem:cpf-data-ae} 
shows that the lower-right block is also an isomorphism  
$H^{k}_{\delta}(\Reals^3)\to H^{k-2}_{\delta-2}(\Reals^3)$. 
Hence $DF$ is invertible and the 
remainder of the proof using the Implicit Function Theorem now proceeds
exactly as in Theorem \ref{thm:initial-cpct}.
\end{proof}

The following technical lemma completes the argument of Theorem \ref{thm:initial-ae}.
\begin{lemma}\label{lem:one-over}
Let $P=\{u\in H^k_{\delta}(\Reals^3): u>-1\}$ with $k\ge 2$ and $\delta\le 0$.
The map 
\[
u\mapsto \frac{1}{1+u}-1
\]
is continuously differentiable from $P$ to $H^{k}_\delta(\Reals^3)$.
\end{lemma}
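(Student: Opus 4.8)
The plan is to view the map $F\colon u\mapsto \psi(u)$ with $\psi(t)=\tfrac1{1+t}-1$ as a Nemytskii (composition) operator, and to exploit the fact that $\psi(0)=0$, which forces $\psi(u)$ to inherit the decay of $u$ at infinity. First I would record two preliminaries. Since $k\ge 2>3/2$, the weighted Sobolev embedding $H^k_\delta(\Reals^3)\hookrightarrow C^0_\delta(\Reals^3)$ holds (see \cite{bartnik_mass_1986}), and because $\delta\le 0$ this gives $\|u\|_{L^\infty}\le C\|u\|_{H^k_\delta}$ together with $u(x)\to 0$ as $|x|\to\infty$. Hence for $u\in P$ the function $1+u$ is bounded below by a positive constant $c$ on all of $\Reals^3$ (on a large ball by continuity and positivity of $1+u$, and outside it because $|u|<1/2$ there), and $P$ is open: an $H^k_\delta$-ball around a given $u_0\in P$, small enough in $C^0$-norm, stays inside $P$ with $1+u\ge c/2$. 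Fix such a ball $B=B(u_0,r)\subset P$; on $B$ the values of $u$ range in a fixed compact subinterval $[-M,M]\subset(-1,\infty)$, so $\psi$ and all its derivatives are uniformly bounded there.

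The technical core is the claim: \emph{for every $G\in C^\infty((-1,\infty))$ with $G(0)=0$, the map $u\mapsto G(u)$ sends $B$ into $H^k_\delta(\Reals^3)$ with $\|G(u)\|_{H^k_\delta}$ bounded uniformly on $B$, and $\|G(u)-G(v)\|_{H^k_\delta}\le C\|u-v\|_{H^k_\delta}$ for $u,v\in B$.} For membership and the uniform bound I would expand $\partial^\alpha\big(G(u)\big)$ by the Faà di Bruno formula into a finite sum of terms $G^{(m)}(u)\prod_{i=1}^m\partial^{\alpha_i}u$ with $1\le m\le|\alpha|$, $|\alpha_i|\ge 1$, $\sum_i\alpha_i=\alpha$ (the case $m=0$ occurring only for $|\alpha|=0$), where each $G^{(m)}(u)$ is a bounded continuous function. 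For $|\alpha|=0$ one uses $G(0)=0$ and the Lipschitz bound $|G(u)|\le L|u|$ to get $G(u)\in H^0_\delta$. For $m=1$ one uses boundedness of $G'(u)$ together with $\partial^\alpha u\in H^0_{\delta-|\alpha|}$ to land in $H^0_{\delta-|\alpha|}$. For $m\ge 2$, each factor lies in $H^{k-|\alpha_i|}_{\delta-|\alpha_i|}$ with $k-|\alpha_i|\ge 1$, and since $mk-|\alpha|\ge k\ge 2>3/2$ the weighted multiplication estimates (e.g. \cite{maxwell_rough_2006} Lemma 2.4 and its multi-factor analogue) put the product in $H^0_{m\delta-|\alpha|}\subset H^0_{\delta-|\alpha|}$, using $\delta\le 0$; multiplying by the bounded $G^{(m)}(u)$ keeps it there. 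Since $\|G(u)\|_{H^k_\delta}^2=\sum_{|\alpha|\le k}\|\partial^\alpha G(u)\|_{H^0_{\delta-|\alpha|}}^2$, this gives $G(u)\in H^k_\delta$ with a bound polynomial in $\|u\|_{H^k_\delta}$ and in $\sup_{[-M,M]}|G^{(j)}|$, hence uniform on $B$. The Lipschitz estimate then follows from $G(u)-G(v)=\big(\int_0^1 G'(v+t(u-v))\,dt\big)(u-v)$: writing $\int_0^1 G'(\cdots)\,dt=G'(0)+\int_0^1\phi(v+t(u-v))\,dt$ with $\phi=G'-G'(0)$ (so $\phi(0)=0$), the uniform bound just proven, applied to $\phi$, shows the bracket lies in $H^k_\delta$ with norm $\le C$ on $B$, and multiplication $H^k_\delta\times H^k_\delta\to H^k_{2\delta}\subset H^k_\delta$ (valid since $k>3/2$, $\delta\le 0$) closes the estimate.

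Granting the core claim, the $C^1$ statement is soft. Applying it to $G=\psi$ gives $F(P)\subset H^k_\delta(\Reals^3)$. I would propose $DF(u)h=\psi'(u)h$; writing $\psi'(u)=\psi'(0)+\phi_1(u)$ with $\phi_1=\psi'-\psi'(0)$, $\phi_1(0)=0$, the core claim gives $\phi_1(u)\in H^k_\delta$, so $h\mapsto\psi'(0)h+\phi_1(u)h$ is bounded $H^k_\delta\to H^k_\delta$. Fréchet differentiability at $u\in B$ follows from the second-order Taylor identity
\[
F(u+h)-F(u)-\psi'(u)h=\Big(\int_0^1(1-t)\,\psi''(u+th)\,dt\Big)h^2,
\]
valid pointwise for $\|h\|_{L^\infty}$ small: subtracting $\tfrac12\psi''(0)$ from the bracket and applying the core claim to $\phi_2=\psi''-\psi''(0)$ bounds the bracket in $H^k_\delta$ uniformly for small $h$, while $\|h^2\|_{H^k_{2\delta}}\le C\|h\|_{H^k_\delta}^2$; by the multiplication estimates the right-hand side is then $O(\|h\|_{H^k_\delta}^2)=o(\|h\|_{H^k_\delta})$. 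Finally $u\mapsto DF(u)$ is continuous, indeed locally Lipschitz, since $DF(u)-DF(v)$ is multiplication by $\psi'(u)-\psi'(v)=\phi_1(u)-\phi_1(v)$, whose $H^k_\delta$-norm is $\le C\|u-v\|_{H^k_\delta}$ by the Lipschitz part of the core claim, and multiplication by an $H^k_\delta$ function is a bounded operator on $H^k_\delta$. This yields $F\in C^1\big(P;H^k_\delta(\Reals^3)\big)$.

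I expect the main obstacle to be the bookkeeping inside the core claim: checking that each Faà di Bruno monomial $G^{(m)}(u)\prod\partial^{\alpha_i}u$ lands precisely in the weighted space $H^0_{\delta-|\alpha|}$ matching the order-$|\alpha|$ piece of the $H^k_\delta$ norm. The one genuinely delicate point is that for $m=1$ the weighted multiplication estimate is not available (a factor of regularity $0$ would appear when $|\alpha|=k$) and one must instead use plain boundedness of $G'(u)$; everything else reduces to repeated use of the standard weighted multiplication estimates together with the elementary embeddings $H^0_{\delta'}\subset H^0_\delta$ and $H^k_{m\delta}\subset H^k_\delta$ for $\delta'\le\delta\le 0$ and $m\ge 1$.
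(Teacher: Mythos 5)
Your argument is essentially correct, but it takes a genuinely different route from the paper. You prove the full Nemytskii-operator package directly: Fa\`a di Bruno expansion of $\partial^\alpha G(u)$ for smooth $G$ with $G(0)=0$, weighted multiplication estimates to place each monomial in $H^0_{\delta-|\alpha|}$, a Lipschitz estimate via the integral mean value formula, and then differentiability through a second-order Taylor remainder. The paper instead proves only the membership statement $v=-u/(1+u)\in H^k_\delta(\Reals^3)$ by a direct computation (algebra property of $H^k_\delta$ for $k\ge 2$, $\delta\le 0$, plus uniform upper and lower bounds on $1/(1+u)$), and then obtains continuous differentiability almost for free: it applies the Implicit Function Theorem to $F(u,v)=(1+u)(1+v)-1$, which is a quadratic (hence manifestly $C^1$) map on the Banach algebra $H^k_\delta\times H^k_\delta$, noting that $D_vF(u_0,v_0)$ is multiplication by $1+u_0$, invertible with inverse multiplication by $1+v_0$; the implicit solution map is then $C^1$ and coincides with $u\mapsto(1+u)^{-1}-1$. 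The paper's route buys brevity and sidesteps all composition-operator calculus; your route is more self-contained and yields stronger byproducts (explicit local Lipschitz bounds for $u\mapsto G(u)$ for any smooth $G$ vanishing at $0$, which could be reused elsewhere), at the cost of heavier bookkeeping.

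Two spots in your write-up deserve tightening, though neither is a fatal gap. First, in the $m\ge 2$ case of the core claim, the condition you cite ($mk-|\alpha|\ge k>3/2$) is not the right sufficient condition for an $m$-fold product to land in $H^0$; the relevant requirement is roughly $\sum_i(k-|\alpha_i|)>\tfrac32(m-1)$, which does hold here since $\sum_i(k-|\alpha_i|)=mk-|\alpha|\ge (m-1)k\ge 2(m-1)$, so the conclusion stands but the justification should be stated correctly (e.g.\ via iterated two-factor estimates or weighted H\"older/Sobolev embeddings). Second, when you pass $H^k_\delta$-norms through the $t$-integrals in the mean value and Taylor remainder formulas, you are implicitly invoking a Minkowski/Bochner integral inequality for the Banach-space-valued map $t\mapsto\phi(v+t(u-v))$; its integrability should be addressed (or avoided by bounding $\partial^\alpha$ of the pointwise-defined integrand directly, differentiating under the integral sign), since continuity of the composition operator is close to what is being proved.
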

\begin{proof}
We first claim that if $u\in P$ then $v:= 1/(1+u)-1=-u/(1+u)\in H^{k}_{\delta}(\Reals^3)$.
Indeed, the claim is proved by direct computation using Sobolev embedding 
as in the proof that $H^k_\delta(\Reals^3)$ is an algebra if $k\ge 2$ and $\delta\le 0$, 
along with the fact
that $1/(1+u)$ is uniformly bounded above and below. 
The main issue is to show that $v$ is a continuously differentiable 
function of $u$.

Define $F:H^{k}_{\delta}(\Reals^3)\times H^{k}_{\delta}(\Reals^3)\to H^{k}_{\delta}(\Reals^3)$
by $F(u,v)=(1+u)(1+v)-1$.  Then $F$ is evidently continuously differentiable.
Consider some $u_0\in P$ and let $v_0 = 1/(1+u_0)-1 = -u_0/(1+u_0)$. Then
$v_0\in H^{k}_\delta(\Reals^3)$
and indeed $v_0\in P$. 

We observe that $F(u_0, v_0)=0$ and that the derivative of $F$ with respect
to $v$ at $(u_0,v_0)$ is $1+u_0$.  Multiplication by $1+u_0$
is a continuous endomorphism of $H^k_{\delta}(\Reals^3)$ and its inverse is
multiplication by $1+v_0$.  Hence the Implicit Function Theorem provides
a neighborhood $U$ of $u_0$ and a continuously differentiable map 
$g:U\to H^k_\delta(\Reals^3)$ with $F(u, g(u))=0$. A brief computation shows
$g(u)=(1+u)^{-1}-1$, which completes the proof.
\end{proof}

\subsection{Application to the Cauchy problem}\label{S:Cauchy}
In this section we show how one can use solutions of the constraints 
obtained in Sections \ref{secsec:initial-data-compact} and \ref{secsec:initial-data-asymptotic}
in conjunction with the results of
\cite{Bemfica-Disconzi-Noronha-2018,Disconzi-2019,Bemfica-Disconzi-Rodriguez-Shao-2021} to solve the Cauchy problem
for the Einstein equations coupled to conformal viscous fluids. 
In these works, the Cauchy problem is solved by gluing together solutions obtained locally in the neighborhood of different points. This is the usual approach to the local Cauchy problem \cite{Choquet-Bruhat-Book-2009,Ringstrom-Book-2009}. Thus, it suffices to assume that we are working locally in a coordinate chart $\{ x^a \}_{a=0}^3$, with $t:=x^0$ identified with a time coordinate and (a portion of) $\Sigma$ identified with $\{t=0\}$. 
Initial data for the Cauchy problem consist of 
\begin{equation}\label{eq:Cauchy-data}
\begin{aligned}
    \left. \left( \epsilon, \partial_t \epsilon, u^a, \partial_t u^a, g_{ab}, \partial_t g_{ab} \right) \right|_{t=0}
\end{aligned}
\end{equation}
and the salient point is that this data is required to satisfy 
the constraint equations.

In view of our results we assume instead that we are given at $t=0$ an initial-data set
\begin{equation}\label{eq:cauchy-data-coordfree}
    (\epsilon, \mathcal{D} \epsilon, v^a, \mathcal{A}^\Sigma_a, h_{ab}, K_{ab})
\end{equation}
for which the constraints are satisfied. From this, we want to determine all of the data \eqref{eq:Cauchy-data}.
Following \cite{Bemfica-Disconzi-Noronha-2018,Disconzi-2019,Bemfica-Disconzi-Rodriguez-Shao-2021}, we work in wave coordinates, although other gauges can be used.

We set $\left. g_{ab} \right|_{t=0} = h_{ab}$ and $\left. \partial_t g_{ab} \right|_{t=0} = 2K_{ab}$ for $a,b=1,2,3$. The remaining components of the metric are gauge choices and we choose $\left. g_{00} \right|_{t=0} = -1$, 
$\left. g_{0a} \right|_{t=0} = 0$, and $\left. \partial_t g_{0a} \right|_{t=0}$ such that $\{ x^a \}_{a=0}^3$ are wave coordinates at $\{t=0\}$. 
As a consequence, all of the Christoffel symbols of the metric are known 
at $t=0$ and the normal to $\{t=0\}$ is $N^a = (\partial_t)^a$.
We set $\epsilon|_{t=0}=\epsilon$ and $u^a|_{t=0}=\gamma N^a + v^a$  with
$\gamma = (1+|v|_h^2)$.  Hence we need only determine the normal derivatives $\partial_t \epsilon$ and $\partial_t u^a$.  

We first observe that the spacetime Weyl connection form $\mathcal A_a$ is 
known at $t=0$.  Indeed, the tangential components are directly specified via
$\mathcal A_a^\Sigma$ and Lemma \ref{lem:weyl-K} shows that $\mathcal A_a N^a$
is computable from the initial data \eqref{eq:cauchy-data-coordfree}.  A
brief computation from the definition of the Weyl derivative \eqref{eq:weyl-def}
and the identity $u^a = \gamma N^a + v^a$ shows
\[
\left. \partial_t \epsilon \right|_{t=0} = \frac{1}{\gamma}\left[\mathcal D\epsilon - v^a \mathcal D_a^\Sigma \epsilon \right] - 4 \mathcal A_a N^a
\]
and is therefore computable from the data \eqref{eq:cauchy-data-coordfree}.

Turning to the normal derivative of $u^a$, we observe that
since the Christoffel symbols of $g_{ab}$ are known at $t=0$, 
we can compute $\partial_t u^a$ directly from $N^b \nabla_b u^a$.
We show in equation \eqref{eq:normal-of-u} below that this latter expression 
is computable from the data \eqref{eq:cauchy-data-coordfree}.

Recall $\mathcal A_a = u^b \nabla_b u^a - (1/3) \nabla_b u^b u_a$ from equation \eqref{eq:A-def}. Since $(\nabla_b u^b u_a)u^a = 0$ we find 
$u^a \mathcal A_a = (1/3) \nabla_b u^b$ and consequently
\[
u^b \nabla_b u_a = \mathcal A_a + \mathcal A_b u^b u_a.
\]
Using the identity $u^a = \gamma N^a + v^a$ we conclude
\begin{equation}\label{eq:normal-of-u}
N^b \nabla_b u_a = \frac{1}{\gamma}\left[   
    \mathcal A_a + \mathcal A_b u^b u_a - v^b \nabla_b u_a \right]
\end{equation}
which is computable from the data \eqref{eq:cauchy-data-coordfree} 
so long as $v^b \nabla_b u^a$ is. 
But this is a standard computation using the second fundamental form:
\begin{align*}
v^b \nabla_b u^a 
& = v^b \nabla_b (\gamma N^a + v^a)\\
& = v^b(\nabla_{b}^{(h)} \gamma)N^a + v^b \nabla_b^{(h)}v^a + h^{ac}v^bK_{bc} + K_{bc}v^bv^c N^a.
\end{align*}

\bibliographystyle{amsalpha-abbrv}
\bibliography{ViscousConformal,manual,RefsMarcelo}

\end{document}